\newcommand{\indicator}[1]{\mathbbm{1}_{\left[ {#1} \right] }}
\newcommand{\Real}{\hbox{Re}}
\newtheorem{lemma}{ Lemma}[section]
\newtheorem{proposition}{ Proposition}[section]
\newtheorem{remark}{ Remark}
\newtheorem{theorem}{ Theorem}[section]
\newtheorem{definition}{Definition}[section]
\newtheorem{example}{Example}[section]
\patchcmd{\chapter}{\if@openright\cleardoublepage\else\clearpage\fi}{}{}{}
\newcommand{\HRule}{\rule{\linewidth}{0.5mm}}
\begin{document}
\bibliographystyle{apalike}

\begin{titlepage}
\begin{center}


\textsc{\LARGE Department of Finance and Risk Engineering\\Tandon School of Engineering\\New York University\\[1.5cm]}


\HRule \\[0.4cm]
{ \huge \bfseries Introduction to Stochastic Differential Equations (SDEs) for Finance \\[0.4cm] }

\HRule \\[1.5cm]

\textsc{\Large \emph{Author:}\\
Andrew \textsc{Papanicolaou}}\\ap1345@nyu.edu\\[0.5cm]

\vfill


\end{center}
{\large This work was partially supported by NSF grant DMS-0739195. }

\end{titlepage}

\tableofcontents

\chapter{Financial Introduction}
\label{sec:intro}
In this section we'll discuss some of the basic ideas of option pricing. The main idea is \textit{replication}, whereby a derivative security can be priced because it is shown to have the same cash flows as a portfolio of assets that already have a price-tag. Another word to describe such a notion of pricing is \textit{benchmarking}, where to say that you've `benchmarked the security' might mean that you've found a portfolio of other assets that do not replicate but have have some similarities to the derivative's cash flows.

The idea of finding a portfolio that is close in value to the derivative security is essentially the \textit{law of one price}, which states that ``In an efficient market, all identical goods must have only one price." Indeed, we will assume that our markets are efficient, and in some cases we will assume that arbitrage has zero probability of occurring; these assumptions are routine and are generally not considered to be restrictive. 

The manner in which these notes can be considered `oversimplified' is in the \textit{completeness} of the resulting markets. In practice there are derivatives (e.g. insurance products) which cannot be hedged, and hence the market is incomplete. Both the discrete time-space market and the Black-Scholes market are simple enough for completeness to hold. In practice, reverse-engineering these models from real-life market data will require interpretation.
\section{A Market in Discrete Time and Space}
Consider a very simple market where at time $t=0$ it is known that there are only two possible states for the market at a later time $t=T$. In between times $t=0$ and $t=T$ there is no trading of any kind. This market is described by a probability space $\Omega = \{\omega_1,\omega_2\}$ with probability measure $\mathbb P$ given by
\[p_1 = \mathbb P(\omega_1) = \frac 23\ ,\qquad p_2=\mathbb P(\omega_2) =\frac 13\ .\]
The elementary events $\omega_1$ and $\omega_2$ are the two states of the market. The traded assets in this simple market are a bank account, a stock, and and a call option on the stock with exercise at $T$ and strike $K=2$. The market outcomes are shown in Table \ref{tab:assets}.
\begin{table}[htb]
\center
\caption{Assets in the Discrete Time-Space Market}
\label{tab:assets}
\begin{tabular}{|c|c|c|}
\hline
time &t=0&t=T\\
\hline
bank account&$B_0=1$&$B_T=1$\hbox{ (interest rate $r=0$)}\\
stock&$S_0=2$&$S_T = \Big\{
\begin{array}{cc}
3&\hbox{in }\omega_1\\
1&\hbox{in }\omega_2
\end{array}$\\
call option, $K=2$&$C_0=~?$&$C_T = \Big\{
\begin{array}{cc}
1&\hbox{in }\omega_1\\
0&\hbox{in }\omega_2
\end{array}$\\
\hline
\end{tabular}
\end{table}

The way to determine the price of the call option, $C_0$, is to replicate it with a portfolio of the stock and the bank account. Let $V_t$ denote the value of such a portfolio at time $t$, so that,

\begin{eqnarray*}
V_0&=&\alpha S_0+\beta\\
V_T&=&\alpha S_T+\beta\ ,
\end{eqnarray*}
where $\alpha$ is \# of shares and $\beta$ is \$ in bank. The portfolio $V_t$ will replicate the call option if we solve for $\alpha$ and $\beta$ so that $V_T=C_T$ for both $\omega_1$ and $\omega_2$,
\begin{align*}
&3\alpha+\beta=\alpha S_T(\omega_1)+\beta=C_T(\omega_1) =1\\
&\alpha+\beta=\alpha S_T(\omega_1)+\beta=C_T(\omega_2) = 0\ .
\end{align*}
This system has solution $\alpha = \frac 12$, $\beta =-\frac 12$. Hence, by the law of one price, it must be that,
\[C_0 = V_0 = \frac 12S_0-\frac 12 = \frac 12. \]
If $C_0\neq V_0$, then there would be an arbitrage opportunity in the market. Arbitrage in a financial models means it is possible to purchase a portfolio for which there is a risk-less profit to be earned with positive probability. If such an opportunity is spotted then an investor could borrow an infinite amount of money and buy the arbitrage portfolio, giving him/her an infinite amount of wealth. Arbitrage (in this sense) does not make for a sound financial model, and in real life it is known that true arbitrage opportunities disappear very quickly by the efficiency of the markets. Therefore, almost every financial model assumes absolute efficiency of the market and `no-arbitrage'. There is also the concept of `No Free Lunch', but in discrete time-space we do not have to worry about theses differences. 

The notion of `no-arbitrage' is defined as follows:
\begin{definition}Let $wealth_t$ denote the wealth of an investor at time $t$. We say that the market has arbitrage if,
\begin{align*} 
&\mathbb P(\hbox{wealth}_T>0|\hbox{wealth}_0=0)>0\\
\qquad\hbox{and  }\qquad&\mathbb P(\hbox{wealth}_T<0|\hbox{wealth}_0=0)=0\ ,
\end{align*}
i.e. there is positive probability of a gain and zero probability of a loss.
\end{definition}

In our discrete time-space market, if $C_0<V_0$ then the arbitrage portfolio is one that buys the option, shorts the portfolio, and invest the difference in the bank. The risk-less payoff of this portfolio is shown in Table \ref{tab:arb}.

\begin{table}[h]
\caption{Arbitrage Opportunity if Option Mispriced}
\label{tab:arb}
\center
\begin{tabular}{c|ccc}
&$t=0$&$t=T$\\
\hline
buy option&$-C_0$&$\max\{S_T-2,0\}$\\
sell portfolio&$V_0$&$-\max\{S_T-2,0\}$\\
\hline
net:&$\color{red}{V_0-C_0>0}$&$0$
\end{tabular}
\end{table}

\section{Equivalent Martingale Measure (EMM)}
A common method for pricing an asset is to use a risk-neutral or an equivalent martingale measure (EMM). The EMM is convenient because all asset prices are simply an expectation of the payoff.Two important questions are: what is the EMM? Is there more one?

\begin{definition}
The probability measure $\mathbb Q$ is an EMM of $\mathbb P$ if $S_t$ is a $\mathbb Q$-martingale, that is
\[\mathbb E^QS_T = S_0\]
and $\mathbb Q$ is \textbf{equivalent} to $\mathbb P$,
\[\mathbb P(\omega)>0\Leftrightarrow \mathbb Q(\omega)>0\]
for all elementary events $\omega\in\Omega$ (it's more complicated when $\Omega$ is not made up of elementary events, but it's basically the same idea).
\end{definition}

For our discrete time-space example market, we have
\begin{itemize}
\item Under the original measure 
 \[\mathbb ES_T = p_1S_T(\omega_1)+p_2S_T(\omega_2)=\frac 23 3+\frac 13 1 = \frac 73> 2=S_0\]
 \item Under an EMM $\mathbb Q(\omega_1)=q_1$ and $\mathbb Q(\omega_2)=q_2$,
 \[ \mathbb E^QS_T=q_13+(1-q_1)1 = 2 = S_0\ .\]
Solution is $q_1 = \frac 12$ and $q_2 = 1-q_1 = \frac 12$.
 \end{itemize}

Given the EMM, a replicable option is easily priced:
\[\mathbb E^QC_T = \mathbb E^QV_T=\mathbb E^Q[\alpha S_T+\beta]=\alpha\mathbb E^QS_T+\beta =\alpha S_0+\beta = V_0 = C_0\ ,\]
and so $C_0 = \mathbb E^Q\max(S_T-K,2)$.
\section{Contingent Claims}
A corporation is interested in purchasing a derivative product to provide specific cash-flows for each of the elementary events that (they believe) the market can take. Let $\Omega=\{\omega_1,\omega_2,\dots,\omega_N\}$ be these elementary events that can occur at time $t=T$, and let the proposed derivative security be a function $C_t(\omega)$ such that,
\[C_T(\omega_i) = c_i\qquad\forall i\leq N\ ,\]
where each $c_i$ is the corporations desired cash flow. The derivative product $C$ is a \textit{contingent claim}, because it pays a fixed amount for all events in the market. Below are some general examples of contingent claims:

\begin{example}\textbf{European Call Option.} At time $T$ the holder of the option has the right to buy an asset at a pre-determined strike price $K$. This is a contingent claim that pays, 
\[(S_T(\omega)-K)^+=\max(S_T(\omega)-K,0)\ ,\]
where $S_T$ is some risk asset (e.g. a stock or bond). The payoff on this call option is seen in Figure \ref{fig:callPayoff}.

\begin{figure}[htbp] 
   \centering
   \includegraphics[width=5in]{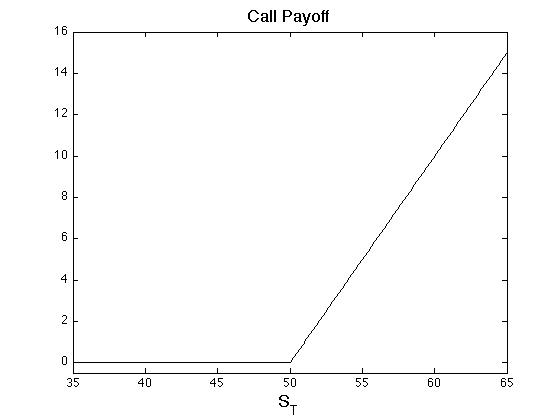} 
   \caption{The payoff on a European call option with strike $K=50$.}
   \label{fig:callPayoff}
\end{figure}
\end{example}
\begin{example}\textbf{European Put Option.} At time $T$ the holder of the option has the right to sell an asset at a pre-determined strike price $K$. This is a contingent claim that pays, 
\[(K-S_T(\omega))^+=\max(K-S_T(\omega),0)\ .\]
Figure \ref{fig:putPayoff} shows the payoff for the put.
\begin{figure}[htbp] 
   \centering
   \includegraphics[width=5in]{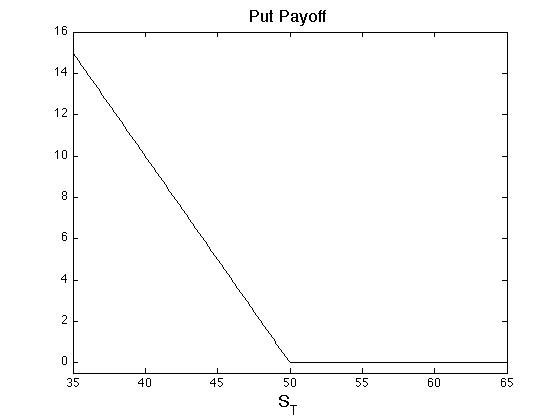} 
   \caption{The payoff on a European put option with strike $K=50$.}
   \label{fig:putPayoff}
\end{figure}
\end{example}

\begin{example}\textbf{American Call/Put Option.} At any time $t\leq T$ the holder of the option has the right to buy/sell an asset at a pre-determined strike price $K$. Exercise of this claim at time $t\leq T$ is contingent on the event $\mathcal A_t=\{\omega_i|\hbox{ it is optimal to exercise at time $t$}\}$.
\end{example}

\begin{example}\textbf{Bermuda Call/Put Option.} At either time $T/2$ or time $T$, the holder of the option has the right to buy/sell an asset at a pre-determined strike price $K$. Similar to an American option except only one early strike time.
\end{example}

\begin{example}\textbf{Asian Call Option.} At time $T$ the holder of the option has the right to buy the \textit{averaged} asset at a pre-determined strike price $K$,
\[\left(\frac 1M\sum_{\ell=1}^MS_{t_\ell}(\omega)-K\right)^+\ ,\]
where $t_\ell = \ell\frac TM$ for integer $M>0$.
\end{example}

\begin{example}\textbf{Exchange Option.} At time $T$ the holder of the option has the right to buy one asset at the price of another,
\[(S_T^1(\omega)-S_T^2(\omega))^+ ,\]
where $S_t^1$ is the price of an asset and $S_t^2$ is the price of another.
\end{example}

\section{Option Pricing Terminology}
The following is a list of terms commonly used in option pricing:
\begin{itemize}
\item \textbf{Long position,} a portfolio is `long asset X' if it has net positive holdings of contracts in asset $X$.
\item \textbf{Short position,} a portfolio is `short asset X' if it has net negative holdings of contracts in asset $X$ (i.e. has short sales of contracts).
\item \textbf{Hedge,} or `hedging portfolio' is a portfolio that has minimal or possibly a floor on the losses it might obtain.
\item \textbf{In-the-money (ITM),} a derivative contract that would have positive payoff if settlement based on today's market prices (e.g. a call option with very low strike).
\item \textbf{Out-of-the-money (OTM),} a derivative contract that would be worthless if settlement based on today's market prices (e.g. a call option with very high strike).
\item \textbf{At-the-money (ATM),} a derivative contract exactly at it's breaking point between ITM and OTM.
\item \textbf{Far-from-the-money,} a derivative contract with very little chance of finishing ITM.
\item \textbf{Underlying,} the stock, bond, ETF, exchange rate, etc. on which a derivative contract is written.
\item \textbf{Strike,} The price upon which a call or put option is settled.
\item \textbf{Maturity,} the latest time at which a derivative contract can be settled.
\item \textbf{Exercise,} the event that the long party decides to use a derivative's embedded option (e.g. using a call option to buy a share of stock at lower than market value).
\end{itemize}
\section{Completeness \& Fundamental Theorems}
The nice thing about the discrete time-space market is that any contingent claim can be replicated. In general, for $\Omega=\{\omega_1,\omega_2,\dots,\omega_N\}$, and with $N-1$ risky-assets $(S^1,S^2,\dots,S^{N-1})$ and a risk-free bank account (with $r=0$), replicating portfolio weights the contingent claim $C$ can be found by solving, 
\[\begin{pmatrix}
1&S_T^1(\omega_1)&\dots&\dots&S_T^{N-1}(\omega_1)\\
1&S_T^1(\omega_2)&\dots&\dots&S_T^{N-1}(\omega_2)\\
\vdots&\vdots&\ddots&&\vdots\\
\vdots&\vdots&&\ddots&\vdots\\
1&S_T^1(\omega_N)&\dots&\dots&S_T^{N-1}(\omega_N)
\end{pmatrix}
\begin{pmatrix}
\beta\\
\alpha_1\\
\vdots\\
\vdots\\
\alpha_{N-1}\\
\end{pmatrix}
=\begin{pmatrix}
c_1\\
c_2\\
\vdots\\
\vdots\\
c_N
\end{pmatrix}\ ,
\]
which has a solution $(\beta,\alpha_1,\dots,\alpha_{N-1})$ provided that none of these assets are redundant (e.g., there does not exist a portfolio consisting of the first $N-2$ assets and the banks account that replicated the $S_T^{N-1}$). This $N$-dimensional extension of the discrete time-space market serves to further exemplify the importance of replication in asset pricing, and should help to make clear the intentions of the following definition and theorems:

\begin{definition} A contingent claim is \textbf{reachable} if there is a hedging portfolio $V$ such the $V_T(\omega) = C(\omega)$ for all $\omega$, in which case we say that $C$ can be \textbf{replicated}. If all contingent claims can be replicated, then we say the market is \textbf{complete}. 
\end{definition}

\begin{theorem}
The 1st fundamental theorem of asset pricing states the market is arbitrage-free if and only if there exists an EMM.
\end{theorem}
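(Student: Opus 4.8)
The plan is to prove the two implications separately; the reverse direction (an EMM exists $\Rightarrow$ no arbitrage) is a one-line computation, while the forward direction (no arbitrage $\Rightarrow$ an EMM exists) is where the real work lies and will rest on a separating-hyperplane argument in $\mathbb R^N$.

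First I would do the easy direction. Suppose an EMM $\mathbb Q$ exists and let $V$ be any self-financing portfolio with $V_0=0$, so that $\beta+\sum_j\alpha_j S_0^j=0$ and $V_T(\omega)=\beta+\sum_j\alpha_j S_T^j(\omega)$ (recall $r=0$). Linearity of expectation together with the martingale property $\mathbb E^{\mathbb Q}S_T^j=S_0^j$ gives $\mathbb E^{\mathbb Q}V_T=\beta+\sum_j\alpha_j S_0^j=V_0=0$. If in addition $V_T(\omega)\geq 0$ for every $\omega$, then since $\mathbb Q(\omega)>0$ for every elementary event, a nonnegative random variable with zero $\mathbb Q$-expectation must vanish identically; hence $\mathbb P(V_T>0)=0$ and there is no arbitrage.

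For the converse, assume the market is arbitrage-free and has risky assets $S^1,\dots,S^d$. I would encode a zero-cost portfolio by its stock holdings $\alpha\in\mathbb R^d$ (the bank holding is then forced, $\beta=-\sum_j\alpha_j S_0^j$) and let $G\subseteq\mathbb R^N$ be the linear subspace of all resulting terminal payoff vectors $\big(\sum_j\alpha_j(S_T^j(\omega_i)-S_0^j)\big)_{i=1}^N$. No-arbitrage says precisely that $G$ contains no vector with all entries nonnegative and at least one entry strictly positive; in particular $G$ is disjoint from the probability simplex $\mathcal S=\{x\in\mathbb R^N: x_i\geq 0,\ \sum_i x_i=1\}$. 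Since $\mathcal S$ is compact and convex and $G$ is a closed convex set disjoint from it, the separating hyperplane theorem yields $y\in\mathbb R^N$ and a scalar $c$ with $\langle y,g\rangle\leq c<\langle y,x\rangle$ for all $g\in G$ and $x\in\mathcal S$. Because $G$ is a subspace, the functional $\langle y,\cdot\rangle$ is unbounded on $G$ unless it is identically zero there, so necessarily $y\perp G$, forcing $c\geq 0$ and hence $\langle y,x\rangle>0$ for every $x\in\mathcal S$; testing against the standard basis vectors $e_i\in\mathcal S$ gives $y_i>0$ for each $i$. I would then set $q_i=y_i/\sum_k y_k$ and check that $\mathbb Q(\omega_i)=q_i$ is the desired EMM: normalization makes it a probability measure, $q_i>0$ for all $i$ gives equivalence with $\mathbb P$ (whose weights are also strictly positive), and $y\perp G$ unwinds to $\sum_i q_i(S_T^j(\omega_i)-S_0^j)=0$, i.e. $\mathbb E^{\mathbb Q}S_T^j=S_0^j$ for every traded asset.

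The main obstacle is the delicate positivity point in the forward direction: separating $G$ from the full nonnegative orthant would only yield $y_i\geq 0$, which is too weak to guarantee $\mathbb Q\sim\mathbb P$. Separating instead from the \emph{compact} simplex $\mathcal S$ is exactly what upgrades the conclusion to the strict inequalities $y_i>0$, and getting this step right is the crux of the argument. (A reader who prefers a packaged statement may instead invoke Stiemke's lemma, which is precisely this alternative-type dichotomy, but the hyperplane argument above is self-contained.)
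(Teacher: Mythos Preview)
Your argument is correct for the single-period, finite-state, zero-interest-rate market that Chapter~1 works in: the martingale computation for the easy direction is clean, and the separating-hyperplane argument for the hard direction is the standard and right one, with the crucial point (separate from the compact simplex, not the whole orthant, so as to obtain strict positivity of the separating vector) handled carefully.

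That said, there is nothing to compare against: the paper does not give its own proof of this theorem. It merely states the two fundamental theorems and immediately defers to the literature, citing Harrison--Pliska and Schachermayer. So your write-up goes well beyond what the paper does. One small contextual remark: the general continuous-time versions referenced there require substantially more machinery (local martingales, the ``no free lunch with vanishing risk'' refinement of no-arbitrage, etc.), so your proof should be understood as establishing the theorem in the discrete time--space setting of Section~1.1 rather than in full generality.
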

\begin{theorem}
The 2nd fundamental theorem of asset pricing states the market is arbitrage-free and complete if and only if there exists a unique EMM.
\end{theorem}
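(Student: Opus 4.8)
The plan is to prove the two implications separately, invoking the 1st fundamental theorem at the start of each so that arbitrage-freeness is handed to us for free; the real content is the equivalence between \emph{uniqueness of the EMM} and \emph{completeness}. For the forward direction, assume the market is complete and arbitrage-free. By the 1st fundamental theorem at least one EMM $\mathbb{Q}$ exists, so only uniqueness is at stake. Suppose $\mathbb{Q}_1,\mathbb{Q}_2$ are both EMMs and fix an elementary event $\omega_i$. By completeness the contingent claim $C$ with $C(\omega_i)=1$ and $C(\omega_j)=0$ for $j\neq i$ is replicated by some portfolio with $V_T=\alpha\cdot S_T+\beta=C$, and exactly as in the pricing computation for a replicable option given earlier, $\mathbb{E}^{\mathbb{Q}_k}C=\mathbb{E}^{\mathbb{Q}_k}V_T=\alpha\cdot S_0+\beta=V_0$ for $k=1,2$, since each risky asset is a $\mathbb{Q}_k$-martingale. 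Hence $\mathbb{Q}_1(\omega_i)=V_0=\mathbb{Q}_2(\omega_i)$; as $i$ was arbitrary and $\Omega$ is finite, $\mathbb{Q}_1=\mathbb{Q}_2$.

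For the reverse direction, assume a unique EMM $\mathbb{Q}$ exists. No-arbitrage is immediate from the 1st fundamental theorem, so I must establish completeness, and I argue by contraposition: from incompleteness I construct a second EMM. Identify each random variable $X$ on $\Omega$ with the vector $(X(\omega_1),\dots,X(\omega_N))\in\mathbb{R}^N$, and let $\mathcal{A}=\{\alpha_1S_T^1+\cdots+\alpha_{N-1}S_T^{N-1}+\beta:\alpha\in\mathbb{R}^{N-1},\,\beta\in\mathbb{R}\}$ be the subspace of attainable payoffs; incompleteness means $\mathcal{A}\subsetneq\mathbb{R}^N$. Endow $\mathbb{R}^N$ with the inner product $\langle X,Y\rangle_{\mathbb{Q}}=\mathbb{E}^{\mathbb{Q}}[XY]=\sum_i\mathbb{Q}(\omega_i)X(\omega_i)Y(\omega_i)$, which is positive-definite because $\mathbb{Q}(\omega_i)>0$ for every $i$. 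Then some $Z\neq 0$ is $\langle\cdot,\cdot\rangle_{\mathbb{Q}}$-orthogonal to all of $\mathcal{A}$; testing against $X\equiv 1$ (attainable with $\alpha=0$) gives $\mathbb{E}^{\mathbb{Q}}Z=0$, and testing against $X=S_T^m$ gives $\mathbb{E}^{\mathbb{Q}}[ZS_T^m]=0$ for each $m$. Now set $\mathbb{Q}'(\omega_i)=\mathbb{Q}(\omega_i)\bigl(1+\varepsilon Z(\omega_i)\bigr)$. Since $\Omega$ is finite, $Z$ is bounded, so for $\varepsilon\neq 0$ small every $\mathbb{Q}'(\omega_i)>0$; the weights sum to $1+\varepsilon\,\mathbb{E}^{\mathbb{Q}}Z=1$, so $\mathbb{Q}'$ is a probability measure, equivalent to $\mathbb{Q}$ and hence to $\mathbb{P}$. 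Finally $\mathbb{E}^{\mathbb{Q}'}S_T^m=\mathbb{E}^{\mathbb{Q}}S_T^m+\varepsilon\,\mathbb{E}^{\mathbb{Q}}[ZS_T^m]=S_0^m$, so $\mathbb{Q}'$ is an EMM distinct from $\mathbb{Q}$, contradicting uniqueness. Therefore the market is complete.

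The forward implication is essentially bookkeeping once the 1st fundamental theorem is in hand; the crux is the perturbation in the reverse direction, where two things must line up. First, the inner product has to be the $\mathbb{Q}$-weighted one, so that ``$Z\perp\mathcal{A}$'' unpacks into exactly the two identities needed: $\mathbb{E}^{\mathbb{Q}}Z=0$ keeps $\mathbb{Q}'$ a probability measure, and $\mathbb{E}^{\mathbb{Q}}[ZS_T^m]=0$ keeps $\mathbb{Q}'$ a martingale measure. Second, finiteness of $\Omega$ must be used to keep the perturbed weights strictly positive, so that equivalence with $\mathbb{P}$ is not destroyed. Everything else --- that attainable payoffs form a linear subspace, that a proper subspace admits a nonzero orthogonal vector --- is routine linear algebra.
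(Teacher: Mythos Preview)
The paper does not actually prove this theorem: it merely states it and refers the reader to \cite{harrisonPliska} and \cite{schachermayer1992}. So there is no ``paper's own proof'' to compare against. That said, your argument is a correct and standard proof in the finite discrete time--space setting of Chapter~1 (finite $\Omega$, one period, $r=0$), which is the only setting in which the paper has made the notions of EMM and completeness precise. The forward direction is exactly the indicator-claim argument one expects, and the reverse direction via a $\mathbb Q$-orthogonal perturbation $Z$ is the classical construction; your remarks about why the $\mathbb Q$-weighted inner product and the finiteness of $\Omega$ are the right tools are accurate. Nothing is missing for this setting; just be aware that the cited references treat far more general (multi-period, continuous-time) markets, where both directions require substantially more machinery than the linear-algebra argument you give here.
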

One of the early works that proves these theorems is \cite{harrisonPliska}. Another good article on the subject is \cite{schachermayer1992}. In summary, these fundamental theorems mean that derivative prices are the expectation under an EMM. 
\chapter{Brownian Motion \& Stochastic Calculus}
\label{chapt:brownianMotion}
Continuous time financial models will often use Brownian motion to model the trajectory of asset prices. One can typically open the finance section of a newspaper and see a time series plot of an asset's price history, and it might be possible that the daily movements of the asset resemble a random walk or a path taken by a Brownian motion. Such a connection between asset prices and Brownian motion was central to the formulas of \cite{blackScholes1973} and \cite{merton1973}, and have since led to a variety of models for pricing and hedging. The study of Brownian motion can be intense, but the main ideas are a simple definition and the application of It\^o's lemma. For further reading, see \cite{bjork,oksendal}.

\section{Definition and Properties of Brownian Motion}
On the time interval $[0,T]$, Brownian motion is a continuous stochastic process $(W_t)_{t\leq T}$ such that
\begin{enumerate}
\item $W_0 = 0$,
\item Independent Increments: for $0\leq s'< t'\leq s<t\leq T$, $W_t-W_s$ is independent of $W_{t'}-W_{s'}$,
\item Conditionally Gaussian: $W_t-W_s\sim\mathcal N(0,t-s)$, i.e. is normal with mean zero and variance $t-s$.
\end{enumerate}
There is a vast study of Brownian motion. We will instead use Brownian motion rather simply; the only other fact that is somewhat important is that Brownian motion is nowhere differentiable, that is
\[\mathbb P\left(\frac{d}{dt}W_t\hbox{ is undefined for almost-everywhere }t\in[0,T]\right) = 1\ ,\]
although sometimes people write $\dot W_t$ to denote a white noise process. It should also be pointed out that $W_t$ is a martingale,

\[\mathbb E[W_t|(W_\tau)_{\tau\leq s}] = W_s\qquad\forall s\leq t\ .\]

\begin{figure}[h] 
   \centering
   \includegraphics[width=5in]{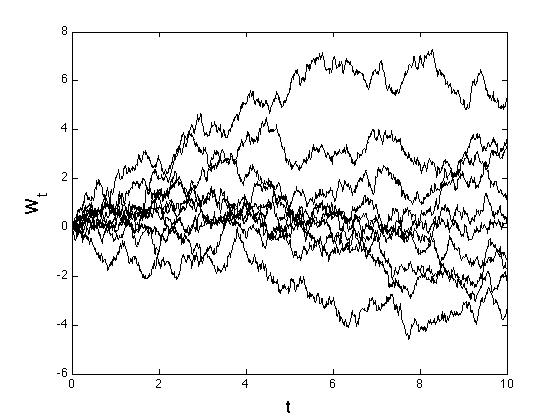} 
   \caption{A sample of 10 independent Brownian motions.}
   \label{fig:brownianTraj}
\end{figure}
\noindent\textbf{Simulation.} There are any number of ways to simulate Brownian motion, but to understand why Brownian motion can be treated like a `random walk', consider the process,
\[W_{t_n}^N = W_{t_{n-1}}^N+\Bigg\{\begin{array}{cc}
\sqrt{T/N}&\hbox{with probability }1/2\\
- \sqrt{T/N}&\hbox{with probability }1/2
\end{array}\]
with $W_{t_0} = 0$ and $t_n = n\frac TN$ for $n=0,1,2\dots,N$. Obviously $W_{t_n}^N$ has independent increments, and conditionally has the same mean and variance as Brownian motion. However it's not conditional Gaussian. However, as $N\rightarrow \infty$ the probability law of $W_{t_n}^N$ converges to the probability law of Brownian motion, so this simple random walk is actually a good way to simulate Brownian motion if you take $N$ large. However, one usually has a random number generator that can produce $W_{t_n}^N$ that also has conditionally Gaussian increments, and so it probably better to simulate Brownian motion this way. A sample of 10 independent Brownian Motions simulations are shown in Figure \ref{fig:brownianTraj}.

\section{The It\^o Integral}
Introductory calculus courses teach differentiation first and integration second, which makes sense because differentiation is generally a methodical procedure (e.g., you use the chain rule) whereas finding an anti-derivative requires all kinds of change-of-variables, trig-substitutions, and guess-work. The It\^o calculus is derived and taught in the reverse order: first we need to understand the definition of an It\^o (stochastic) integral, and then we can understand what it means for a stochastic process to have a differential. 

The construction of the It\^o integral begins with a backward Riemann sum. For some function $f:[0,T]\rightarrow \mathbb R$ (possibly random), non-anticipative of $W$, the It\^o integral is defined as
\begin{equation} 
\label{eq:itoInt}
\int_0^Tf(t)dW_t =\lim_{N\rightarrow 0}\sum_{n=0}^{N-1} f(t_n)(W_{t_{n+1}}-W_{t_n})\ ,
\end{equation}
where $t_n = n\frac TN$, with the limit holding in the strong sense. If $\mathbb E\int_0^Tf^2(t)dt<\infty$, then through an application of Fubini's theorem, it can be shown that equation \eqref{eq:itoInt} is a \textbf{martingale,}
\[\mathbb E\left[\int_0^Tf(t)dW_t \Big|(W_\tau)_{\tau\leq s}\right] = \int_0^sf(t)dW_t\qquad\forall s\leq T\ .\]
Another important property of the stochastic integral is the It\^o Isometry,
\begin{proposition} \textbf{(It\^o Isometry).} For any functions $f,g$ (possibly random), non-anticipative of $W$, with $\mathbb E\int_0^Tf^2(t)dt<\infty$ and $\mathbb E\int_0^Tg^2(t)dt<\infty$, then
\[\mathbb E\left(\int_0^Tf(t)dW_t\right)\left(\int_0^Tg(t)dW_t\right)  =\mathbb E\int_0^Tf(t)g(t)dt\ .\]
\end{proposition}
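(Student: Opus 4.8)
\emph{Proof proposal.} The plan is to proceed in three stages: reduce the bilinear identity to a quadratic one by polarization, verify the quadratic identity for piecewise-constant integrands by a direct computation resting on the three defining properties of $W$, and then pass to general $L^2$ integrands by approximation. For the first stage, observe that both sides of the asserted identity are symmetric and bilinear in $(f,g)$, and that $f\mapsto\int_0^T f(t)\,dW_t$ is linear. Hence, by the polarization identity $ab=\tfrac14\big[(a+b)^2-(a-b)^2\big]$, it suffices to treat the diagonal case $f=g$, i.e.
\[\mathbb E\left(\int_0^Tf(t)\,dW_t\right)^2=\mathbb E\int_0^Tf^2(t)\,dt.\]

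For the second stage, take $f$ of the form $f(t)=\sum_{n=0}^{N-1}f(t_n)\mathbbm{1}_{[t_n,t_{n+1})}(t)$ on a partition $0=t_0<\dots<t_N=T$, with each coefficient $f(t_n)$ non-anticipative of $W$ (measurable with respect to $(W_\tau)_{\tau\le t_n}$) and square-integrable, so that $\int_0^Tf\,dW_t=\sum_n f(t_n)\,\Delta W_n$ with $\Delta W_n:=W_{t_{n+1}}-W_{t_n}$. Expanding the square produces diagonal terms $f(t_n)^2(\Delta W_n)^2$ and cross terms $f(t_m)f(t_n)\,\Delta W_m\,\Delta W_n$ with $m<n$. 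Conditioning a cross term on $(W_\tau)_{\tau\le t_n}$, every factor except $\Delta W_n$ is measurable, and $\mathbb E[\Delta W_n\mid(W_\tau)_{\tau\le t_n}]=0$ by independence of increments together with the zero-mean property; so the cross terms vanish in expectation. Conditioning a diagonal term the same way and using $\mathbb E[(\Delta W_n)^2\mid(W_\tau)_{\tau\le t_n}]=t_{n+1}-t_n$ gives $\mathbb E(\int_0^Tf\,dW_t)^2=\sum_n\mathbb E[f(t_n)^2](t_{n+1}-t_n)=\mathbb E\int_0^Tf^2(t)\,dt$, which is the claim for simple $f$.

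For the third stage, let $f$ be non-anticipative of $W$ with $\mathbb E\int_0^Tf^2\,dt<\infty$, and choose simple non-anticipative integrands $f_k$ with $\mathbb E\int_0^T(f_k-f)^2\,dt\to0$. Applying stage two to $f_k-f_j$ shows that $\int_0^Tf_k\,dW_t$ is Cauchy in $L^2(\mathbb P)$, and its limit is exactly the strong limit defining $\int_0^Tf\,dW_t$ in \eqref{eq:itoInt}. Passing to the limit on both sides of the simple-function identity — using $L^2(\mathbb P)$ convergence of the stochastic integrals on the left and $L^1([0,T]\times\Omega)$ convergence of $f_k^2$ to $f^2$ on the right — yields the identity for $f=g$, and polarization then delivers the stated bilinear form. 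Along the way one also gets $\mathbb E\int_0^T f g\,dt$ on the right by writing $fg=\tfrac14[(f+g)^2-(f-g)^2]$.

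The main obstacle is the approximation claim in stage three: that simple non-anticipative integrands are dense, in the norm $f\mapsto(\mathbb E\int_0^Tf^2\,dt)^{1/2}$, among all non-anticipative processes with $\mathbb E\int_0^Tf^2\,dt<\infty$. This is the one genuinely technical point; it is handled by successively reducing to bounded integrands, then to integrands that are continuous in $t$ (via mollification in the time variable), and finally to their piecewise-constant samplings on fine partitions, checking at each step that non-anticipativity is preserved. Everything else is routine bookkeeping with conditional expectations built on the independence and Gaussianity of the increments of $W$.
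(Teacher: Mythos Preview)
Your proof is correct and follows the standard textbook construction (polarization, explicit computation on simple integrands, then $L^2$-density); nothing is missing, and the acknowledgement that the density step is the only genuinely technical point is apt. Note, however, that the paper does not actually supply a proof of this proposition: it is stated and then the discussion moves on to consequences, so there is no argument in the paper to compare yours against.
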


Some facts about the It\^o integral:
\begin{itemize}
\item One can look at Equation \eqref{eq:itoInt} and think about the stochastic integral as a sum of independent normal random variables with mean zero and variance $T/N$,
\[\sum_{n=0}^{N-1} f(t_n)(W_{t_{n+1}}-W_{t_n})\sim \mathcal N\left( 0 ~,\frac TN\sum f^2(t_n)\right)\ .\]
Therefore, one might suspect that $\int_0^Tf(t)dW_t$ is normal distributed. 

\item In fact, the It\^o integral is normally distributed when $f$ is a non-stochastic function, and its variance is given by te It\^o isometry,
\[\int_0^Tf(t)dW_t\sim \mathcal N\left(0~,\int_0^Tf^2(t)dt\right)\ .\]

\item The It\^o integral is also defined for functions of another random variable. For instance, $f:\mathbb R\rightarrow \mathbb R$ and another random variable $X_t$, the It\^o integral is, 
\[\int_0^Tf(X_t)dW_t =\lim_{N\rightarrow\infty}\sum_{n=0}^{N-1}f(X_{t_n})(W_{t_{n+1}}-W_{t_n})\ .\]
The It\^o isometry for this integral is,
\[\mathbb E\left(\int_0^Tf(X_t)dW_t\right)^2 = \int_0^T\mathbb Ef^2(X_t)dt\ ,\]
provided that $X$ is non-anticipative of $W$.
\end{itemize}
\section{Stochastic Differential Equations \& It\^o's Lemma}
With the stochastic integral defined, we can now start talking about differentials. In applications the stochastic differential is how we think about random process that evolve over time. For instance, the return on a portfolio or the evolution of a bond yield. The idea is not that various physical phenomena are Brownian motion, but that they are \textit{driven} by a Brownian motion.

Instead of a differential equation, the integrands in It\^o integrals satisfy stochastic differential equations (SDEs). For instance,

\begin{align*}
&dS_t = \mu S_tdt+\sigma S_tdW_t\qquad\qquad\hbox{geometric Brownian motion,}\\
&dY_t = \kappa(\theta-Y_t)dt+\gamma dW_t\qquad\hbox{an Ornstein-Uhlenbeck process,}\\
&dX_t=a(X_t)dt+b(X_t)dW_t\qquad\hbox{a general SDE.}
\end{align*}
Essentially, the formulation of an SDE tells us the It\^o integral representation. For instance,
\[dX_t=a(X_t)dt+b(X_t)dW_t\qquad\Leftrightarrow\qquad X_t = X_0+\int_0^ta(X_s)ds+\int_0^tb(X_s)dW_s\ .\]
Hence, any SDE that has no $dt$-terms is a martingale. For instance, if $dS_t = rS_tdt+\sigma S_tdW_t$, then $F_t=e^{-rt}S_t$ satisfies the SDE $dF_t = \sigma F_tdW_t$ and is therefore a martingale.

On any given day, mathematicians, physicists, and practitioners may or may not recall the conditions on functions $a$ and $b$ that provide a sound mathematical framework, but the safest thing to do is to work with coefficients that are known to provide existence and uniqueness of solutions to the SDE (see page 68 of \cite{oksendal}).
\begin{theorem}
\label{thm:sdeEU}
\textbf{(Conditions for Existence and Uniqueness of Solutions to SDEs).} For $0<T<\infty$, let $t\in[0,T]$ and consider the SDE
\[dX_t = a(t,X_t)dt+b(t,X_t)dW_t\ ,\]
with initial condition $X_ 0 =x$ (x constant). Sufficient conditions for existence and uniqueness of square-integrable solutions to this SDE (i.e. solutions such that $\mathbb E\int_0^T|X_t|^2dt<\infty$) are \textbf{linear growth}
\[|a(t,x)|+|b(t,x)|\leq C(1+|x|)\qquad\forall x\in \mathbb R\hbox{ and }\forall t\in [0,T]\]
for some finite constant $C>0$, and \textbf{Lipschitz continuity}
\[|a(t,x)-a(t,y)|+|b(t,x)-b(t,y)|\leq D|x-y|\qquad\forall x,y\in\mathbb R\hbox{ and }\forall t\in [0,T]\]
where $0<D<\infty$ is the Lipschitz constant.
\end{theorem}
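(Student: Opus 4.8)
The plan is to prove uniqueness and existence separately, with Gronwall's inequality as the analytic engine in both and the It\^o isometry doing the work of converting stochastic-integral terms into ordinary time integrals. For uniqueness, suppose $X$ and $\tilde X$ are two square-integrable solutions with $X_0=\tilde X_0=x$, and set $D_t=X_t-\tilde X_t$, so that
\[
D_t=\int_0^t\big(a(s,X_s)-a(s,\tilde X_s)\big)\,ds+\int_0^t\big(b(s,X_s)-b(s,\tilde X_s)\big)\,dW_s\ .
\]
Applying $(p+q)^2\le 2p^2+2q^2$, Cauchy--Schwarz to the drift integral, the It\^o isometry to the diffusion integral, and then the Lipschitz bound, one obtains a finite constant $K=K(D,T)$ with $\mathbb E|D_t|^2\le K\int_0^t\mathbb E|D_s|^2\,ds$ for all $t\in[0,T]$. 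Since $s\mapsto\mathbb E|D_s|^2$ is finite and integrable, Gronwall's inequality forces $\mathbb E|D_t|^2=0$ for every $t$, hence $X_t=\tilde X_t$ almost surely; passing to continuous modifications upgrades this to equality of paths on a single almost-sure event.

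For existence I would use Picard iteration: set $X^{(0)}_t\equiv x$ and
\[
X^{(n+1)}_t=x+\int_0^t a(s,X^{(n)}_s)\,ds+\int_0^t b(s,X^{(n)}_s)\,dW_s\ .
\]
An induction on $n$, using the linear-growth bound to control the two integrals, shows that each $X^{(n)}$ is non-anticipative of $W$ and satisfies $\mathbb E\int_0^T|X^{(n)}_t|^2\,dt<\infty$, so that the next iterate is well defined (this is where linear growth, as opposed to Lipschitz continuity, is used). The same four tools as in the uniqueness step then give a finite constant $L$ with
\[
\mathbb E|X^{(n+1)}_t-X^{(n)}_t|^2\le L\int_0^t\mathbb E|X^{(n)}_s-X^{(n-1)}_s|^2\,ds\ ,
\]
and iterating this inequality starting from the base estimate $\mathbb E|X^{(1)}_t-X^{(0)}_t|^2\le A$ yields $\mathbb E|X^{(n+1)}_t-X^{(n)}_t|^2\le A(Lt)^n/n!$.

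Because $\sum_n\big(A(LT)^n/n!\big)^{1/2}<\infty$, the telescoping partial sums $X^{(n)}_t=x+\sum_{k=0}^{n-1}\big(X^{(k+1)}_t-X^{(k)}_t\big)$ form a Cauchy sequence in $L^2(\Omega)$ uniformly in $t\in[0,T]$ — one upgrades the pointwise-in-$t$ estimate to a uniform-in-$t$ one by applying Doob's maximal inequality to the martingale part — and therefore converge to a limit process $X$, which inherits non-anticipativeness and square-integrability and admits a continuous modification. Passing to the limit inside both integrals in the Picard recursion then shows that $X$ satisfies $X_t=x+\int_0^t a(s,X_s)\,ds+\int_0^t b(s,X_s)\,dW_s$, i.e. solves the SDE.

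The step I expect to be the main obstacle is the passage to the limit in the stochastic-integral term: one must show $\int_0^t b(s,X^{(n)}_s)\,dW_s\to\int_0^t b(s,X_s)\,dW_s$, which rests on the It\^o isometry combined with the Lipschitz estimate $\mathbb E\int_0^T|b(s,X^{(n)}_s)-b(s,X_s)|^2\,ds\le D^2\,\mathbb E\int_0^T|X^{(n)}_s-X_s|^2\,ds\to 0$, and on checking that adaptedness is preserved in the limit. A secondary technical point is producing the continuous modification of $X$ needed for pathwise statements, which follows from Kolmogorov-type continuity estimates extracted from the same moment bounds.
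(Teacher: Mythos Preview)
The paper does not actually prove this theorem; it merely states it and refers the reader to page~68 of \O ksendal. Your outline is the standard Picard-iteration-plus-Gronwall argument found there, so your approach is correct and coincides with the reference the paper cites.
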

\noindent From the statement of Theorem \ref{thm:sdeEU} it should be clear that these are not necessary conditions. For instance, the widely used \textbf{square-root process}
\[dX_t = \kappa(\bar X-X_t)dt+\gamma\sqrt{X_t}dW_t\]
does not satisfy linear growth or Lipschitz continuity for $x$ near zero, but there does exist a unique solution if $\gamma^2\leq 2\bar X\kappa$. In general, existence of solutions for SDEs not covered by Theorem \ref{thm:sdeEU} needs to be evaluated on a case-by-case basis. The rule of thumb is to stay within the bounds of the theorem, and only work with SDEs outside if you are certain that the solution exists (and is unique). 
\begin{example}[Tanaka Equation]
The canonical example to demonstrate non-uniqueness for non-Lipschitz coefficients is the Tanaka equation,
\[dX_t = sgn(X_t)dW_t\ ,\]
with $X_0=0$, where $sgn(x) $ is the sign function; $sgn(x) = 1$ if $x>0$, $sgn(x) = -1$ if $x<0$ and $sgn(x) = 0$ if $x=0$. Consider another Brownian motion $\hat W_t$ and define
\[\widetilde W_t = \int_0^tsgn(\hat W_s)d\hat W_s\ ,\]
where it can be checked that $\widetilde W_t$ is also a Brownian motion. We can also write
\[d\hat W_t = sgn(\hat W_t)d\widetilde W_t\ ,\]
which shows that $X_t=\hat W_t$ is a solution to the Tanaka equation. However, this is referred to as a \textbf{weak solution}, meaning that the driving Brownian motion was recovered after the solution $X$ was given; a \textbf{strong solution} is a solution obtained when first the Brownian motion is given. Notice this weak solution is non-unique: take $Y_t=-X_t$ and look at the differential,
\[dY_t = -dX_t = -sgn(X_t)d\widetilde W_t=sgn(Y_t)d\widetilde W_t\ .\]
Notice that $X_t\equiv 0$ is also a solution.
\end{example}

Given a stochastic differential equation, It\^o's lemma tells us the differential of any function on that process. It\^o's lemma can be thought of as the stochastic analogue to differentiation, and is a fundamental tool in stochastic differential equations:

\begin{lemma}\textbf{(It\^o's Lemma).} Consider the process $X_t$ with SDE $dX_t = a(X_t)dt+b(X_t)dW_t$. For a function $f(t,x)$ with at least one derivative in $t$ and at least two derivatives in $x$, we have
\begin{equation}
\label{eq:itoLemma}
df(t,X_t) = \left(\frac{\partial}{\partial t}+a(X_t)\frac{\partial}{\partial x}+\frac{b^2(X_t)}{2}\frac{\partial^2}{\partial x^2}\right)f(t,X_t)dt+b(X_t)\frac{\partial}{\partial x}f(t,X_t)dW_t\ .
\end{equation}
\end{lemma}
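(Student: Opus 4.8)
The identity \eqref{eq:itoLemma} is shorthand for the integral statement
\[
f(t,X_t) - f(0,X_0) = \int_0^t\!\left(f_s + a(X_s)f_x + \tfrac12 b^2(X_s)f_{xx}\right)ds + \int_0^t b(X_s)f_x\,dW_s,
\]
with all partials understood to be evaluated at $(s,X_s)$, and this is what I would prove. First I would fix a partition $0 = t_0 < t_1 < \cdots < t_N = t$ of mesh $\delta$, write the left side as the telescoping sum $\sum_{n=0}^{N-1}\bigl(f(t_{n+1},X_{t_{n+1}}) - f(t_n,X_{t_n})\bigr)$, and apply Taylor's theorem to each increment, expanding to first order in the time variable and to second order in the space variable,
\[
f(t_{n+1},X_{t_{n+1}}) - f(t_n,X_{t_n}) = f_t\,\Delta t_n + f_x\,\Delta X_n + \tfrac12 f_{xx}\,(\Delta X_n)^2 + R_n,
\]
with all partials evaluated at $(t_n,X_{t_n})$, $\Delta t_n = t_{n+1}-t_n$, $\Delta X_n = X_{t_{n+1}} - X_{t_n}$, and $R_n$ the Taylor remainder.

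Next I would substitute the increment $\Delta X_n = a(X_{t_n})\Delta t_n + b(X_{t_n})\Delta W_n + (\text{error})$ coming from the SDE and expand $(\Delta X_n)^2 = b^2(X_{t_n})(\Delta W_n)^2 + 2a b\,\Delta t_n\Delta W_n + a^2(\Delta t_n)^2 + \cdots$. Letting $\delta\to 0$, I would identify the limit of each group of terms: $\sum f_t\Delta t_n$ and $\sum f_x a(X_{t_n})\Delta t_n$ converge to ordinary Riemann integrals by continuity; $\sum f_x b(X_{t_n})\Delta W_n$ converges to $\int_0^t b f_x\,dW_s$ directly from the definition \eqref{eq:itoInt} of the It\^o integral; the mixed terms $\sum \Delta t_n\Delta W_n$ and $\sum(\Delta t_n)^2$ vanish, the first because it has $L^2$-norm $O(\sqrt\delta)$ by independence of increments and the second because it is $O(\delta)$; and the decisive term $\sum \tfrac12 f_{xx} b^2(X_{t_n})(\Delta W_n)^2$ converges to $\tfrac12\int_0^t b^2 f_{xx}\,ds$. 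Collecting the surviving pieces gives the formula.

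The crux, and the step I expect to be the real obstacle, is the quadratic-variation limit $\sum_n(\Delta W_n)^2\to t$, or more precisely $\sum_n g(t_n,X_{t_n})\bigl((\Delta W_n)^2 - \Delta t_n\bigr)\to 0$ in $L^2$ for bounded continuous $g$. I would prove this by squaring and using independence of increments: each factor $(\Delta W_n)^2 - \Delta t_n$ has mean zero and is independent of everything indexed earlier, so the cross terms drop and one is left with $\sum \mathbb E[g^2]\,\mathrm{Var}\bigl((\Delta W_n)^2\bigr) = 2\sum \mathbb E[g^2](\Delta t_n)^2 = O(\delta)\to 0$. Two further technical points need care: replacing the left-endpoint values $f_{xx}(t_n,X_{t_n})$ and $b^2(X_{t_n})$ by the integrands $f_{xx}(s,X_s)$ and $b^2(X_s)$ uses uniform continuity of these functions on compacts together with path-continuity of $X$; and the same modulus-of-continuity estimate for $f_{xx}$, multiplied by the $O(1)$ bound on $\sum(\Delta X_n)^2$, shows $\sum R_n\to 0$. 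If the derivatives of $f$ are merely continuous rather than bounded, one first localizes by stopping $X$ on exit from a large ball and removes the stopping at the end. These estimates are routine but fiddly, and they are where essentially all the work of the proof sits.
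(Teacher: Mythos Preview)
Your proposal is correct and follows essentially the same approach as the paper's sketch: Taylor-expand to second order and use moment bounds on Brownian increments to show the higher-order terms vanish. The paper only sketches the special case $f(W_t)$ (no time variable, no drift or diffusion coefficients) and bounds the third- and fourth-order Taylor terms directly via $\mathbb E|W_{t_{n+1}}-W_{t_n}|^k$, whereas you treat the general SDE and organize the crux around the quadratic-variation limit $\sum g\bigl((\Delta W_n)^2-\Delta t_n\bigr)\to 0$ in $L^2$, but the underlying idea is the same.
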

\noindent \textbf{Details on the It\^o Lemma.} The proof of \eqref{eq:itoLemma} is fairly involved and has several details to check, but ultimately, It\^o's lemma is a Taylor expansion to the 2nd order term, e.g. 
\[f(W_t)\simeq f(W_{t_0})+f'(W_{t_0})(W_t-W_{t_0})+\frac 12f''(W_{t_0})(W_t-W_{t_0})^2\]
for $0<t-t_0\ll 1$, (i.e. $t$ just slightly greater than $t_0$). To get a sense of why higher order terms drop out, take $t_0=0$ and $t_n = nt/N$ for some large $N$, and use the Taylor expansion:
\begin{align*}
f(W_t)-f(W_0) &= \sum_{n=0}^{N-1} f'(W_{t_n})(W_{t_{n+1}}-W_{t_n})+ \frac12\sum_{n=0}^{N-1} f''(W_{t_n})(W_{t_{n+1}}-W_{t_n})^2\\
&+ \frac16\sum_{n=0}^{N-1} f'''(W_{t_n})(W_{t_{n+1}}-W_{t_n})^3+ \frac{1}{24}\sum_{n=0}^{N-1} f''''(\xi_n)(W_{t_{n+1}}-W_{t_n})^4\ ,
\end{align*}
where $\xi_n$ is some (random) intermediate point to make the expansion exact. Now we use independent increments and the fact that 
\[\mathbb E(W_{t_{n+1}}-W_{t_n})^k=\Bigg\{\begin{array}{cc}
 \left(\frac tN\right)^{k/2}(k-1)!!&\hbox{if $k$ even}\\
 0&\hbox{if $k$ odd}\ ,
 \end{array}\]
from which is can be seem that the $f'$ and $f''$ are are significant,
\begin{align*}
\mathbb E\left(\sum_{n=0}^{N-1} f'(W_{t_n})(W_{t_{n+1}}-W_{t_n})\right)^2&= \frac tN\sum_{n=0}^{N-1} \mathbb E|f'(W_{t_n})|^2 = \mathcal O\left(1\right)\ ,\\
\mathbb E\sum_{n=0}^{N-1} f''(W_{t_n})(W_{t_{n+1}}-W_{t_n})^2& =\frac tN\sum_{n=0}^{N-1} \mathbb Ef''(W_{t_n})=\mathcal O\left(1\right)\ ,
\end{align*}
and assuming there is some bound $M<\infty$ such that $|f'''|\leq M$ and $|f''''|\leq M$, we see that the higher order terms are arbitrarily small,
 \begin{align*}
\mathbb E\left|\sum_{n=0}^{N-1} f'''(W_{t_n})(W_{t_{n+1}}-W_{t_n})^3\right|&\leq M\sum_{n=0}^{N-1} \mathbb E\left|W_{t_{n+1}}-W_{t_n}\right|^3\\
&\leq M\sum_{n=0}^{N-1}\sqrt{\mathbb E\left|W_{t_{n+1}}-W_{t_n}\right|^6}=\mathcal O\left(\frac{1}{N^{1/2}}\right)\ , \\
\mathbb E\left|\sum_{n=0}^{N-1} f''''(\xi_n)(W_{t_{n+1}}-W_{t_n})^4\right|&\leq M\sum_{n=0}^{N-1} \mathbb E(W_{t_{n+1}}-W_{t_n})^4=\mathcal O\left(\frac{1}{N}\right)\ ,
\end{align*}
where we've used the Cauchy-Schwartz inequality, $\mathbb E|Z|^3\leq \sqrt{\mathbb E|Z|^6}$ for some random-variable $Z$.\\

\noindent The following are examples that should help to familiarize with the It\^o lemma and solutions to SDEs:
\begin{example}An Ornstein-Uhlenbeck process 
 $dX_t = \kappa(\theta-X_t)dt+\gamma dW_t$ has solution $$X_t=\theta+(X_0-\theta)e^{-\kappa t}+\gamma \int_0^te^{-\kappa(t-s)}dW_s\ .$$
 This solution uses an integrating factor\footnote{For an ordinary differential equation $\frac{d}{dt}X_t+\kappa X_t = a_t$, the integrating factor is $e^{\kappa t}$ and the solution is $X_t = X_0e^{-\kappa t}+\int_0^ta_se^{-\kappa(t-s)}ds$. An equivalent concept applies for stochastic differential equations.} of $e^{\kappa t}$,
 $$dX_t+\kappa X_tdt = \kappa\theta dt+\gamma dW_t\ .$$

 \end{example}
 
 \begin{example}Apply It\^o's lemma to $X_t=W_t^2$ to get the SDE
\[dX_t = dt+2W_tdW_t\ .\]
\end{example}

\begin{example}The canonical SDE in financial math, the geometric Brownian motion,
$\frac{dS_t}{S_t} = \mu dt+\sigma dW_t$ has solution
$$S_t = S_0e^{\left(\mu-\frac{1}{2}\sigma^2\right)t+\sigma W_t}$$
which is always positive. Again, verify with It\^o's lemma. Also try It\^o's lemma on $\log(S_t)$.
\end{example}

\begin{example}
 Suppose $dY_t =(\sigma^2 Y_t^3 -aY_t)dt+\sigma Y_t^2dW_t$. Apply It\^o's lemma to $X_t = -1/Y_t$ to get a simpler SDE for $X_t$,
 \[dX_t = -aX_tdt+\sigma dW_t\ .\]
 Notice that $Y_t$'s SDE doesn't satisfy the criterion of Theorem \ref{thm:sdeEU}, but though the change of variables we see that $Y_t$ is really a function of $X_t$ that is covered by the theorem. 
\end{example}

\section{Multivariate It\^o Lemma}
Let $W_t = (W_t^1,W_t^2,\dots,W_t^n)$ be an $n$-dimensional Brownian motion such that $\frac 1t\mathbb EW_t^iW_j^j = \indicator{i=j}$. Now suppose that we also have a system of SDEs, $X_t = (X_t^1,X_t^2,\dots,X_t^m)$ (with  $m$ possibly not equal to $n$) such that
\[dX_t^i =a_i(X_t)dt+\sum_{j=1}^nb_{ij}(X_t)dW_t^j\qquad\hbox{for each $i\leq m$}\]
where $\alpha_i:\mathbb R^m\rightarrow \mathbb R$ are the drift coefficients, and the diffusion coefficients $b_{ij}:\mathbb R^m\rightarrow \mathbb R$ are such that $b_{ij}=b_{ji}$ and
\[\sum_{i,j=1}^m\sum_{\ell=1}^nb_{i\ell}(x)b_{j\ell}(x)v_iv_j >0\qquad\forall v\in \mathbb R^m\hbox{ and }\forall x\in\mathbb R^m, \]
i.e. for each $x\in\mathbb R^m$ the covariance matrix is positive definite. For some differentiable function, there is a multivariate version of the It\^o lemma.
\begin{lemma}\textbf{(The Multivariate It\^o Lemma).} Let $f:\mathbb R^+\times\mathbb R^m\rightarrow \mathbb R$ with at least one derivative in the first argument and at least two derivatives in the remaining $m$ arguments. The differential of $f(t,X_t)$ is 
\[df(t,X_t) = \left(\frac{\partial}{\partial t}+\sum_{i=1}^m\alpha_i(X_t)\frac{\partial}{\partial x_i}+\frac 12\sum_{i,j=1}^m\mathbf b_{ij}(X_t)\frac{\partial^2}{\partial x_i\partial x_j}\right)f(t,X_t)dt\]

\begin{equation}
\label{eq:multiVarIto}
+\sum_{i=1}^m\sum_{\ell=1}^nb_{i\ell}(X_t)\frac{\partial}{\partial x_i}f(t,X_t)dW_t^\ell\
\end{equation}
where $\mathbf b_{ij}(x) = \sum_{\ell=1}^nb_{i\ell}(x)b_{j\ell}(x)$.
\end{lemma}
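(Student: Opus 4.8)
The plan is to mirror the Taylor-expansion argument sketched above for the one-dimensional It\^o lemma. Fix $t\in[0,T]$, partition $[0,t]$ by $t_k=kt/N$, set $\Delta_kt=t/N$, $\Delta_kW^\ell=W^\ell_{t_{k+1}}-W^\ell_{t_k}$ and $\Delta_kX^i=X^i_{t_{k+1}}-X^i_{t_k}$, and write the telescoping sum
\[
f(t,X_t)-f(0,X_0)=\sum_{k=0}^{N-1}\Bigl[f(t_{k+1},X_{t_{k+1}})-f(t_k,X_{t_k})\Bigr].
\]
Split each summand as $f(t_{k+1},X_{t_{k+1}})-f(t_k,X_{t_k})=\bigl[f(t_{k+1},X_{t_{k+1}})-f(t_k,X_{t_{k+1}})\bigr]+\bigl[f(t_k,X_{t_{k+1}})-f(t_k,X_{t_k})\bigr]$, expand the first bracket to first order in $t$ by the mean value theorem and the second to second order in $x$ by Taylor's theorem, keeping an explicit third-order spatial remainder. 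The leading contribution of the $k$-th term is then
\[
\partial_tf(\xi_k,X_{t_{k+1}})\,\Delta_kt+\sum_{i=1}^m\partial_{x_i}f(t_k,X_{t_k})\,\Delta_kX^i+\tfrac12\sum_{i,j=1}^m\partial^2_{x_ix_j}f(t_k,X_{t_k})\,\Delta_kX^i\,\Delta_kX^j,
\]
with $\xi_k\in(t_k,t_{k+1})$. Throughout I would first assume the derivatives of $f$ appearing here, together with $a_i$ and $b_{ij}$, are bounded, reducing the general $C^{1,2}$ case at the end by localizing with the stopping times $\tau_R=\inf\{t:|X_t|\ge R\}$ and letting $R\to\infty$.

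Next I would dispatch the easy limits. The sum $\sum_k\partial_tf(\xi_k,X_{t_{k+1}})\,\Delta_kt$ converges to $\int_0^t\partial_tf(s,X_s)\,ds$ by continuity of $\partial_tf$ and of $s\mapsto X_s$. Substituting $\Delta_kX^i=a_i(X_{t_k})\Delta_kt+\sum_\ell b_{i\ell}(X_{t_k})\Delta_kW^\ell$ into the first-order spatial sum and passing to the limit gives $\int_0^t\sum_i\partial_{x_i}f\,a_i(X_s)\,ds+\sum_{i,\ell}\int_0^t\partial_{x_i}f\,b_{i\ell}(X_s)\,dW^\ell_s$ directly from the definition of the It\^o integral. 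The third-order spatial remainder is $O(N^{-1/2})$ in $L^1$ by exactly the moment estimates used in the one-dimensional sketch, namely Cauchy--Schwarz together with $\mathbb E|\Delta_kX^i|^p=O(N^{-p/2})$ (a consequence of the SDE and a Gronwall estimate); likewise, in the second-order sum the terms in which at least one factor $\Delta_kX^i$ is replaced by its drift part $a_i(X_{t_k})\Delta_kt$ are $O(N^{-1/2})$.

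The crux is the surviving, purely diffusive part of the second-order sum,
\[
\tfrac12\sum_{i,j=1}^m\sum_{\ell,\ell'=1}^n\sum_{k=0}^{N-1}\partial^2_{x_ix_j}f(t_k,X_{t_k})\,b_{i\ell}(X_{t_k})b_{j\ell'}(X_{t_k})\,\Delta_kW^\ell\,\Delta_kW^{\ell'},
\]
which I would show converges in probability to $\tfrac12\sum_{i,j}\int_0^t\partial^2_{x_ix_j}f\,\mathbf b_{ij}(X_s)\,ds$. This is the multidimensional quadratic-variation computation, and it is the step I expect to be the main obstacle. Because the components of $W$ are independent with $\tfrac1t\mathbb EW^i_tW^j_t=\indicator{i=j}$, one has $\mathbb E[\Delta_kW^\ell\Delta_kW^{\ell'}\mid(W_\tau)_{\tau\le t_k}]=\indicator{\ell=\ell'}\Delta_kt$ and the conditional variance of $\Delta_kW^\ell\Delta_kW^{\ell'}$ is $O(N^{-2})$; writing $\Delta_kW^\ell\Delta_kW^{\ell'}=\indicator{\ell=\ell'}\Delta_kt+(\Delta_kW^\ell\Delta_kW^{\ell'}-\indicator{\ell=\ell'}\Delta_kt)$ splits the display into a diagonal Riemann sum converging to $\tfrac12\sum_{i,j}\int_0^t\partial^2_{x_ix_j}f\sum_\ell b_{i\ell}b_{j\ell}(X_s)\,ds=\tfrac12\sum_{i,j}\int_0^t\partial^2_{x_ix_j}f\,\mathbf b_{ij}(X_s)\,ds$, plus a sum of martingale differences (conditionally mean-zero given the past, since the prefactors are $(W_\tau)_{\tau\le t_k}$-measurable) whose $L^2$-norm is $(\sum_kO(N^{-2}))^{1/2}=O(N^{-1/2})$ and hence vanishes. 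This is precisely where the independence/normalization of $W$ and the symmetry of $(b_{ij})$ genuinely enter — they force the off-diagonal cross-variations to vanish — and where one must cleanly separate the Riemann part from the genuinely martingale part rather than manipulate $dW^\ell dW^{\ell'}$ symbolically (the uniform $L^2$ control of the remainder being what needs the localization). Collecting the four pieces — the time term, the drift term, the It\^o term, and the quadratic-variation term — and recalling $\mathbf b_{ij}(x)=\sum_\ell b_{i\ell}(x)b_{j\ell}(x)$ yields \eqref{eq:multiVarIto}, after a final passage $R\to\infty$ to remove the localization.
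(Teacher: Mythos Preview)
The paper does not actually prove this lemma: immediately after the statement it only remarks that ``Equation \eqref{eq:multiVarIto} is essentially a 2nd-order Taylor expansion like the univariate case of equation \eqref{eq:itoLemma},'' and the univariate case itself is only given as an informal sketch. Your proposal is precisely a fleshed-out version of that sketch carried to the multivariate setting---telescoping sum, second-order Taylor expansion, moment estimates to kill the higher-order remainders, and the quadratic-variation computation for the $\Delta_kW^\ell\Delta_kW^{\ell'}$ terms---so it is entirely in line with what the paper indicates, just considerably more detailed and with the added care of a localization argument.

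One small point worth tightening: when you write $\Delta_kX^i=a_i(X_{t_k})\Delta_kt+\sum_\ell b_{i\ell}(X_{t_k})\Delta_kW^\ell$, this is the Euler--Maruyama approximation rather than an equality; the true increment is $\int_{t_k}^{t_{k+1}}a_i(X_s)\,ds+\sum_\ell\int_{t_k}^{t_{k+1}}b_{i\ell}(X_s)\,dW_s^\ell$. The discrepancy is $O(N^{-1})$ in $L^2$ under your boundedness/localization assumptions (via continuity of $a_i$, $b_{i\ell}$ and the It\^o isometry), so it is harmless once made explicit, but as written the substitution step reads as if it were exact.
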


\noindent Equation \eqref{eq:multiVarIto} is essentially a 2nd-order Taylor expansion like the univariate case of equation \eqref{eq:itoLemma}. Of course, Theorem \ref{thm:sdeEU} still applies to to the system of SDEs (make sure $a_i$ and $b_{ij}$ have linear growth and are Lipschitz continuous for each $i$ and $j$), and in the multidimensional case it is also important whether or not $n\geq m$, and if so it is important that there is some constant $c$ such that $\inf_x\mathbf b(x)>c>0$. If there is not such constant $c>0$, then we are possibly dealing with a system that is \textit{degenerate} and there could be (mathematically) technical problems. \\

\noindent\textbf{Correlated Brownian Motion.} Sometimes the multivariate case is formulated with a correlation structure among the $W_t^i$'s, in which the It\^o lemma of equation \eqref{eq:multiVarIto} will have extra correlation terms. Suppose there is correlation matrix,
\[\rho = \begin{pmatrix}1&\rho_{12}&\rho_{13}&\dots&\rho_{1n}\\
\rho_{21}&1&\rho_{23}&\dots&\rho_{2n}\\
\rho_{31}&\rho_{32}&1&\dots&\rho_{3n}\\
\vdots&&&\ddots&\\
\rho_{n1}&\rho_{n2}&\rho_{n3}&\dots&1
\end{pmatrix}\]
where $\rho_{ij}=\rho_{ji}$ and such that $\frac 1t\mathbb EW_t^iW_t^j=\rho_{ij}$ for all $i,j\leq n$. Then equation \eqref{eq:multiVarIto} becomes
\[df(t,X_t) = \left(\frac{\partial}{\partial t}+\sum_{i=1}^m\alpha_i(X_t)\frac{\partial}{\partial x_i}+\frac 12\sum_{i,j=1}^m\sum_{\ell,k=1}^n\rho_{\ell k}b_{i\ell}(X_t)b_{kj}(X_t)\frac{\partial^2}{\partial x_i\partial x_j}\right)f(t,X_t)dt\]
\[
+\sum_{i=1}^m\sum_{\ell=1}^nb_i(X_t)\frac{\partial}{\partial x_i}f(t,X_t)dW_t^\ell\ .\]
\begin{example} \textbf{(Bivariate Example).} Consider the case when $n=m=2$, with
\begin{align*}
&dX_t=a(X_t)dx+b(X_t)dW_t^1\\
&dY_t = \alpha(Y_t)dt+\sigma(Y_t)dW_t^2
\end{align*}
and with $\frac 1t\mathbb EW_t^1W_t^2 = \rho$. Then, 
\begin{align*}
df(t,X_t,Y_t)&= \frac{\partial}{\partial t}f(t,X_t,Y_t)dt\\
&\\
&+\underbrace{\left(\frac{b^2(X_t)}{2}\frac{\partial^2}{\partial x^2}+a(X_t)\frac{\partial}{\partial x}\right)f(t,X_t,Y_t)dt+b(X_t)\frac{\partial}{\partial x}f(t,X_t,Y_t)dW_t^1}_{\hbox{$X_t$ terms}}\\
&+\underbrace{\left(\frac{\sigma^2(Y_t)}{2}\frac{\partial^2}{\partial y^2}+\alpha(Y_t)\frac{\partial}{\partial y}\right)f(t,X_t,Y_t)dt+\sigma(Y_t)\frac{\partial}{\partial y}f(t,X_t,Y_t)dW_t^2}_{\hbox{$Y_t$ terms}}\\
&+\underbrace{\rho\sigma(X_t)b(X_t)\frac{\partial^2}{\partial x\partial y}f(t,X_t,Y_t)dt}_{\hbox{cross-term.}}
\end{align*}
\end{example}

\section{The Feynman-Kac Formula}
If you one could identify the fundamental link between asset pricing and stochastic differential equations, it would be the Feynman-Kac formula. The Feynman-Kac formula says the following:
\begin{proposition}\label{prop:feynmanKac}\textbf{(The Feynman-Kac Formula).} Let the function $f(x)$ be bounded, let $\psi(x)$ be twice differentiable with compact support\footnote{Compact support of a function means there is a compact subset $K$ such that $\psi(x) = 0$ if $x\notin K$. For a real function of a scaler variable, this means there is a bound $M<\infty$ such that $\psi(x)=0$ if $|x|>M$.} in $K\subset \mathbb R$, and let the function $q(x)$ is bounded below for all $x\in\mathbb R$, and let $X_t$ be given by the SDE
\begin{equation}
\label{eq:FCsde}
dX_t = a(X_t)dt+b(X_t)dW_t\ .
\end{equation}
\begin{itemize}
\item For $t\in[0,T]$, the Feynman-Kac formula is
\begin{equation}
\label{eq:FCformula}
v(t,x) = \mathbb E\left[ \int_t^Tf(X_s)e^{-\int_t^sq(X_u)du}ds+e^{-\int_t^Tq(X_s)ds}\psi(X_T)\Big|X_t=x\right]
\end{equation}
and is a solution to the following partial differential equation (PDE):
\begin{eqnarray}
\label{eq:FCpde}
\frac{\partial}{\partial t}v(t,x) +\left( \frac{b^2(x)}{2}\frac{\partial^2}{\partial x^2}+a(x)\frac{\partial}{\partial x}\right)v(t,x) - q(x)v(t,x)+f(t,x) &=&0\\
\label{eq:FCpdeTC}
v(T,x)&=&\psi(x)\ .
\end{eqnarray}
\item If $\omega(t,x)$ is a bounded solution to equations \eqref{eq:FCpde} and \eqref{eq:FCpdeTC} for $x\in K$, then $\omega(t,x) = v(t,x)$.
\end{itemize}

\end{proposition}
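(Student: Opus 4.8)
The plan is to tie the conditional expectation in \eqref{eq:FCformula} to the PDE \eqref{eq:FCpde}--\eqref{eq:FCpdeTC} via It\^o's lemma applied to a discounted version of a candidate solution, and to harvest both assertions from one computation. Write $\mathcal L = \frac{b^2(x)}{2}\frac{\partial^2}{\partial x^2}+a(x)\frac{\partial}{\partial x}$ for the generator of $X$ in \eqref{eq:FCsde}, and for fixed $(t,x)$ introduce the discount factor $D_s = e^{-\int_t^s q(X_u)\,du}$, which satisfies $dD_s = -q(X_s)D_s\,ds$ and, because $q$ is bounded below, obeys $0<D_s\le e^{c(s-t)}$ for some constant $c$.

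I would start with the uniqueness bullet, since it is the mechanical half. Let $\omega(t,x)$ be a bounded $C^{1,2}$ solution of \eqref{eq:FCpde}--\eqref{eq:FCpdeTC} and apply It\^o's lemma and the product rule to
\[
M_s = D_s\,\omega(s,X_s) + \int_t^s D_r\, f(X_r)\,dr,\qquad s\in[t,T].
\]
The $ds$-terms collapse to $D_s\big[(\partial_t + \mathcal L - q)\omega + f\big](s,X_s)\,ds$, which vanishes identically by \eqref{eq:FCpde}, leaving $dM_s = D_s\, b(X_s)\,\omega_x(s,X_s)\,dW_s$, so $M$ is a local martingale. To upgrade this I would localize with the exit times $\tau_n = \inf\{s\ge t:\,|X_s|\ge n\}$: on $[t,\tau_n]$ the integrand is bounded (continuity of $b$ and $\omega_x$ on compacts), so $M_{s\wedge\tau_n}$ is a true martingale and $\mathbb E[M_{T\wedge\tau_n}\mid X_t=x]=M_t=\omega(t,x)$. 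Since paths are continuous on $[0,T]$ we have $\tau_n>T$ eventually, and the relevant terms are dominated (using boundedness of $\omega$, $f$, $\psi$, and the bound on $D_s$), so dominated convergence and $\omega(T,\cdot)=\psi$ yield
\[
\omega(t,x) = \mathbb E\left[\,e^{-\int_t^T q(X_s)\,ds}\psi(X_T) + \int_t^T f(X_r)e^{-\int_t^r q(X_u)\,du}\,dr \,\Big|\, X_t=x\right] = v(t,x).
\]
This proves the second bullet and shows that whenever a bounded classical solution exists it must be $v$.

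For the first bullet I would show that $v$ as defined by \eqref{eq:FCformula} is genuinely $C^{1,2}$ and hence \emph{is} that classical solution. The cleanest route is to invoke interior parabolic (Schauder) regularity: under smoothness and non-degeneracy of $a,b$ the operator $\partial_t+\mathcal L-q$ is uniformly parabolic, so \eqref{eq:FCpde}--\eqref{eq:FCpdeTC} admits a classical solution $\omega$, which by the argument above equals $v$; thus $v$ solves the PDE. An alternative is to differentiate \eqref{eq:FCformula} under the expectation, using smoothness of the stochastic flow $x\mapsto X_s$ and the Markov property to produce the PDE directly, but this again leans on regularity of the coefficients.

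I expect the regularity step to be the real obstacle. The It\^o/localization computation delivering uniqueness is essentially routine once the stopping-time bookkeeping is in place, but establishing that the probabilistic object \eqref{eq:FCformula} has two continuous $x$-derivatives — rather than being merely a viscosity or distributional solution — is exactly where the extra (unstated) smoothness and ellipticity hypotheses are indispensable. The only other technical point is the uniform integrability needed to pass $n\to\infty$ through the stochastic-integral term, which the a priori bounds on $\omega$ and $D_s$ handle.
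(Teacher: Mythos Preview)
The paper states this proposition without proof: after the closing \texttt{\textbackslash end\{proposition\}} the text simply moves on to remark that ``the Feynman-Kac formula will be instrumental in pricing European derivatives in the coming sections,'' so there is no argument to compare your proposal against.

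On its own terms, your outline is the standard one and is essentially sound. The uniqueness half --- apply It\^o to $M_s=D_s\,\omega(s,X_s)+\int_t^s D_r f(X_r)\,dr$, use the PDE to kill the drift, localize with $\tau_n$, then pass to the limit by dominated convergence --- is exactly the textbook route (cf.\ \O ksendal, which the paper cites). Two small points worth tightening: (i) the claim ``$\tau_n>T$ eventually'' is really the assertion that $X$ does not explode on $[0,T]$, which you should tie back to the growth/Lipschitz hypotheses of Theorem~\ref{thm:sdeEU} rather than to path continuity alone; (ii) the proposition as written only asks that $\omega$ solve the PDE for $x\in K$, the compact support of $\psi$, so your argument implicitly extends $\omega$ outside $K$ --- this is harmless here because $\psi$ vanishes there and the remaining terms are handled by $f$ bounded, but it is worth a sentence. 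For the first bullet you are right that the honest work is regularity of $v$; invoking parabolic Schauder theory is a legitimate shortcut, but note that it imports smoothness and nondegeneracy assumptions on $a,b$ that the proposition does not state, which is precisely the caveat you flag.
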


\noindent The Feynman-Kac formula will be instrumental in pricing European derivatives in the coming sections. For now it is important to take note of how the SDE in \eqref{eq:FCsde} relates to the formula \eqref{eq:FCformula} and to the PDE of \eqref{eq:FCpde} and \eqref{eq:FCpdeTC}. It is also important to conceptualize how the Feynman-Kac formula might be extended to the multivariate case. In particular for scalar solutions of \eqref{eq:FCpde} and \eqref{eq:FCpdeTC} that are expectations of a multivariate process $X_t\in\mathbb R^m$, the key thing to realize is that $x$-derivatives in \eqref{eq:FCpde} are the same is the $dt$-terms in the It\^o lemma. Hence, multivariate Feynman-kac can be deduced from the multivariate It\^o lemma.
\section{Girsanov Theorem}
Another important link between asset pricing and stochastic differential equations is the Girsanov theorem, which provides a means for defining the equivalent martingale measure.

For $T<\infty$, consider a Brownian motion $(W_t)_{t\leq T}$, and consider another process $\theta_t$ that does not anticipate future outcomes of the $W$ (i.e. given the filtration $\mathcal F_t^W$ generated by the history of $W$ up to time $t$, $\theta_t$ is adapted to $\mathcal F_t^W$). The first feature to the Girsanov theorem is the Dolean-Dade exponent:
\begin{equation}
\label{eq:doleanExp}
Z_t \doteq \exp\left(-\frac 12\int_0^t\theta_s^2ds+\int_0^t\theta_sdW_s\right)
\end{equation}
A sufficient condition for the application of Girsanov theorem is \textbf{the Novikov condition,} 
\[\mathbb E\exp\left(\frac 12\int_0^T\theta_s^2ds\right)<\infty\ .\]
Given the Novikov condition, the process $Z_t$ is a martingale on $[0,T]$ and a new probability measure is defined using the density
\begin{equation}
\label{eq:Qmeasure}
\frac{d\mathbb Q}{d\mathbb P}\Big|_t=Z_t\qquad\forall t\leq T\ ;
\end{equation}
in general $Z$ may not be a true martingale but only a \textit{local martingale} (see Appendix \ref{app:martingalesStoppingTimes} and \cite{harrisonPliska}). The Girsanov Theorem is stated as follows:
\begin{theorem}
\label{thm:girsanov}
\textbf{(Girsanov Theorem).} If $Z_t$ is a true martingale on $[0,T]$ then the process $\widetilde W_t = W_t-\int_0^t\theta_sds$ is Brownian motion under the measure $\mathbb Q$ on $[0,T]$.
\end{theorem}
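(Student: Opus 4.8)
The plan is to verify the two defining features of a $\mathbb{Q}$-Brownian motion for the process $\widetilde W_t = W_t - \int_0^t\theta_s\,ds$: that it is a continuous $\mathbb{Q}$-martingale started at $0$, and that its increments are stationary, independent and $\mathcal N(0,t-s)$-distributed. Continuity and $\widetilde W_0 = 0$ are immediate from the definition, since $\int_0^t\theta_s\,ds$ is continuous and vanishes at $t=0$; all the work is in the two probabilistic statements. The engine for moving statements between $\mathbb{P}$ and $\mathbb{Q}$ is the abstract Bayes rule: for $s\le t$ and any $\mathbb{Q}$-integrable $Y$,
\[
\mathbb{E}^{\mathbb{Q}}[\,Y \mid \mathcal{F}_s^W\,] = \frac{1}{Z_s}\,\mathbb{E}^{\mathbb{P}}[\,Y Z_t \mid \mathcal{F}_s^W\,],
\]
which follows directly from \eqref{eq:Qmeasure} and the definition of conditional expectation (I would record it as a preliminary lemma). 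Its upshot is the criterion I will use repeatedly: a process $M_t$ is a $\mathbb{Q}$-martingale if and only if $M_t Z_t$ is a $\mathbb{P}$-martingale.

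First I would compute the dynamics of the Dol\'eans-Dade exponent $Z_t$ itself by applying It\^o's lemma to \eqref{eq:doleanExp}: writing $Z_t = \exp(Y_t)$ with $dY_t = -\tfrac12\theta_t^2\,dt + \theta_t\,dW_t$, the first-order drift $-\tfrac12\theta_t^2$ is exactly cancelled by the second-order term $\tfrac12\theta_t^2$, leaving $dZ_t = \theta_t Z_t\,dW_t$; in particular $Z$ is a $\mathbb{P}$-local martingale, and a genuine $\mathbb{P}$-martingale under the Novikov condition. Then I would apply the It\^o product rule to $\widetilde W_t Z_t$, using $d\widetilde W_t = dW_t - \theta_t\,dt$ and the cross-variation $d\langle\widetilde W, Z\rangle_t = \theta_t Z_t\,dt$:
\[
d(\widetilde W_t Z_t) = Z_t\, d\widetilde W_t + \widetilde W_t\, dZ_t + d\langle\widetilde W, Z\rangle_t = \bigl(Z_t + \widetilde W_t\,\theta_t Z_t\bigr)\, dW_t,
\]
where the $dt$ contributions $-\theta_t Z_t\,dt$ (from $d\widetilde W_t$) and $+\theta_t Z_t\,dt$ (from the bracket) annihilate. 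So $\widetilde W_t Z_t$ has no drift, hence is a $\mathbb{P}$-local martingale and, with the integrability supplied by the true-martingale hypothesis on $Z$ (estimated via Cauchy--Schwarz and the It\^o isometry), a genuine $\mathbb{P}$-martingale; by the Bayes criterion $\widetilde W_t$ is then a $\mathbb{Q}$-martingale.

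To pin down the law I would run the same product-rule computation on $\exp\!\bigl(i\lambda\widetilde W_t + \tfrac12\lambda^2 t\bigr)Z_t$: applying It\^o's lemma to the exponential factor (the drift $-i\lambda\theta_t + \tfrac12\lambda^2$ combines with the second-order term $-\tfrac12\lambda^2$ to leave $-i\lambda\theta_t$), and then the product rule against $Z$, one finds all $dt$-terms cancel exactly as above, so this process is a $\mathbb{P}$-local, hence (with integrability) true, martingale. The Bayes rule then gives $\mathbb{E}^{\mathbb{Q}}[\,e^{i\lambda(\widetilde W_t - \widetilde W_s)} \mid \mathcal{F}_s^W\,] = e^{-\lambda^2(t-s)/2}$, which is precisely stationarity, independence of $\mathcal{F}_s^W$, and the $\mathcal N(0,t-s)$ law of the increments; together with continuity and $\widetilde W_0=0$ this identifies $\widetilde W$ as a $\mathbb{Q}$-Brownian motion. (A slicker alternative, if one is willing to invoke L\'evy's characterization, is to note that $\widetilde W$ is a continuous $\mathbb{Q}$-martingale with quadratic variation $\langle\widetilde W\rangle_t = \langle W\rangle_t = t$ — the finite-variation drift $\int_0^t\theta_s\,ds$ adds nothing to the quadratic variation, and the quadratic variation is an almost-sure pathwise limit, hence invariant under the equivalent measure $\mathbb{Q}$.)

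The main obstacle is not the formal It\^o bookkeeping — the drift cancellations are automatic and essentially forced by the structure of the Dol\'eans-Dade exponent — but the integrability needed to promote the local martingales $\widetilde W_t Z_t$ and $\exp(i\lambda\widetilde W_t+\tfrac12\lambda^2 t)Z_t$ to \emph{true} martingales, since only true martingales satisfy the conditional-expectation identities used above. This is exactly the point at which the hypothesis that $Z_t$ is a true martingale (guaranteed by Novikov) is indispensable; a fully rigorous treatment would localize by a sequence of stopping times, verify uniform integrability on each bounded time interval, and pass to the limit.
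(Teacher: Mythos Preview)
The paper does not supply a proof of Theorem~\ref{thm:girsanov}; it merely states the result, follows it with two remarks (on Novikov versus Kazamaki, and on the necessity of $T<\infty$), and then gives the geometric-Brownian-motion example. So there is no paper proof to compare against.

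Your outline is the standard argument and is correct in structure: derive $dZ_t=\theta_tZ_t\,dW_t$ from It\^o applied to \eqref{eq:doleanExp}, use the product rule to show $\widetilde W_tZ_t$ is driftless under $\mathbb P$, invoke the abstract Bayes rule to conclude $\widetilde W$ is a $\mathbb Q$-martingale, and then identify the law either via the conditional characteristic function or via L\'evy's characterisation (noting that quadratic variation is measure-invariant for equivalent measures). The drift cancellations you record are all correct.

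You have also correctly located the only genuine difficulty: upgrading the local martingales $\widetilde W_tZ_t$ and $e^{i\lambda\widetilde W_t+\lambda^2t/2}Z_t$ to true $\mathbb P$-martingales. One small caution: the parenthetical ``estimated via Cauchy--Schwarz and the It\^o isometry'' is too optimistic as written --- the hypothesis that $Z$ is a true martingale does not by itself give $\mathbb E\int_0^T(Z_t+\widetilde W_t\theta_tZ_t)^2\,dt<\infty$, and one typically goes instead through a localisation and a uniform-integrability argument (or, more cleanly in the L\'evy-characterisation route, observes that bounded local martingales are true martingales and localises $\widetilde W$). You flag this honestly in your final paragraph, so the proposal is sound as a sketch; just be aware that the integrability step is where the actual work lives in a complete proof.
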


\begin{remark}The Novikov condition is only sufficient and not necessary. There are other sufficient conditions such as the Kazamaki condition, $\sup_{\tau\in\mathcal T}\mathbb Ee^{\frac12\int_0^{T\wedge\tau}\theta_sdW_s}<\infty$ where $\mathcal T$ is the set of finite stopping times. \end{remark}
\begin{remark}It is important to have $T<\infty$. In fact, the Theorem does not hold for infinite time.\end{remark}

\begin{example}\textbf{(EMM for Asset Prices of Geometric Brownian Motion).} The important example for finance the (unique) EMM for the geometric Brownian. Let $S_t$ be the price of an asset,
\[\frac{dS_t}{S_t}=\mu dt+\sigma dW_t\ ,\]
and let $r\geq 0$ be the risk-free rate of interest. For the exponential martingale
\[Z_t =\exp\left(-\frac t2\left(\frac{r-\mu}{\sigma}\right)^2+\frac{r-\mu}{\sigma}W_t\right)\ ,\]
the process $ W_t^Q \doteq \frac{\mu-r}{\sigma}t+W_t$ is $\mathbb Q$-Brownian motion, and the price process satisfies,
\[\frac{dS_t}{S_t}=rdt+\sigma dW_t^Q\ .\]
Hence
\[S_t = S_0\exp\left(\left(r-\frac 12\sigma^2\right)t+\sigma W_t^Q\right)\]
and $S_te^{-rt}$ is a $\mathbb Q$-martingale.
\end{example}

\chapter{The Black-Scholes Theory}
\label{chapt:blackScholes}
This section builds a pricing theory around the assumptions of no-arbitrage with perfect liquidity and trades occurring in continuous time. The Black-Scholes model is a complete market, and there turns out to be a fairly general class of partial differential equations (PDEs) that can price many contingent claims. The focus of Black-Scholes theory is often on European call and put options, but exotics such as the Asian and the exchange option are simple extensions of the basic formulae. The issues with American options are covered later in Section \ref{sec:americans}.

\section{The Black-Scholes Model}

The Black-Scholes model assumes a market consisting of a single risky asset and a risk-free bank account. This market is given by the equations
\begin{eqnarray*}
\frac{dS_t}{S_t}&=&\mu dt+\sigma dW_t\qquad\hbox{geometric Brownian-Motion}\\
dB_t&=&rB_tdt\qquad\qquad\hbox{non-stochastic}
\end{eqnarray*}
where $W_t$ is Brownian motion as described in Chapter \ref{chapt:brownianMotion} and the interpretation of the parameters is as follows:
\begin{align*}
&\mu\hbox{ is the expected rate of return in the risk asset,}\\
&\sigma>0\hbox{ is the volatility of the risky asset,}\\
&r\geq0\hbox{ is the bank's rate of interest.}
\end{align*}
It turns out that this market is particularly well-suited for pricing options and other variations, as well as analyzing basic risks associated with the writing of such contracts. Even though this market is an oversimplification of real life, it is still remarkable how a such a parsimonious model is able to capture so much of the very essence of the risky behavior in the markets. In particular, the parameter $\sigma$ will turn out to be a hugely important factor in secondary markets for options, swaps, etc.

The default focus in these notes will be the European call option with strike $K$ and maturity $T$, that is, a security that pays
\[(S_T-K)^+\doteq\max(S_T-K,0)\ ,\]
with strike and contract price agreed upon at some earlier time $t<T$. In general, the price of any European derivative security with payoff $\psi(S_T)$ (i.e. a derivative with payoff determined by the terminal value of risky asset) will be a function of the current time and the current asset price,
\[C(t,S_t) = \hbox{ price of derivative security.}\]
The fact that the price can be written a function of $t$ and $S_t$ irrespective of $S$'s history is due to the fact the model is Markov. Through an arbitrage argument we will arrive at a PDE for the pricing function $C$. Pricing equations for general non-European derivatives (such as the Asian option discussed in Section \ref{sec:asian}) are determined on a case-by-case basis.

\section{Self-Financing Portfolio}
Let $V_t$ denote the \$-value of a portfolio with shares in the risk asset and the rest of it's value in the risk-free bank account. At any time the portfolio can be written as
\[V_t=\alpha_tS_t+\beta_t\]
where $\alpha_t$ is the number of shares in $S_t$ (could be any real number) and $\beta_t$ is the \$-amount in bank. The key characteristic that will be associated with $V$ throughout these notes is the following condition:

\begin{definition}
\label{def:selfFin}
The portfolio $V_t$ is \textbf{self-financing} if 
\[dV_t = \alpha_tdS_t+r\beta_tdt\]
with $\beta_t = V_t-\alpha_tS_t$.
\end{definition}

\noindent The self-financing condition is not entirely obvious at first, but it helps to think of one's personal decision-making in a financial market. Usually, one chooses a portfolio allocation in stocks and bonds, and then allows a certain amount of change to occur in the market before adjusting their allocation. If you don't remove any cash for consumption and you don't inject any cash for added investment, then your portfolio is self-financing. Indeed, in discrete time the self-financing condition is
\[V_{t_{n+1}} = V_{t_n}+\alpha_{t_n}(S_{t_{n+1}}-S_{t_n})+(e^{r\Delta t}-1)\beta_{t_n}\]
where $\Delta t = t_{n+1}-t_n$. This becomes the condition described in Definition \ref{def:selfFin} as $\Delta t\searrow 0$.

\section{The Black-Scholes Equation}
For general functions $f(t,s)$, the It\^o lemma for the geometric Brownian motion process is 

\[df(t,S_t) = \left(\frac{\partial}{\partial t}+\mu S_t\frac{\partial}{\partial s}+\frac{\sigma^2S_t^2}{2}\frac{\partial^2}{\partial s^2}\right)f(t,S_t)dt+\sigma S_t\frac{\partial}{\partial s}f(t,S_t)dW_t\]
(recall equation \eqref{eq:itoLemma} from Chapter \ref{chapt:brownianMotion}). Hence, applying the It\^o lemma to the price function $C(t,S_t)$, the dynamics of the option and the self-financing portfolio are
\begin{eqnarray}
\label{eq:dC}
dC(t,S_t)&=&\left(\frac{\partial}{\partial t}+\mu S_t\frac{\partial}{\partial s}+\frac{\sigma^2S_t^2}{2}\frac{\partial^2}{\partial s^2}\right)C(t,S_t)dt+\sigma S_t\frac{\partial}{\partial s}C(t,S_t)dW_t\\
\nonumber
&&\\
\label{eq:dV}
dV_t &=&\alpha_tdS_t+r\beta_tdt\ .
\end{eqnarray}
The idea is to find $\alpha_t$ that can be known to us at time $t$ (given our observed history of prices) so that $V_t$ replicates $C(t,S_t)$ as closely as possible. Setting $\alpha_t=\frac{\partial}{\partial s}C(t,S_t)$ and $\beta_t = V_t-\alpha_tS_t$, then buying the portfolio and shorting $C$ gets a risk-less portfolio
\[d(V_t-C(t,S_t)) = r\left(V_t-S_t\frac{\partial}{\partial s}C(t,S_t)\right)dt-\left(\frac{\partial}{\partial t}+\frac{\sigma^2S_t^2}{2}\frac{\partial^2}{\partial s^2}\right)C(t,S_t)dt\]
and by arbitrage arguments, this must be equal to the risk-free rate,
\[=r(V_t-C(t,S_t))dt\ .\]
Hence, we arrive at the Black-Scholes PDE
\begin{equation}
\label{eq:BSpde}
\left(\frac{\partial}{\partial t}+\frac{\sigma^2s^2}{2}\frac{\partial^2}{\partial s^2}+rs\frac{\partial}{\partial s}-r\right)C(t,s) = 0
\end{equation}
with $C(T,s) = \psi(s)$ (recall we denote payoff function for general European claim with function $\psi(s)$).

The power of the Black -Scholes PDE is that it replicates perfectly. Observe: if $V_0 = C(0,S_0)$, then 
\[d(V_t-C(t,S_t)) = 0\qquad\forall t\leq T,\]
and so $V_T = C(T,S_T) =\psi(S_T)$. In fact, it can be shown that any contingent claim (not just Europeans) is replicable under the Black-Scholes model. Hence, the market is complete.
\section{Feynman-Kac, the EMM, \& Heat Equations}
Feynman-Kac is a probabilistic formula for solving PDEs like \eqref{eq:BSpde}. It also has financial meaning because it explicitly provides a unique equivalent martingale measure (EMM). Since we have assume no-arbitrage, the 1st Fundamental theorem of asset pricing (see Section \ref{sec:intro}) necessarily asserts the existence of an EMM. The structure of this probability measure is given to us by the Feynman-Kac formula:
\begin{proposition}
\label{prop:FC}
\textbf{(Feynman-Kac).} The solution to the Black-Scholes PDE of \eqref{eq:BSpde} is the expectation 
\[C(t,s) = e^{-r(T-t)}\mathbb E^Q[\psi(S_T)|S_t=s]\]
where $\mathbb E^Q$ is an EMM under which $e^{r(T-t)}S_t$ is a martingale,
\[dS_t = rS_tdt+\sigma S_tdW_t^Q\ ,\]
with $W_t^Q \doteq \frac{\mu-r}{\sigma}t+W_t$ being Brownian motion under the EMM.
\end{proposition}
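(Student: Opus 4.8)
The plan is to combine the two measure-change tools from Chapter \ref{chapt:brownianMotion}: the Girsanov theorem, which produces the measure $\mathbb Q$ together with the $\mathbb Q$-dynamics of $S$, and the Feynman-Kac formula (Proposition \ref{prop:feynmanKac}), which converts the conditional expectation into a PDE. In outline: first change measure so that $S$ becomes a geometric Brownian motion with drift $r$; then recognize the Black-Scholes PDE \eqref{eq:BSpde} as the Feynman-Kac PDE \eqref{eq:FCpde}--\eqref{eq:FCpdeTC} for that $\mathbb Q$-diffusion, with discount rate $q\equiv r$, zero running cost $f\equiv 0$, and terminal payoff $\psi$; finally invoke the uniqueness clause of Feynman-Kac to conclude the solution of \eqref{eq:BSpde} is exactly the stated expectation, which we have already identified with the replication price $C$.

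For the Girsanov step I would take the constant integrand $\theta_t\equiv (r-\mu)/\sigma$, form the Dolean-Dade exponential $Z_t$ of \eqref{eq:doleanExp}, and note that the Novikov condition is immediate since $\theta$ is a bounded constant and $T<\infty$; hence $Z$ is a true martingale and $d\mathbb Q/d\mathbb P|_T=Z_T$ defines an equivalent measure. Theorem \ref{thm:girsanov} then gives that $W_t^Q\doteq W_t-\int_0^t\theta_s\,ds = W_t+\tfrac{\mu-r}{\sigma}t$ is a $\mathbb Q$-Brownian motion, and substituting $dW_t = dW_t^Q-\tfrac{\mu-r}{\sigma}dt$ into $dS_t=\mu S_t\,dt+\sigma S_t\,dW_t$ collapses the drift to $rS_t$, so $dS_t=rS_t\,dt+\sigma S_t\,dW_t^Q$ and $e^{-rt}S_t$ is a $\mathbb Q$-martingale.

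For the Feynman-Kac step I would apply Proposition \ref{prop:feynmanKac} with $a(x)=rx$, $b(x)=\sigma x$, $q(x)\equiv r$, $f\equiv 0$ and terminal data $\psi$. The coefficients are linear, hence Lipschitz and of linear growth, so the $\mathbb Q$-SDE for $S$ is well-posed by Theorem \ref{thm:sdeEU}; $q$ is bounded below and $f$ is bounded. Then \eqref{eq:FCformula} reads $v(t,x)=\mathbb E^Q[e^{-\int_t^T r\,du}\psi(S_T)\mid S_t=x]=e^{-r(T-t)}\mathbb E^Q[\psi(S_T)\mid S_t=x]$, while \eqref{eq:FCpde}--\eqref{eq:FCpdeTC} with these data become $\big(\partial_t+\tfrac{\sigma^2s^2}{2}\partial_{ss}+rs\,\partial_s-r\big)v=0$ with $v(T,s)=\psi(s)$, i.e. exactly \eqref{eq:BSpde}. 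The uniqueness bullet of Proposition \ref{prop:feynmanKac} identifies this $v$ with the solution of the Black-Scholes PDE, hence with $C$.

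The hard part is the regularity mismatch between the hypotheses of Feynman-Kac and the objects at hand: the standing example $\psi(s)=(s-K)^+$ is neither twice differentiable nor compactly supported, and the resulting $v$ grows linearly in $s$ rather than being bounded, so Proposition \ref{prop:feynmanKac} does not apply verbatim. I would close this gap in one of two ways. One option is approximation: prove the claim first for $\psi$ smooth with compact support (where Feynman-Kac applies directly), then take $\psi_n\uparrow\psi$ smooth, pass to the limit in \eqref{eq:FCformula} by dominated convergence (using that $S_T$ is $\mathbb Q$-lognormal with finite mean), and transfer the PDE to the limit via interior parabolic estimates. The other, more in the spirit of what follows in these notes, is to use the explicit lognormal law of $S_T$ under $\mathbb Q$ to show directly that $v(t,s)=e^{-r(T-t)}\mathbb E^Q[\psi(S_T)\mid S_t=s]$ is smooth on $[0,T)\times(0,\infty)$, differentiate under the expectation, and verify \eqref{eq:BSpde} and the martingale property of $e^{-rt}v(t,S_t)$ by a direct It\^o computation on the $\mathbb Q$-diffusion, where setting the $dt$-coefficient to zero is precisely \eqref{eq:BSpde}. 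Everything else --- the Girsanov bookkeeping and matching coefficients in the PDE --- is routine.
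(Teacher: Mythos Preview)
Your proposal is correct and matches the paper's approach: the paper does not give a formal proof of this proposition but treats it as a direct specialization of the general Feynman-Kac formula (Proposition~\ref{prop:feynmanKac}) with the EMM supplied by the Girsanov example at the end of Chapter~\ref{chapt:brownianMotion}, and it flags exactly the regularity gap you identify (non-smooth, non-compactly-supported $\psi$) under the heading ``Non-Smooth Payoffs'' without resolving it. Your outline is in fact more complete than what the paper provides, since you also sketch the approximation and direct-verification routes for closing that gap.
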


\noindent\textbf{Non-Smooth Payoffs.} For call options, the function $\psi$ is not twice differentiable nor does it have compact support, so Proposition \ref{prop:FC} is not a direct application of Feynman-Kac as stated in Proposition \ref{prop:feynmanKac} of Chapter \ref{chapt:brownianMotion}. There needs to be a further massaging of PDE to show that the formula holds for this special case. It is quite technical, but the end result is that Feynman-Kac applies to most payoffs of financial assets for log-normal models.\\

\noindent \textbf{Uniqueness.} The uniqueness of the EMM in Proposition \ref{prop:FC} can be argued by using the uniqueness of solutions to \eqref{eq:BSpde}. The conclusion that the EMM is unique and that the market is complete.\\

\noindent\textbf{Relationship with Heat Equation.} The Black-Scholes PDE \eqref{eq:BSpde} is a type of heat equation from physics. The basic heat equation is
\[\frac{\partial}{\partial t}u(t,x) = \sigma^2\frac{\partial^2}{\partial x^2}u(t,x)\]
with some initial condition $u|_{t=0} = f$. Solutions to the heat equation are interpreted as the evolution of Brownian motion's probability distribution. In the same manner that passage of time will coincide with the diffusion of heat from a source, the heat equation can describe the diffusion of possible trajectories of Brownian motion away from their common starting point of $W_0 = 0$.

Equation \eqref{eq:BSpde} obviously has some extra term and an `$s$' in front of the 2nd derivative, but a change of variables of $\tau=T-t$ and $x=\log(s)$ leads to a representation of the solution as
\[\widetilde C(\tau,x) = C(T-\tau,e^x)\]
where $C$ solves the Black-Scholes PDE. Doing the calculus we arrive at a more basic PDE for $\widetilde C$,
\[\frac{\partial}{\partial\tau}\widetilde C(\tau,x) = \frac{\sigma^2}{2}\frac{\partial^2}{\partial x^2}\widetilde C(\tau,x)+\left(b\frac{\partial}{\partial x}-r\right)\widetilde C(\tau,x)\]
with initial condition $C(0,x) = \psi(e^x)$, and with $b=\frac{2r-\sigma^2}{2}$. Hence Black-Scholes is a heat equation with drift $b\frac{\partial}{\partial x}\widetilde C(\tau,x)$, and decay $r\widetilde C(\tau,x)$.
\section{The Black-Scholes Call Option Formula}
Let $\psi(s) = (s-K)^+$. From Feynman-Kac we have
\[C(t,s) = e^{-r(T-t)}\mathbb E^Q[(S_T-K)^+|S_t=s]\]
with $dS_t = rS_tdt+\sigma S_tdW_t^Q$. Through a verification with It\^o's Lemma we can see that underlying's value at time of maturity can be written as a log-normal random variable,
\[S_T = S_t\exp\left(\left(r-\frac 12\sigma^2\right)(T-t)+\sigma (W_T-W_t)\right)\ .\]
And so $\log(S_T/S_t)$ is in fact normally distributed under the risk-neutral measure,
\[\log(S_T/S_t)\sim\mathcal N\left( \left(r-\frac 12\sigma^2\right)(T-t),\sigma^2(T-t)\right)\ .\]
Hence, we compute the expectation for $\psi(s) = (s-K)^+$,

\[\mathbb E^Q\{(S_T-K)^+|S_t=s\}\]

\[ =\underbrace{ \frac{S_t}{\sqrt{2\pi\sigma^2(T-t)}}\int_{\log(K/S_t)}^\infty e^xe^{-\frac12\left(x- \left(r-\frac 12\sigma^2\right)(T-t)\right)^2/(\sigma^2(T-t))}dx}_{=(\dagger)}\]

\[-\underbrace{K\mathbb Q(\log(S_T/S_t)>\log(K/S_t))}_{=(\star)} \]
where $\mathbb Q$ is the risk-neutral probability measure.\\

\noindent $\mathbf{(\dagger).}$ First compute $(\dagger)$ (W.L.O.G. for $t=0$):

\[(\dagger) = \frac{S_0}{\sqrt{2\pi\sigma^2 T}}\int_{\log(K/S_0)}^\infty e^x e^{-\frac 12\left(\frac{x-(r-\frac 12\sigma^2)T}{\sigma\sqrt T}\right)^2}dx\]
\[= \frac{S_0}{\sqrt{2\pi\sigma^2 T}}\int_{\log(K/S_0)}^\infty  e^{-\frac{1}{2\sigma^2T}\left(-2x^2\sigma^2T+x^2-2x(r-.5\sigma^2)T+((r-.5\sigma^2)T)^2\right)}dx\]
\[= \frac{S_0}{\sqrt{2\pi\sigma^2 T}}\int_{\log(K/S_0)}^\infty  e^{-\frac{1}{2\sigma^2T}\left(x^2-2x(r+.5\sigma^2)T+((r-.5\sigma^2)T)^2\right)}dx\]
\[= \frac{S_0e^{rT}}{\sqrt{2\pi\sigma^2 T}}\int_{\log(K/S_0)}^\infty  e^{-\frac{1}{2\sigma^2T}\left(x^2-(r+.5\sigma^2)T\right)^2}dx\]

change of variables $v = \frac{x-(r+.5\sigma^2)T}{\sigma\sqrt T}$, $dv = dx/(\sigma\sqrt T)$, so that

\[(\dagger) =  \frac{S_0e^{rT}}{\sqrt{2\pi}}\int_{(\log(K/S_0)-(r+.5\sigma^2)T)/(\sigma\sqrt T)}^\infty  e^{\frac 12 v^2}dv\]

\[= S_0e^{rT}\left(1 - \frac{1}{\sqrt{2\pi}}\int_{-\infty}^{(\log(K/S_0)-(r+.5\sigma^2)T)/(\sigma\sqrt T)} e^{\frac 12 v^2}dv\right)\]

\[=S_0e^{rT}\left(1 - N(-d_1)\right)\]
where $d_1 = \frac{\log(S_0/K)+(r+.5\sigma^2)T}{\sigma\sqrt T}$ and $N(\cdot)$ is the standard normal CDF. But the normal CDF has the property that $N(-x) = 1-N(x)$, so 
\[(\dagger) = S_0e^{rT}N(d_1)\ .\]

\noindent $\mathbf{(\star)}.$ Then, computing $(\star)$ is much simpler,
\[(\star) = K\mathbb Q\left(\log(S_T/S_0)\geq \log(K/S_0)\right)\]
\[=K\mathbb Q\left(\frac{\log(S_T/S_0)-(r-.5\sigma^2)T}{\sigma\sqrt T}\geq\frac{ \log(K/S_0)-(r-.5\sigma^2)T}{\sigma\sqrt T}\right)\]
\[=K\left(1 - \mathbb Q\left(\frac{\log(S_T/S_0)-(r-.5\sigma^2)T}{\sigma\sqrt T}\leq\frac{ \log(K/S_0)-(r-.5\sigma^2)T}{\sigma\sqrt T}\right)\right)\]

\[=K\left(1-N(-d_2) \right)= KN(d_2)\]
where $d_2 =\frac{ \log(S_0/K)+(r-.5\sigma^2)T}{\sigma\sqrt T}=d_1-\sigma\sqrt T$. Hence, we have the Black-Scholes formula for a European Call Option,

\begin{proposition}
\label{prop:BScall}
\textbf{(Black-Scholes Call Option Formula).} The call option on $S_T$ with strike $K$ at time $t$ with price $S_t$ is given by

\[C(t,S_t) = S_tN(d_1)-Ke^{-r(T-t)}N(d_2)\]
where $N(\cdot)$ is the standard normal CDF and
\begin{align*}
&d_1  = \frac{\log(S_t/K)+(r+.5\sigma^2)(T-t)}{\sigma\sqrt{T-t}} \\
&d_2 = d_1-\sigma\sqrt{T-t}\ .
\end{align*}
\end{proposition}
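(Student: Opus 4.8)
The plan is to evaluate the risk-neutral expectation directly rather than to invoke Proposition \ref{prop:feynmanKac} verbatim, since the call payoff $\psi(s)=(s-K)^+$ is neither twice differentiable nor compactly supported. Starting from the pricing formula of Proposition \ref{prop:FC}, $C(t,S_t)=e^{-r(T-t)}\mathbb E^Q[(S_T-K)^+\mid S_t]$, I would first record the log-normal law of $S_T$ under $\mathbb Q$: since $dS_t=rS_tdt+\sigma S_tdW_t^Q$, applying It\^o's lemma to $\log S_t$ gives $S_T=S_t\exp((r-\tfrac12\sigma^2)(T-t)+\sigma(W_T^Q-W_t^Q))$, so $\log(S_T/S_t)\sim\mathcal N((r-\tfrac12\sigma^2)(T-t),\,\sigma^2(T-t))$. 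By stationarity of Brownian increments it suffices to treat $t=0$ and restore the general case at the end by replacing $T$ with $T-t$ and $S_0$ with $S_t$.

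Next I would split $\mathbb E^Q[(S_T-K)^+]=\mathbb E^Q[S_T\mathbbm 1_{\{S_T>K\}}]-K\,\mathbb Q(S_T>K)$ and handle the two terms separately, as in the computations $(\dagger)$ and $(\star)$ preceding the statement. Writing $X=\log(S_T/S_0)$ with density proportional to $\exp(-(x-(r-\tfrac12\sigma^2)T)^2/(2\sigma^2T))$, the first term is $S_0\int_{\log(K/S_0)}^\infty e^x(\text{density})\,dx$; completing the square in the exponent — the factor $e^x$ shifts the Gaussian mean from $(r-\tfrac12\sigma^2)T$ to $(r+\tfrac12\sigma^2)T$ and produces an overall factor $e^{rT}$ — reduces this, after the substitution $v=(x-(r+\tfrac12\sigma^2)T)/(\sigma\sqrt T)$ and the identity $N(-x)=1-N(x)$, to $S_0e^{rT}N(d_1)$. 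The second term is a plain Gaussian tail: standardizing $X$ gives $K\,\mathbb Q(X>\log(K/S_0))=K(1-N(-d_2))=KN(d_2)$ with $d_2=d_1-\sigma\sqrt T$.

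Finally I would multiply through by $e^{-rT}$: the $e^{rT}$ generated in the first term cancels, leaving $C(0,S_0)=S_0N(d_1)-Ke^{-rT}N(d_2)$, and re-inserting $T\to T-t$, $S_0\to S_t$ yields the claimed formula. I expect the only genuine obstacle to be the completing-the-square bookkeeping in the first term, where one must correctly track that the extra $e^x$ simultaneously rescales the integrand back into a probability density and contributes the constant $e^{rT}$; everything else is either the stated Feynman-Kac representation or routine manipulation of the normal CDF. If one preferred to remain inside the PDE framework, the alternative is to verify by direct differentiation that $S_tN(d_1)-Ke^{-r(T-t)}N(d_2)$ solves \eqref{eq:BSpde} with terminal condition $\lim_{t\uparrow T}C(t,s)=(s-K)^+$, but this is more delicate because of the singularity of the terminal data, so I would retain it only as a consistency check.
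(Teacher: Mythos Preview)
Your proposal is correct and follows essentially the same route as the paper: start from the Feynman--Kac representation, record the log-normal law of $S_T$ under $\mathbb Q$, reduce to $t=0$, split the expectation into the two pieces $(\dagger)$ and $(\star)$, complete the square in the first and standardize in the second, then reassemble and restore general $t$. The only addition is your closing remark about verifying the formula directly against the PDE \eqref{eq:BSpde}, which the paper does not pursue.
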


A plot of the Black-Scholes call option price with $K=50$, $r=.02$, $T=3/12$, and $\sigma=.2$ with varying $S_0$ is shown in Figure \ref{fig:callPrice}. 

\begin{figure}[htbp] 
   \centering
   \includegraphics[width=5in]{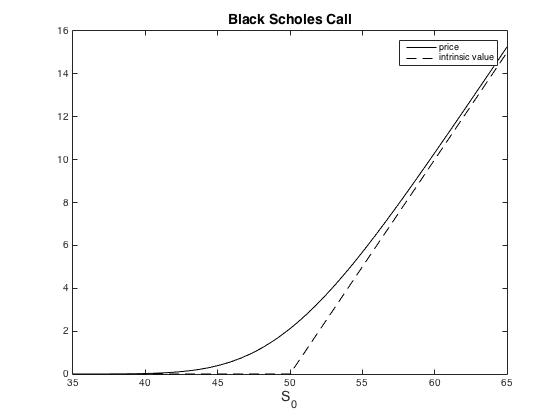} 
   \caption{The Black-Scholes call price with $K=50$, $r=.02$, $T=3/12$, and $\sigma=.2$. Intrinsic value refers to the payoff if exercised now, $(S_0-K)^+$. }
   \label{fig:callPrice}
\end{figure}

\section{Put-Call Parity and the Put Option Formula}
It is straight forward to verify the relationship
\begin{equation}
\label{eq:PCterminal}
(S_T-K)^+-(K-S_T)^+ = S_T-K\ .
\end{equation}
Then applying the risk-neutral expectionan $e^{-r(T-t)}\mathbb E_t^Q$ to both sides of \eqref{eq:PCterminal} to get the put-call parity,
\begin{equation}
\label{eq:PCparity}
C(t,S_t)-P(t,S_t) = S_t-Ke^{-r(T-t)}
\end{equation}
where $C(t,S_t)$ is the price of a European call option and $P(t,S_t)$ the price of a European put option with the same strike. From put-call parity we have
\[P(t,s) = C(t,s)+Ke^{-r(T-t)}-s\]
\[=-Ke^{-r(T-t)}(N(d_2)-1)+s(N(d_1)-1)\]
\[=Ke^{-r(T-t)}N(-d_2)-sN(-d_1)\ ,\]
because $1-N(x) = N(-x)$ for any $x\in\mathbb R$.

\begin{proposition}
\label{prop:BSput}
\textbf{(Black-Scholes Put Option Formula).} The put option on $S_T$ with strike $K$ at time $t$ with price $S_t$ is given by

\[P(t,S_t) = Ke^{-r(T-t)}N(-d_2)-S_tN(-d_1)\]
where 
\begin{align*}
&d_1  = \frac{\log(S_t/K)+(r+.5\sigma^2)(T-t)}{\sigma\sqrt{T-t}} \\
&d_2 = d_1-\sigma\sqrt{T-t}\ .
\end{align*}
\end{proposition}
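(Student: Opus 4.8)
The plan is to derive the Black-Scholes put formula directly from put-call parity, which has already been established in equation \eqref{eq:PCparity}, combined with the Black-Scholes call formula of Proposition \ref{prop:BScall}. This is the cleanest route: no new stochastic-calculus computation is required, only algebraic rearrangement together with the elementary symmetry property of the standard normal CDF, $N(-x) = 1 - N(x)$ for all $x \in \mathbb{R}$.

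First I would recall put-call parity in the form $P(t,s) = C(t,s) + Ke^{-r(T-t)} - s$, obtained by rearranging \eqref{eq:PCparity}. Next I would substitute the call formula $C(t,s) = sN(d_1) - Ke^{-r(T-t)}N(d_2)$ from Proposition \ref{prop:BScall}, where $d_1$ and $d_2$ are exactly as defined there, so that
\[
P(t,s) = sN(d_1) - Ke^{-r(T-t)}N(d_2) + Ke^{-r(T-t)} - s = Ke^{-r(T-t)}\bigl(1 - N(d_2)\bigr) - s\bigl(1 - N(d_1)\bigr).
\]
Then I would apply $1 - N(d_2) = N(-d_2)$ and $1 - N(d_1) = N(-d_1)$ to conclude $P(t,s) = Ke^{-r(T-t)}N(-d_2) - sN(-d_1)$, with $d_1,d_2$ unchanged, which is precisely the claimed formula. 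Setting $s = S_t$ gives the statement.

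For completeness I would also note that put-call parity itself follows from taking the discounted risk-neutral expectation $e^{-r(T-t)}\mathbb{E}^Q_t[\,\cdot\,]$ of the pointwise terminal identity \eqref{eq:PCterminal}, namely $(S_T - K)^+ - (K - S_T)^+ = S_T - K$, using that $e^{-r(T-t)}\mathbb{E}^Q_t[S_T] = S_t$ (the martingale property of the discounted price under the EMM, from Proposition \ref{prop:FC}) and $e^{-r(T-t)}\mathbb{E}^Q_t[K] = Ke^{-r(T-t)}$.

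There is essentially no hard step here; the only point requiring a word of care is that the same $d_1$ and $d_2$ appear in the put formula as in the call formula — one must resist the temptation to introduce sign changes in their definitions, since the sign changes live entirely inside the arguments of $N$. If one instead wanted a self-contained derivation not invoking parity, the alternative would be to compute $e^{-r(T-t)}\mathbb{E}^Q[(K - S_T)^+ \mid S_t = s]$ directly by the same log-normal integration used for the call; that is strictly more work and would reproduce the same two Gaussian-integral computations $(\dagger)$ and $(\star)$ with the integration region reversed, so the parity argument is preferable.
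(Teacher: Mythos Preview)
Your proposal is correct and follows essentially the same route as the paper: rearrange put-call parity to $P(t,s)=C(t,s)+Ke^{-r(T-t)}-s$, substitute the call formula from Proposition~\ref{prop:BScall}, and apply $1-N(x)=N(-x)$ to obtain $Ke^{-r(T-t)}N(-d_2)-sN(-d_1)$. The paper's derivation is exactly this three-line algebraic manipulation, so there is nothing to add.
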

The Black-Scholes put option price for $K=50$, $r=.02$, $T=3/12$, and $\sigma=.2$ and varying $S_0$ is shown in Figure \ref{fig:putPrice}.

\begin{figure}[htbp] 
   \centering
   \includegraphics[width=5in]{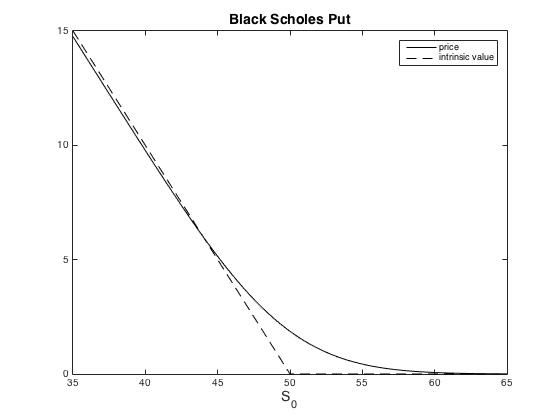} 
   \caption{The Black-Scholes put option price for $K=50$, $r=.02$, $T=3/12$, and $\sigma=.2$. Intrinsic value refers to the payoff if exercised now, $(K-S_0)^+$. }
   \label{fig:putPrice}
\end{figure}

\section{Options on Futures and Stocks with Dividends}
\label{sec:dividends}
This section will explain how to compute European derivative prices on stocks with a continuously paying dividend rate, and on future prices. There is some technical issues with stocks paying dividends at a discrete times, but in the case of European options it is merely of matter of considering the future price. Dividends can be an issue in the Black-Scholes theory, particularly because they will determine whether or not Black-Scholes applies for American options (see Section \ref{sec:americans}).\\

\noindent\textbf{Continuous Dividends.} Suppose that a European option with payoff $\psi(S_T)$ is being priced in a market with 
\begin{eqnarray*}
\frac{dS_t}{S_t}&=&(\mu-q)dt+\sigma dW_t\\
dB_t&=&rB_tdt
\end{eqnarray*} 
where $q\geq 0$ is the dividend rate. Self-financing in this case has dynamics
\[dV_t = \alpha_tdS_t + r(V_t-\alpha_tS_t)dt+q\alpha_tS_tdt\ .\]
The replicating strategy from the non-dividend case applies to obtain $\alpha_t = \frac{\partial}{\partial s}C(t,S_t)$, but the arbitrage argument leads to a different equation,
\begin{equation}
\label{eq:BSpdeDividends}
\left(\frac{\partial}{\partial t}+\frac{\sigma^2s^2}{2}\frac{\partial^2}{\partial s^2}+(r-q)s\frac{\partial}{\partial s}-(r-q)\right)C(t,s) = qC(t,s)
\end{equation}
with $C(T,s) = \psi(s)$. Equation \eqref{eq:BSpdeDividends} is the Black-Scholes PDE with dividends, and is solved by 
\[C(t,S_t) = e^{-q(T-t)}C^1(t,S_t)\]
where $C^1(t,S_t)$ solves the Black-Scholes equation of \eqref{eq:BSpde} with $r$ replaced by $r-q$. For example, a European call option on a stock with dividends:
\begin{proposition}\textbf{(Call Option on Stock with Dividends).}
\[C(t,S_t) = S_te^{-q(T-t)}N(d_1)-Ke^{-r(T-t)}N(d_2)\]
where 
\begin{align*}
&d_1  = \frac{\log(S_t/K)+(r-q+.5\sigma^2)(T-t)}{\sigma\sqrt{T-t}} \\
&d_2 = d_1-\sigma\sqrt{T-t}\ .
\end{align*}
\end{proposition}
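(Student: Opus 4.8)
The plan is to reduce to the dividend-free Black--Scholes call formula of Proposition \ref{prop:BScall}, which has already been established. The excerpt already asserts that the dividend PDE \eqref{eq:BSpdeDividends} with terminal data $C(T,s)=\psi(s)$ is solved by $C(t,s)=e^{-q(T-t)}C^1(t,s)$, where $C^1$ solves the ordinary Black--Scholes equation \eqref{eq:BSpde} with $r$ replaced by $r-q$ and the same terminal data. So the first step is to record this reduction via the one-line verification: substituting $C=e^{-q(T-t)}C^1$ into the left-hand side of \eqref{eq:BSpdeDividends}, using the PDE satisfied by $C^1$, produces exactly $qC$ on the right-hand side, and the exponential prefactor equals $1$ at $t=T$, so both the equation and the terminal condition are matched.

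Second, I would specialize to the call payoff $\psi(s)=(s-K)^+$ and invoke Proposition \ref{prop:BScall} with the interest rate $r-q$ in place of $r$, obtaining
\[
C^1(t,S_t)=S_t N(\tilde d_1)-Ke^{-(r-q)(T-t)}N(\tilde d_2),
\]
where $\tilde d_1=\frac{\log(S_t/K)+(r-q+.5\sigma^2)(T-t)}{\sigma\sqrt{T-t}}$ and $\tilde d_2=\tilde d_1-\sigma\sqrt{T-t}$. The point is that $\tilde d_1$ and $\tilde d_2$ are literally the $d_1$ and $d_2$ in the statement of the proposition, because the substitution $r\mapsto r-q$ is exactly the change that appears in those formulas. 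Third, I multiply through by $e^{-q(T-t)}$ and collect exponentials:
\[
C(t,S_t)=e^{-q(T-t)}C^1(t,S_t)=S_t e^{-q(T-t)}N(d_1)-Ke^{-q(T-t)}e^{-(r-q)(T-t)}N(d_2),
\]
and since $e^{-q(T-t)}e^{-(r-q)(T-t)}=e^{-r(T-t)}$ this is precisely the claimed formula.

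An equivalent route, if one prefers to bypass the PDE reduction, is to rewrite \eqref{eq:BSpdeDividends} in the form $\partial_t C+\tfrac{\sigma^2 s^2}{2}\partial_{ss}C+(r-q)s\,\partial_s C-rC=0$ and apply Feynman--Kac (Proposition \ref{prop:FC}): then $C(t,s)=e^{-r(T-t)}\mathbb E^Q[(S_T-K)^+\mid S_t=s]$ with $dS_t=(r-q)S_t\,dt+\sigma S_t\,dW_t^Q$, so that $\log(S_T/S_t)\sim\mathcal N((r-q-\tfrac12\sigma^2)(T-t),\sigma^2(T-t))$, and one reruns verbatim the two integral computations $(\dagger)$ and $(\star)$ from the proof of Proposition \ref{prop:BScall} with $r$ shifted to $r-q$ in the drift while the discount factor remains $e^{-r(T-t)}$; completing the square in $(\dagger)$ then throws off the extra $e^{(r-q)(T-t)}$ and leaves the $S_t e^{-q(T-t)}N(d_1)$ term.

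I do not expect a genuine obstacle: the result is a corollary of the already-proven call formula together with the bookkeeping of the dividend discount factor. The only subtlety — shared with the non-dividend case and already flagged in the ``Non-Smooth Payoffs'' discussion — is that $(s-K)^+$ is neither $C^2$ nor compactly supported, so strictly speaking the Feynman--Kac step appeals to the standard mollification/approximation argument rather than to Proposition \ref{prop:feynmanKac} applied literally; this is routine and I would not belabor it.
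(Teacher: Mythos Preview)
Your proposal is correct and follows essentially the same approach as the paper: the paper establishes the reduction $C(t,s)=e^{-q(T-t)}C^1(t,s)$ with $C^1$ solving the Black--Scholes equation at rate $r-q$, and then simply states the call formula as the resulting example, which is exactly the substitution-and-multiply argument you carry out. Your write-up is more detailed than the paper's (which gives no explicit proof beyond the reduction), and your alternative Feynman--Kac route is also valid and consistent with the paper's general methodology.
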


\noindent\textbf{Futures.} Now we turn out attention to futures. Suppose that the spot price on a commodity is given by 
\[\frac{dS_t}{S_t}= \mu dt+\sigma dW_t\ .\]
From arbitrage arguments we know that the future price on $S_T$ is
\[F_{t,T} \doteq S_te^{r(T-t)}\]
with $F_{T,T}=S_T$. However, $S_te^{r(T-t)}$ is a martingale under the EMM, and so $F_{t,T}$ is also a martingale with 
\[\frac{dF_{t,T}}{F_{t,T}}=\sigma dW_t^Q\ ,\]
and the price of a derivative in terms of $F_{t,T}$ is $C(t,F_{t,T}) = e^{-r(T-t)}\widetilde C(t,F_{t,T})$ where $\widetilde C$ satisfies
\begin{equation}
\label{eq:BSpdeFutures}
\left(\frac{\partial}{\partial t}+\frac{\sigma^2x^2}{2}\frac{\partial^2}{\partial x^2}\right)\widetilde C(t,x) = 0
\end{equation}
with $\widetilde C(T,x) = \psi(x)$. For example, the call option on the future:
\begin{proposition}\textbf{(Call Option on Future).}
\[C(t,F_{t,T}) = e^{-r(T-t)}\left(F_{t,T}N(d_1)-KN(d_2)\right)\]
where 
\begin{align*}
&d_1  = \frac{\log(F_{t,T}/K)+.5\sigma^2(T-t)}{\sigma\sqrt{T-t}} \\
&d_2 = d_1-\sigma\sqrt{T-t}\ .
\end{align*}
\end{proposition}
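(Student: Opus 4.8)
The plan is to reduce the statement to the Black-Scholes call formula already proved in Proposition \ref{prop:BScall}. Recall from the derivation of \eqref{eq:BSpdeFutures} that the derivative price expressed in the futures price factors as $C(t,F_{t,T}) = e^{-r(T-t)}\widetilde C(t,F_{t,T})$, where $\widetilde C(t,x)$ solves
\[\left(\frac{\partial}{\partial t}+\frac{\sigma^2x^2}{2}\frac{\partial^2}{\partial x^2}\right)\widetilde C(t,x) = 0\]
with terminal data $\widetilde C(T,x)=\psi(x)=(x-K)^+$. The key observation is that this is exactly the Black-Scholes PDE \eqref{eq:BSpde} with the interest-rate parameter set to $r=0$ (equivalently, by the discussion in Section \ref{sec:dividends}, a stock with dividend yield equal to $r$ has $\mathbb Q$-dynamics $dS/S=\sigma\,dW^Q$, the same as the forward price). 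So I would apply Proposition \ref{prop:BScall} with $r\mapsto 0$ and $S_t\mapsto x$ to conclude $\widetilde C(t,x)=xN(d_1)-KN(d_2)$ with $d_1=\frac{\log(x/K)+.5\sigma^2(T-t)}{\sigma\sqrt{T-t}}$ and $d_2=d_1-\sigma\sqrt{T-t}$, and then substitute $x=F_{t,T}$ and multiply by $e^{-r(T-t)}$ to obtain the claimed formula.

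If instead one wants a self-contained computation, I would first invoke Feynman-Kac (Proposition \ref{prop:FC}, or the $r=0$ case of Proposition \ref{prop:feynmanKac}) to write $\widetilde C(t,x) = \mathbb E^Q[(X_T-K)^+\mid X_t = x]$ where $dX_s=\sigma X_s\,dW_s^Q$, identifying $X$ with the futures price consistent with $dF_{t,T}/F_{t,T}=\sigma\,dW_t^Q$ and $F_{T,T}=S_T$. By It\^o's lemma $X_T=x\exp(-\tfrac12\sigma^2(T-t)+\sigma(W_T^Q-W_t^Q))$, so $\log(X_T/x)\sim\mathcal N(-\tfrac12\sigma^2(T-t),\sigma^2(T-t))$ under $\mathbb Q$. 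Then I would split $\mathbb E^Q[(X_T-K)^+\mid X_t=x]=(\dagger)-(\star)$ with $(\dagger)=\mathbb E^Q[X_T\indicator{X_T>K}\mid X_t=x]$ and $(\star)=K\,\mathbb Q(X_T>K\mid X_t=x)$, exactly mirroring the call-option calculation. Completing the square in the Gaussian integral for $(\dagger)$ yields $xN(d_1)$ — with no $e^{rT}$ prefactor this time, precisely because the log-return has zero drift — while $(\star)$ is a direct normal-CDF evaluation giving $KN(d_2)$.

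The only genuine subtlety is the same one noted in the Non-Smooth Payoffs remark following Proposition \ref{prop:FC}: $\psi(x)=(x-K)^+$ is neither twice differentiable nor compactly supported, so the verification form of Feynman-Kac does not apply verbatim, and likewise the drift $\sigma x$ degenerates at $x=0$. The standard remedy is to approximate $\psi$ by smooth, compactly supported payoffs (or to truncate the state space and localize), establish the formula for the approximants, and pass to the limit by dominated convergence using the integrability of the log-normal $X_T$. I expect this limiting/regularity step, rather than the Gaussian integral or the reduction itself, to be the part requiring the most care.
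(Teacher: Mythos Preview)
Your proposal is correct and matches the paper's approach: the paper sets up the factorization $C(t,F_{t,T})=e^{-r(T-t)}\widetilde C(t,F_{t,T})$ with $\widetilde C$ solving \eqref{eq:BSpdeFutures}, and then simply states the proposition as an example, leaving the reader to recognize \eqref{eq:BSpdeFutures} as the Black-Scholes PDE with $r=0$ and apply Proposition~\ref{prop:BScall}. Your write-up makes this reduction explicit, and your optional self-contained Gaussian computation together with the remark on the non-smooth payoff is more detail than the paper itself provides.
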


\noindent\textbf{The Similarities.} Futures contracts certainly have fundamental differences from stocks paying dividends, and vice versa. But the Black-Scholes PDE for pricing options on futures is like that for a stock with dividend rate $r$. Alternatively, the future on a dividend paying stock is
\[F_{t,T} = S_te^{(r-q)(T-t)}\]
which is a non-dividend paying asset and can hence be priced using equation \eqref{eq:BSpdeFutures}.\\

\noindent\textbf{Discrete-Time Dividends.} This interpretation of dividend-paying stocks as futures is useful when dividends are paid at discrete times. Discrete time dividends are described as
\[\log(S_t/S_0) = \left(r-\frac 12 \sigma^2\right)t+\sigma W_t^Q-\sum_{i=0}^{n(t)}\delta_i\]
where $\delta_i$ is a proportional dividend rate, and $n(t)$ is the number of dividends paid up to time $t$. The future price of dividend paying stock $S_T$ is
\[F_{t,T} = S_t\exp\left(\left(r-\frac 12\sigma^2\right)(T-t)-\sum_{i=n(t)+1}^{n(T)}\delta_i\right)\ .\]
Call optionΠon $F_{T,T}$ can be priced with equation \eqref{eq:BSpdeFutures}.

\chapter{The Black-Scholes Greeks}
\label{sec:greeks}
The Greeks a set of letters labeling the quantities of risk associate with small changes in various inputs and model parameters. The Greeks have importance in risk management where hedging and risks are determined by how much/little of the Greeks are exposed on a portfolio. There are 5 main Greeks associated with the Black-Scholes, one of which has already been instrumental in setting up the hedging portfolio, namely the $\Delta$. The other Greeks can be equally as important in determining the quality of a hedge.

\section{The Delta Hedge}

Recall the derivation of the Black-Scholes PDE of equation \eqref{eq:BSpde} on page \pageref{eq:BSpde}. In particular, recall the hedging allocation in the underlying was the first derivative. This is the Black-Scholes $\Delta$,
\[\Delta(t,s) \doteq \frac{\partial}{\partial s}C(t,s)\ .\]
In general, the hedge in the underlying what's referred to when someone talks about being long, short or neutral in $\Delta$. With continuos trading the hedging portfolio is a perfect replication because it continuously rebalances to remain $\Delta$-neutral. A long position in $\Delta $ would be when the hedging portfolio holds the underlying in excess of the $\Delta$ (e.g. a covered call), and a short position in $\Delta $ would be a hedge with less (e.g. a naked call).

For the European call and put options, the Black-Scholes $\Delta$ is

\begin{align*}
&\Delta_{call}(t,s) = N(d_1)\\
&\Delta_{put}(t,s) = -N(-d_1)=\Delta_{call}(t,s)-1
\end{align*}
which are obtained by differentiating the formulas in Propositions \ref{prop:BScall} and \ref{prop:BSput}, respectively. Plots of the $\Delta$'s for varying $s$ are shown in Figure \ref{fig:deltas}. Notice that all the action in the $\Delta$-hedge occurs when the underlying price is near the strike price. Intuitively, a call or put that is far money is efficiently hedged by either holding a share of the underling or simply holding the cash, and the risk is low because there very little probability of the underlying changing dramatically enough to effect your position. Hence, the $\Delta$-hedge is like a covered call/put when the derivative is far from the money.

\begin{figure}[htbp] 
   \centering
   \includegraphics[width=5in]{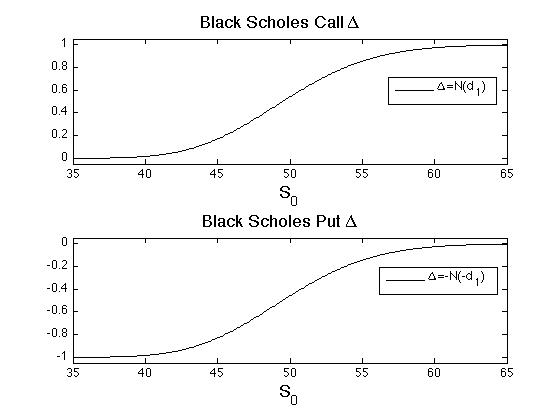} 
\caption{\textbf{Top:} The Black-Scholes $\Delta$ for a European option with strike $K=50$, time to maturity $T= 3/12$, interest rate $r=.02$, and volatility $\sigma = .2$. \textbf{Bottom:} The put option $\Delta$.}
\label{fig:deltas}
\end{figure}

\section{The Theta}
The next Greek to discuss is the $\Theta$, which measures the sensitivity to time. In particular, 
\[\Theta(t,s) = \frac{\partial}{\partial t}C(t,s)\ .\] 
The financial interpretation is that $\Theta$ measures the decay of the time value of the security. Intrinsic value of a security is fairly clear to understand: it is the value of the derivative at today's price of the underlying. However, derivatives have an element of \textit{time value} on top of the intrinsic value, and in cases where the intrinsic value is zero (e.g. an out-of-the-money option) the derivative's entire cost is its time value. 

For option contracts, time value quantifies the cost to be paid for having the option to buy/sell, as opposed to taking a position that is long/short the underlying with no option involved. In most cases the time value is positive and decreases with time, hence $\Theta<0$, but there are contracts (such as far in-the-money put options) that have positive $\Theta$. The $\Theta$'s for European calls and puts are 
\begin{align}
\label{eq:callTheta}
&\Theta_{call}(t,s)=-\frac{\sigma^2S_t^2}{2\sqrt{T-t}}N'(d_1)-Ke^{-r(T-t)}N(d_2)\\
\label{eq:putTheta}
&\Theta_{put}(t,s)=-\frac{\sigma^2S_t^2}{2\sqrt{T-t}}N'(d_1)+Ke^{-r(T-t)}N(-d_2)
\end{align}
and are shown in Figure \ref{fig:thetas}. To understand why a far in-the-money put option has positive $\Theta$, look at Figure \ref{fig:posTheta} and realize that the option price must with rise with time if it is below the intrinsic value.

\begin{figure}[htbp] 
   \centering
   \includegraphics[width=5in]{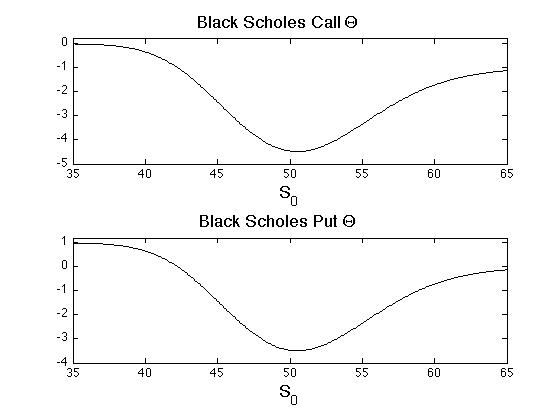} 
  \caption{\textbf{Top:} The Black-Scholes $\Theta$ for a European option with strike $K=50$, time to maturity $T= 3/12$, interest rate $r=.02$, and volatility $\sigma = .2$. \textbf{Bottom:} The put option $\Theta$.}
\label{fig:thetas}
\end{figure}

\begin{figure}[htbp] 
   \centering
   \includegraphics[width=5in]{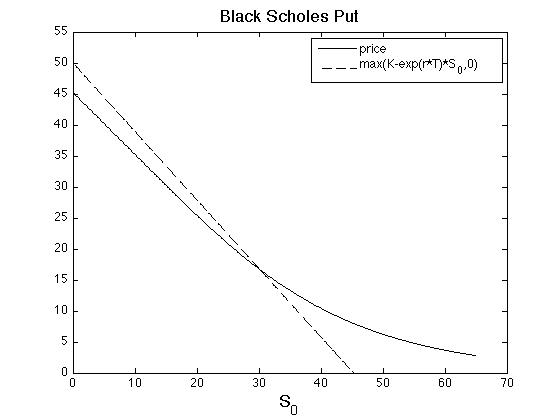} 
\caption{The Black-Scholes price of a European put option with strike $K=50$, time to maturity $T= 5$, interest rate $r=.02$, and volatility $\sigma = .2$. Notice if the price is low then the time value of exercise today is higher than the value of the option. This means that $\Theta$ of far in-the-money put options is positive.}
\label{fig:posTheta}
\end{figure}

In terms of hedging, a short position in a derivative is said to be short in $\Theta$, and a long position in the derivative is also long the $\Theta$. Hence, a $\Delta$-neutral hedge will exposed to $\Theta$, and a decrease in $\Theta$ will benefit the short position because it means that the time value of the derivative decreases at a faster rate.
\section{The Gamma}

The $\Gamma$ of a security is the sensitivity of its $\Delta$-hedge to changes in the underlying, and there a handful of ways in which the $\Gamma$ explains hedging and risk management. In a nutshell, the $\Gamma$ is the higher order sensitivity of the hedging portfolio, and is essentially a convexity term for higher order hedging. 

Let $\Delta t$ be the amount of time that goes by in between hedge adjustments. From It\^o's lemma we have the following approximations of the derivative dynamics and a self-financing portfolio:
\begin{align*}
&\Delta C_t\simeq\left( \Theta_t +\frac{\sigma^2S_t^2}{2}\Gamma_t\right)\Delta t+\Delta_t\cdot\Delta S\\
&\Delta V_t\simeq \Delta_t\cdot\Delta S_t+r\left(V_t-\Delta_t\cdot S_t\right)\Delta t
\end{align*}
where $\Delta C_t = C_{t+\Delta t}-C_t$ and $\Delta S_t = S_{t+\Delta t}-S_t$, and where $\Theta_t,~\Delta_t$ and $\Gamma_t$ are the derivative's Greeks. Subtracting one from the other gets the (approximate) dynamics of a long position in the portfolio and short the derivative,

\begin{equation}
\label{eq:deltaPrem}
\Delta V_t-\Delta C_t \simeq \underbrace{r\left(V_t-\Delta_t\cdot S_t\right)\Delta t}_{\hbox{earned in bank}}+\underbrace{\left( -\Theta_t -\frac{\sigma^2S_t^2}{2}\Gamma_t\right)\Delta t}_{\hbox{premium over risk-free}}
\end{equation}
and so while the $\Delta$-hedge replicates perfectly in continuous time, there will be error if this hedge is readjusted in discrete time. However, $-\Theta_t -\frac{\sigma^2S_t^2}{2}\Gamma_t>0$ corresponds to the premium over the risk-free rate earned by the hedging portfolio. Often times $\Gamma>0$ and $\Theta<0$, and so the risk taken by rebalancing a $\Delta$-hedge in discrete time is compensated with lower (more negative) $\Theta$ and lower $\Gamma$.

For European call and put options, the $\Gamma $ is
\begin{equation}
\label{eq:gamma}
\Gamma(t,s)  =\frac{N'(d_1)}{S_t\sigma\sqrt{T-t}}
\end{equation}
which proportional to the probability density function of $\log(S_T/S_t)$ and is always positive for long positions. Furthermore, combining the $\Gamma$ in \eqref{eq:gamma} with equation \eqref{eq:callTheta} we find the premium of equation \eqref{eq:deltaPrem} is 
\[-\Theta_{call}(t,s) -\frac{\sigma^2S_t^2}{2}\Gamma(t,s) = Ke^{-r(T-t)}N(d_2)>0\ ,\]
and so the $\Delta$-hedge for a European call option earns a premium over the risk-free rate. However, the $\Delta$-hedge of the put option has a negative premium
\[-\Theta_{put}(t,s) -\frac{\sigma^2S_t^2}{2}\Gamma(t,s) = -Ke^{-r(T-t)}N(-d_2)<0\ ,\]
which means the portfolio that longs the put option and shorts the $\Delta$-hedge will earn a return over the risk-free rate.

If $\Delta$-hedging is too risky then the $\Gamma$ is used in higher order hedging, namely the $\Delta$-$\Gamma$ hedge. The idea is to consider a second derivative security $C'(t,s)$ with which to hedge. A hedge that is both $\Delta$-neutral and $\Gamma$-neutral is found by solving the equations
\begin{align*}
&C(t,S_t)= \alpha_t S_t+\beta_tC'(t,S_t)+\eta_t\\
&\Delta(t,S_t)=\alpha_t +\beta_t\Delta'(t,S_t)\\
&\Gamma(t,S_t) = \beta_t\Gamma'(t,S_t)
\end{align*}
where $\alpha_t$ is the number of contracts in the underlying, $\beta_t$ is number of contracts in $C'$, and $\eta_t$ is the \$-amount held in the risk-free bank account. This hedge is more expensive and will earn less premium over the risk-free rate,
\[\Delta V_t  - \Delta C_t \simeq r\left(V_t-\left(\Delta_t-\Delta_t'\frac{\Gamma_t}{\Gamma_t'}\right)S_t-\frac{\Gamma}{\Gamma'}C_t'\right)\Delta t+\left(\frac{\Gamma_t}{\Gamma_t'}\Theta_t'-\Theta_t\right)\Delta t\ ,\]
with $\frac{\Gamma_t}{\Gamma_t'}\Theta_t'-\Theta_t$ being the premium over the risk-free rate. A plot of the Black-Scholes $\Gamma$ is shown in Figure \ref{fig:gammaVega}.

\begin{figure}[htbp] 
   \centering
   \includegraphics[width=5in]{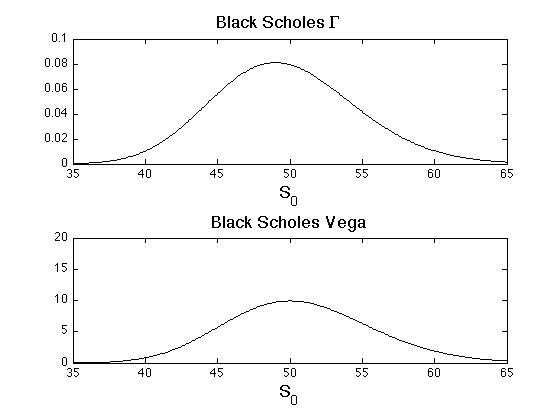} 
   \caption{\textbf{Top:} The Black-Scholes $\Gamma = \frac{N'(d_1)}{S_t\sigma\sqrt{T-t}}$, proportional to the a Gaussian probability density function of $\log(S_T/S_t)$.}
   \label{fig:gammaVega}
\end{figure}

\section{The Vega}
The Vega is the Greek to describe sensitivity to $\sigma$. Of the 5 main Greeks it is the only one that corresponds to a latent parameter (although some could argue that the risk-free rate is also somewhat latent or abstract). Hence, not only is it important to have a good estimate of $\sigma$, it is also important to know if your hedging position is sensitive to estimation error. Indeed, volatility is a major question inn derivative pricing, and will be addressed in complete detail in Section \ref{sec:impVol}.

For the Black-Scholes call and puts, the Vega is
\[\frac{\partial}{\partial\sigma}C = S_tN'(d_1)\sqrt{T-t}\]
which is proportional to the $\Gamma$ and to the probability density function of $\log(S_T/S_t)$ (see Figure \ref{fig:gammaVega}). There are some portfolios that are particularly sensitive Vega, for instance a straddle consisting of a call and put with the same strike $K$ if $K$ is near the money.

\section{The Rho}
The $\rho$ is the sensitivity of to changes in the risk-free rate
\[\rho = \frac{\partial}{\partial r}C\ .\]
Of the 5 main Greeks it widely considered to be the least sensitive. For the Black-Scholes call and put options, the $\rho$ is

\begin{align*}
&\rho_{call}= K(T-t)e^{-r(T-t)}N(d_2)\\
&\rho_{put}=- K(T-t)e^{-r(T-t)}N(-d_2)\ .
\end{align*}
Notice that $\rho_{call}>0$ because higher interest rates means risky assets should rise in price and hence it will be more likely that you will exercise. Similarly, $\rho_{put}<0$ because higher interest rates means it is less likely that you  will exercise. 
\section{Other Greeks}

There are also other Greeks of interest in derivatives. Some of the are 

\begin{itemize}
\item $\Lambda$, $\frac{\partial}{\partial S}C\times \frac{S}{C}$, elasticity, or measure of leverage.
\item Vanna, $\frac{\partial^2}{\partial s\partial \sigma}C$, The sensitivity of the $\Delta$ to changes in volatility.
\item  Volga, $\frac{\partial^2}{\partial \sigma^2}C$, used for more elaborate volatility hedging.
\end{itemize}

\chapter{Exotic Options}
\label{sec:exotics}
This section describes some of the better-known facts regarding a few exotic derivatives. The American and the Asian options are considered `path-dependent' because the payoff of the option will vary depending on the history of the price process, whereas exchange options and binary options are not path-dependent because the payoff only depends on the asset price(s) are the terminal time.


\section{Asian Options}
\label{sec:asian}
Asian option consider the average price of the underlying in the payoff,
\[\psi_T = \left(\frac 1T\int_0^TS_tdt - K\right)^+\ .\]
These options are \textit{path dependent} because the path taken by the underlying is considered. The averaging in the Asian option makes it harder to manipulate the payoff by driving the underlying's price up or down when the option is close to maturity. \\

\noindent\textbf{Delta Hedge.} In the Black-Scholes framework, the Asian option is a relatively simple extension of the hedging argument used to derive the Black-Scholes PDE. The trick is to consider an use an integrated field,
\[Z_t =\frac 1T \int_0^tS_udu\ ,\]
and consider the price
\[C(t,s,z) =\hbox{ price of Asian option given $S_t=s$ and $Z_t = z$.}\]
Let the underlying's dynamics be 
\[\frac{dS_t}{S_t}=\mu dt+\sigma dW_t\]
where $W_t$ is a Brownian motion. From It\^o's lemma we have
\[dC(t,S_t,Z_t) = \left(\frac{\partial}{\partial t}+\frac{\sigma^2S_t^2}{2}\frac{\partial^2}{\partial s^2}+\mu S_t\frac{\partial}{\partial s}+\frac{S_t}{T}\frac{\partial}{\partial z}\right)C(t,S_t,Z_t)dt+\sigma S_t\frac{\partial}{\partial s}C(t,S_t,Z_t)dW_t\ .\]
On the other hand, we have a self-financing portfolio,
\[dV_t = \alpha_tdS_t+r(V_t-\alpha_tS_t)dt\]
where $\alpha_t$ is the number of contracts in $S_t$ and $(V_t-\alpha_tS_t)$ is the \$-amount in the risk-free asset. Comparing the It\^o lemma of the option to the self-financing portfolio, we see that a portfolio that is long $V_t $ and short the Asian option will be risk-less if we have the $\Delta$-hedge
\[\alpha_t = \frac{\partial}{\partial s}C(t,S_t,Z_t)\ .\]
Furthermore, since the portfolio long $V_t$ and short the option is risk-less, it must be that it earns at the risk-free rate. Hence,
\[d(V_t-C(t,S_t,Z_t)) = r\left(V_t-S_t\frac{\partial}{\partial s}C(t,S_t,Z_t)\right)dt-\left(\frac{\partial}{\partial t}+\frac{\sigma^2S_t^2}{2}\frac{\partial^2}{\partial s^2}+\frac{S_t}{T}\frac{\partial}{\partial z}\right)C(t,S_t,Z_t)dt\]

\[=r(V_t-C(t,S_t,Z_t))dt\ .\]
Hence, the price of the Asian option satisfies the following PDE,
\begin{eqnarray}
\label{eq:asianPDE}
\left(\frac{\partial}{\partial t}+\frac{\sigma^2s^2}{2}\frac{\partial^2}{\partial s^2}+s\frac{\partial}{\partial s}+\frac sT\frac{\partial}{\partial z}-r\right)C(t,s,z)&=&0\\
\label{eq:asianPDEtc}
C(t,s,z)\Big|_{t=T}&=&\left(z-K\right)^+\ .
\end{eqnarray}
Equations \eqref{eq:asianPDE} and \eqref{eq:asianPDEtc} can be solved with a Feynman-Kac formula, 
\[C(t,s,z) = e^{-r(T-t)}\mathbb E^Q\left[\left(I_T-K\right)^+\Big|S_t=s,Z_t=z\right]\]
where $\mathbb E^Q$ is the expected value under the unique EMM $\mathbb Q$, and the asset price is $\frac{dS_t}{S_t} = rdt+\sigma dW_t^Q$ under $\mathbb Q$. We knew from Chapter \ref{chapt:blackScholes} that this market was complete, so it shouldn't come as a surprise that we were able to hedge perfectly the Asian option.\\

\noindent\textbf{PDE of Reduced Dimension.} A useful formula is the dimension reduction of the Asian option price to a function of just time and one stochastic field. Define a new function
\[\phi(t,x) \doteq \mathbb E^Q\left[\left(\frac 1T\int_t^TS_udu-x\right)^+\Big|S_t=1\right]\ .\]
Then let $\mathcal F_t$ denote the history of prices up time time $t$, and define the martingale
\[M_t \doteq \mathbb E^Q\left[\left(\frac 1T\int_0^TS_udu-K\right)^+\Big|\mathcal F_t\right]\]
\[= \mathbb E^Q\left[\left(\frac{1}{T}\int_t^TS_udu-\left(K-\frac{1}{T}\int_0^tS_udu\right)\right)^+\Big|\mathcal F_t\right]\]
\[=S_t \mathbb E^Q\left[\left(\frac{1}{T}\int_t^T\frac{S_u}{S_t}du-\frac{K-\frac{1}{T}\int_0^tS_udu}{S_t}\right)^+\Big|\mathcal F_t\right]\]

\[=S_t\phi(t,X_t)\]
where $X_t \doteq \frac{K-\frac{1}{T}\int_0^tS_udu}{S_t}$. Applying the It\^o lemma to $\xi_t$, we have
\[dX_t = -\frac 1Tdt+X_t\left(-\sigma dW_t^Q-rdt+\sigma^2dt\right)\ ,\]
and assuming that $\phi(t,x)$ is twice differentiable we apply the It\^o lemma to $\phi(t,X_t)$:
\[d\phi(t,X_t) = \left(\frac{\partial}{\partial t}+\left(-\frac 1T+X_t\left(\sigma^2-r\right)\right)\frac{\partial}{\partial x}+X_t^2\sigma^2\frac{\partial^2}{\partial x^2}\right)\phi(t,X_t)dt-X_t\sigma\frac{\partial}{\partial x}\phi(t,X_t) dW_t^Q\ .\]
Then applying It\^o's lemma to $M_t$, we have
\[dM_t = \phi(t,X_t)dS_t+S_td\phi(t,X_t) +dS_t\cdot d\phi(t,X_t)\]

\[=\phi(t,X_t)\left(rS_tdt+\sigma S_tdW_t^Q\right)+S_t\left(\frac{\partial}{\partial t}+\left(-\frac 1T+X_t\left(\sigma^2-r\right)\right)\frac{\partial}{\partial x}+X_t^2\sigma^2\frac{\partial^2}{\partial x^2}\right)\phi(t,X_t)dt\]

\[-S_tX_t\sigma\frac{\partial}{\partial x}\phi(t,X_t) dW_t^Q-X_tS_t\sigma^2\frac{\partial}{\partial x}\phi(t,X_t)dt\ , \]
and in order for $M_t$ to be a martingale the $dt$ terms must all cancel out. Therefore, we must have
\[r\phi(t,X_t)+\left(\frac{\partial}{\partial t}-\left(\frac 1T+rX_t\right)\frac{\partial}{\partial x}+X_t^2\sigma^2\frac{\partial^2}{\partial x^2}\right)\phi(t,X_t)
=0\ ,\]
which gives us the PDE of reduced dimension for a new function $\tilde \phi(t,x)\doteq e^{-r(T-t)}\phi(t,x)$,
\begin{eqnarray}
\label{eq:reducedPDE}
\left(\frac{\partial}{\partial t}-\left(\frac 1T+rx\right)\frac{\partial}{\partial x}+x^2\sigma^2\frac{\partial^2}{\partial x^2}\right)\tilde\phi(t,x)&=&0\\
\label{eq:reducedPDEtc}
\tilde\phi(t,x)\Big|_{t=T}&=&\max(-x,0)\ .
\end{eqnarray}
The PDE of equations \eqref{eq:reducedPDE} and \eqref{eq:reducedPDEtc} is of reduced dimension and is solvable with a Feynman-Kac formula,
\[\widetilde \phi(t,x) = \mathbb E[\max(-\widetilde X_T,0)|\widetilde X_t=x]\]
with $d\widetilde X_t = -\left(\frac 1T+r\widetilde X_t\right)dt+\sigma\widetilde X_t dW_t$. Finally, the option price is
\begin{equation}
\label{eq:reducedDim}
C(t,S_t,Z_t) = S_t\cdot\widetilde \phi\left(t~,~\frac{K-Z_t}{S_t}\right)\qquad\forall t\leq T\ .
\end{equation}
Equation \eqref{eq:reducedDim} is the pricing formula of reduced dimension.\\

\section{Exchange Options (Margrabe Formula)}
Consider two tradable assets, $S_t^1$ and $S_t^2$. An exchange option with strike $K$ has payoff
\[\psi(S_T^1,S_T^2) = (S_T^1-S_T^2-K)^+\]
which we price price under an EMM. In particular, for $K=0$ we have
 \begin{eqnarray}
 \nonumber
 C(t,s_1,s_2) &=& e^{-r(T-t)}\mathbb E^Q\left[(S_T^1-S_T^2)^+\Big|S_t^1=s_1,S_t^2=s_2\right]\\
\label{eq:ratioOption}
&=& e^{-r(T-t)}\mathbb E^Q\left[S_T^2\left(\frac{S_T^1}{S_T^2}-1\right)^+\Big|S_t^1=s_1,S_t^2=s_2\right]\ .
\end{eqnarray}
Options like this are found in forex markets and in commodities, the latter of which uses them to hedge risk in the risk between the retail price of a finished good verses the cost of manufacture. In electricity markets, the price of a megawatt hour of electricity needs to be hedged against the price of natural gas in what's called the \textit{spark spread}; in the soy market the cost of soy meal verses the cost of soy beans is called the \textit{crush spread}; in the oil market the cost of refined oil products verses the cost of crude oil is called the \textit{crack spread}. In commodities markets these options are called sometimes called spread options. The Margrabe formula prices spread/exchange options with $K=0$.

Suppose we have a double Black-Scholes model for the two assets,
\begin{eqnarray*}
\frac{dS_t^1}{S_t^1}&=& rdt+\sigma_1\left(\sqrt{1-\rho^2}dW_t^1+\rho dW_t^2\right)\\
\frac{dS_t^2}{S_t^2}&=& rdt+\sigma_2dW_t^2
\end{eqnarray*}
where $W_t^1$ and $W_t^2$ are independent risk-neutral Brownian motions and $\rho\in(-1,1)$. Now we define ratio in \eqref{eq:ratioOption} as
\[Y_t \doteq \frac{S_t^1}{S_t^2}\ , \]
define an exponential martingale
\[Z_t = \exp\left(-\frac 12\sigma_2^2t+\sigma_2W_t^2\right)\ ,\]
and notice that
\[C(t,s_1,s_2) = s_2\mathbb E^Q[Z_T(Y_T-1)^+|S_t^1 = s_1,S_t^2= s_2]\ .\]
Recall the Girsanov Theorem from Chapter \ref{chapt:brownianMotion} and recognize that $Z_t$ defines a new measure $\widetilde{\mathbb P}^Q$ under which $\widetilde W_t\doteq W_t^2-\sigma_2t$ is Brownian motion, and under which $W_t^1$ remains a Brownian motion independent of $\widetilde W_t$. Applying the It\^o lemma to $Y_t$ we get
\[dY_t = Y_t(\sigma_2^2-\sigma_2\sigma_1\rho)dt+Y_t\sigma_1\left(\sqrt{1-\rho^2}dW_t^1+\rho dW_t^2\right)-Y_t\sigma_2dW_t^2\]

\[=Y_t\sigma_1\left(\sqrt{1-\rho^2}dW_t^1+\rho d\widetilde W_t\right)-Y_t\sigma_2d\widetilde W_t\ ,\]
and we can now write the option price under the new measure to get \textbf{the Margrabe Formula:}
\[C(t,s_1,s_2) = s_2\widetilde{\mathbb E}^Q\left[(Y_T-1)^+\Big|Y_t = s_1/s_2 \right]\]

\[=s_2 \widetilde{\mathbb E}^Q\left[\left(\frac{s_1}{s_2}e^{-\frac{1}{2}\left(\sigma_1^2+\sigma_2^2-2\sigma_2\sigma_1\rho\right)(T-t)+\sigma_1\left(\sqrt{1-\rho^2}(W_T^1-W_t^2)+\rho (\widetilde W_T-\widetilde W_t)\right)-\sigma_2(\widetilde W_T-\widetilde W_t)}-1\right)^+\Big|Y_t = s_1/s_2 \right]\]

\[=s_1N(d_1)-s_2N(d_2)\]
where $N(\cdot)$ is the standard normal CDF function and
\begin{align*}
&d_1 = \frac{\log(s_1/s_2)+\frac 12\sigma^2(T-t)}{\sigma\sqrt{T-t}}\\
&d_2 = d_1-\sigma\sqrt{T-t}\ ,
\end{align*}
where $\sigma^2 = \sigma_1^2+\sigma_2^2-2\sigma_2\sigma_1\rho$. Essentially, the Margrabe formula is the Black-Scholes call option price with $r=0$, $S_0=s_1$, $K=s_2$, and $\sigma^2= \sigma_1^2+\sigma_2^2-2\sigma_2\sigma_1\rho$.

\section{American Options and Optimality of Early Exercise}
\label{sec:americans}

Broadly speaking, options on index funds (such as S\&P 500) are European, whereas options on individual stocks (such as Apple) are American. American options have the same features as their European counterparts, but with the additional option of \textit{early exercise}. The option of early exercise can be equated with the option to \textit{wait and see}, as the prices of these options is usually thought of the present value of the payoff when exercised at an optimal time. For this reason the American options are consider path-dependent. Methods for pricing American options are quite involved, as there are no explicit solutions. Instead, pricing is done by working backward from the terminal condition on a binomial tree, or by solving a PDE with a free boundary. Pricing methods are not covered here; these notes focus on some fundamental facts about American options.

The first thing to notice is that an American option is worth at least as much as a European. Hence, the prices are such that
\[\hbox{American Option }=\hbox{ European Option }+\hbox{ value of right to early exercise }\]

\[\geq \hbox{ European Option.}\]
In particular, we will that an American put has positive value for the right to early exercise, and so does the American call when the underlying asset pays dividend. \\

\noindent\textbf{American Call Option.} Let $C_t^A$ denote the price of an American call option with strike $K$ and maturity $T$. At any time $t\leq T$ we obviously must have $C_t^A>S_t-K$, otherwise there is an arbitrage opportunity by buying the option and exercising immediately for a risk-less profit. Therefore, we can write
\[C_t^A=\max\left(S_t-K,C_t^{cont}\right)\]
where $C_t^{cont}$ is the continuation value of the option at time $t$; at time $t\leq T$ the price is determined by whether or not it is optimal to exercise. Letting $C_t^E$ denote the price of the European call, it is clear that the long party cannot lose anything by having the option to exercise early (provided they do so at a good time) and so
\[C_t^A = \max\left(S_t-K,C_t^{cont}\right)\geq C_t^E\ .\]


\begin{proposition}
\label{prop:noDivCall}
For an American call option on a non-dividend paying asset, it is never optimal to exercise early. Hence, American and European calls have the same price and 
\[C_t^A = C_t^{cont}=C_t^E\ .\]
\end{proposition}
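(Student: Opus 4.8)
The plan is to show that early exercise of an American call on a non-dividend-paying asset is dominated, so that the continuation value always beats the immediate exercise payoff $S_t-K$. The cleanest route uses the European lower bound together with a simple no-arbitrage argument. First I would establish that for the European call $C_t^E \ge S_t - Ke^{-r(T-t)}$. This follows by comparing two portfolios: one holding the call, the other holding one share minus $Ke^{-r(T-t)}$ in the bank; at maturity the call pays $(S_T-K)^+ \ge S_T - K$, which is exactly the terminal value of the second portfolio, so by no-arbitrage the same ordering holds at time $t$. Alternatively, one gets this directly from Feynman--Kac (Proposition~\ref{prop:FC}): $C_t^E = e^{-r(T-t)}\mathbb{E}^Q[(S_T-K)^+\mid S_t] \ge e^{-r(T-t)}\mathbb{E}^Q[S_T-K\mid S_t] = S_t - Ke^{-r(T-t)}$, using that $e^{-rt}S_t$ is a $\mathbb{Q}$-martingale.

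Next I would chain the inequalities. Since $r\ge 0$ and $t\le T$, we have $Ke^{-r(T-t)} \le K$, hence
\[
C_t^E \ge S_t - Ke^{-r(T-t)} \ge S_t - K.
\]
Because the European payoff is nonnegative, also $C_t^E \ge 0$, so in fact $C_t^E \ge (S_t-K)^+ \ge S_t - K$. Now recall from the discussion preceding the proposition that $C_t^A = \max(S_t-K, C_t^{cont})$ and $C_t^A \ge C_t^E$. The key remaining point is that holding the option unexercised is always worth at least the European value, i.e. $C_t^{cont} \ge C_t^E$: the holder of an American call can always choose to mimic the European strategy by simply never exercising early, so the continuation value dominates the European price. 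Combining, $C_t^{cont} \ge C_t^E \ge S_t - K$, so the maximum defining $C_t^A$ is always attained by the continuation branch, giving $C_t^A = C_t^{cont}$. Feeding this back, $C_t^A = C_t^{cont} \ge C_t^E$, but also $C_t^A \le C_t^E$ since early exercise can never be exploited (it only ever realizes $S_t-K \le C_t^E$), so $C_t^A = C_t^E$.

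The main obstacle is the last step, showing $C_t^A \le C_t^E$, i.e. that the right of early exercise adds no value. The honest way to see this is that at any candidate exercise time $\tau \le T$ the holder receives $S_\tau - K \le S_\tau - Ke^{-r(T-\tau)} \le C_\tau^E$, so exercising is never strictly better than continuing to hold; taking the supremum over stopping times $\tau$ in the risk-neutral valuation $C_t^A = \sup_\tau \mathbb{E}^Q[e^{-r(\tau-t)}(S_\tau-K)^+ \mid \mathcal F_t]$ shows it is attained at $\tau = T$, which is exactly $C_t^E$. I would state this cleanly, noting that the crucial ingredients are $r\ge 0$ (so discounting the strike only helps) and the absence of dividends (so $e^{-rt}S_t$ is a martingale, not a supermartingale — with dividends the argument breaks, consistent with the remark that dividend-paying calls \emph{can} warrant early exercise). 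The conclusion $C_t^A = C_t^{cont} = C_t^E$ then follows immediately.
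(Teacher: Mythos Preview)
Your argument is correct and reaches the same conclusion, but the key lemma differs slightly from the paper's. The paper invokes European put--call parity: writing $C_t^E = P_t^E + S_t - Ke^{-r(T-t)}$ and using $P_t^E\geq 0$ together with $Ke^{-r(T-t)}\leq K$ immediately gives $C_t^E > S_t - K$, from which $C_t^{cont}\geq C_t^E > S_t - K$ and the result follows. You instead obtain the lower bound $C_t^E \geq S_t - Ke^{-r(T-t)}$ directly from $(S_T-K)^+\geq S_T-K$ and the $\mathbb Q$-martingale property of the discounted asset, never mentioning the put. This is marginally more elementary---it avoids introducing a second instrument and put--call parity---while the paper's route has the pedagogical advantage of making the strictness transparent via $P_t^E>0$. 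Your final paragraph, using the optimal-stopping formulation and the submartingale/martingale property of $e^{-r(u-t)}C_u^E$ to close the inequality $C_t^A\leq C_t^E$, is more explicit than the paper, which simply asserts ``the result follows'' after establishing that continuation dominates immediate exercise.
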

\begin{proof} The proof follows from the European Put-Call parity and is simple. Early exercise is clearly not optimal if $S_t<K$. Now suppose that $S_t>K$. Letting $P^E$ denote the price of a European put option, we have
\[C_t^A=\max\left(S_t-K,C_t^{cont}\right)\geq C_t^E = P_t^E+S_t-Ke^{-r(T-t)}>S_t-K=\hbox{ value of early exercise,}\]
which proves the continuation value is worth more than early exercise, and the result follows.
\end{proof}

To illustrate this proposition consider the following scenario: Suppose you're long an ITM American call, and you have a strong feeling that the asset price will go down and leave the option OTM. In this case you should not exercise but should short the asset and use the call to cover your short position. For example, suppose the asset is trading at \$105, you're long a 3 month American call with $K=\$100$, and the risk-free rate is 5\%. Now compare the following two strategies:
\begin{enumerate}
\item Exercise the option and invest the proceeds:
\begin{itemize}
\item Pay $K=\$100$; sell the asset at \$105.
\item Invest proceeds of \$5 in bond.
\item In three months have \$5.06. 
\end{itemize}
\item Short the asset, hold the option, and invest the proceeds:
\begin{itemize}
\item Short the asset at \$105.
\item Invest the proceeds of \$105 in bond.
\item In three months have \$106.32 in the bank; owe no more than \$100 to cover your short position.
\end{itemize}
\end{enumerate}
From these two strategies it is clear that early exercise is sub-optimal, as the latter strategy has \textit{at least} an edge of \$1.32. 
Alternatively, one could simply sell the call, which will also be better than exercising early.\\

The introduction of dividends changes things. In particular, early exercise can be optimal if the present value of the dividends is worth more than the interest earned on the strike and the time value of the option. We see this from the put-call parity
\[C_t^A\geq C_t^E=P_t^E+S_t-Ke^{-r(T-t)}-PV_t(D)\]
where $PV_t(D)$ is the present value of dividends received during the life of the option. For European options, dividends can be expressed in terms of a yield, $\delta\geq 0$, so that $PV_t(D) = S_t(1-e^{-\delta(T-t)})$ and
\[C_t^A\geq C_t^E=P_t^E+S_te^{-\delta(T-t)}-Ke^{-r(T-t)}\ .\]
It may be optimal to exercise early if $C_t^E<S_t-K$, which is possible in the presence of dividends as we can see in Figure \ref{fig:americanCallDiv}. The intuition for this figure is that $P_t^E\rightarrow 0$ as $S_t\rightarrow\infty$, and that $C_t^E\sim S_te^{-\delta(T-t)}-Ke^{-r(T-t)}<S_t-K$ for $S_t$ big enough. This only proves that the option for early exercise has positive value, therefore implying that early exercise may be optimal at some point; it does not mean that one should exercise their American option right now.

\begin{figure}[htbp] 
   \centering
   \includegraphics[width=5in]{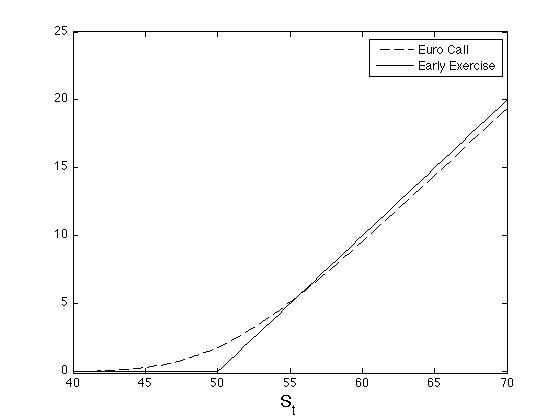} 
   \caption{The Black-Scholes price of a European call option on a dividend paying asset, with $K=50$, $T-t=3/12$, $r=.02$, $\sigma=.2$, and $\delta=.05$. Notice that early exercise exceeds the European price for high-enough price in the underlying.}
   \label{fig:americanCallDiv}
\end{figure}

\begin{proposition}
For an American call option on a dividend-paying asset, it may be optimal to exercise early.
\end{proposition}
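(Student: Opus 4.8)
The plan is to give an existence argument rather than a full description of the early-exercise region: I will exhibit a single time $t<T$ and price level at which holding the American call is worth strictly more than the corresponding European call, and then invoke the optimal-stopping characterization of the American price to conclude that exercising strictly before $T$ must be optimal in some state.

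First I would record the no-arbitrage lower bound $C_t^A\geq (S_t-K)^+\geq S_t-K$: if $C_t^A<S_t-K$, one buys the American call, exercises it immediately, and pockets the risk-free profit $S_t-K-C_t^A>0$, contradicting no-arbitrage. Next I would invoke the dividend put-call parity from this section, $C_t^E=P_t^E+S_te^{-\delta(T-t)}-Ke^{-r(T-t)}$, together with the closed-form European call price on a dividend-paying asset derived above, $C_t^E=S_te^{-\delta(T-t)}N(d_1)-Ke^{-r(T-t)}N(d_2)$. Since $d_1,d_2\to+\infty$ as $s\to\infty$, we have $N(d_1),N(d_2)\to 1$, so $C^E(t,s)-(s-K)\to s\,(e^{-\delta(T-t)}-1)+K\,(1-e^{-r(T-t)})$; when $\delta>0$ the term linear in $s$ has negative coefficient and the whole expression tends to $-\infty$. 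Hence for each $t<T$ there is a finite level $s^\ast(t)$ with $C^E(t,s)<s-K$ whenever $s>s^\ast(t)$. Combining this with the no-arbitrage bound, at any such state $C_t^A\geq s-K>C_t^E$, so the privilege of early exercise has strictly positive value.

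Finally I would close the argument by recalling that $C_t^A=\sup_{t\leq\tau\leq T}\mathbb E^Q[e^{-r(\tau-t)}(S_\tau-K)^+\mid S_t]$ over stopping times $\tau$, and that the smallest optimal stopping time is the first hitting time of the exercise region $\{(u,s):C_u^A=s-K\}$, defaulting to $T$ if that region is not reached before $T$. If the exercise region were empty for all $u<T$, the optimal stopping time would be identically $T$ and therefore $C_t^A\equiv C_t^E$, contradicting the strict inequality just established. Hence the exercise region is nonempty at some $t<T$, which is exactly the statement that early exercise may be optimal.

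The step needing the most care — and the main obstacle — is the last one: moving from ``$C^A>C^E$ in some state'' to ``early exercise is the optimal action in some state.'' This requires the identification of the American price with the value function of an optimal stopping problem and the structure of its optimal stopping times; without that, the put-call-parity computation only shows that the early-exercise feature is valuable, not that exercising before $T$ is ever optimal. Everything else reduces to bookkeeping with the Black-Scholes formulas and the elementary no-arbitrage bound, so I would keep those calculations brief.
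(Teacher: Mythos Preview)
Your approach is essentially the paper's: combine the no-arbitrage bound $C_t^A\geq S_t-K$ with put-call parity to exhibit a state in which $C_t^E<S_t-K\leq C_t^A$, so that the early-exercise feature has strictly positive value. You are more thorough than the paper in two places---you verify via the large-$s$ asymptotics of the Black--Scholes formula that such a state actually exists, and you invoke the optimal-stopping characterization to pass from $C^A>C^E$ to nonemptiness of the exercise region---whereas the paper's proof simply posits the former and asserts the latter.
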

\begin{proof}
Let $\delta>0$ be the dividend rate and assume $r>0$. From the put-call parity,
\[C_t^A\geq C_t^E = P_t^E+S_te^{-\delta (T-t)}-Ke^{-r(T-t)} = P_t^E+e^{-r(T-t)}\left(e^{-(\delta-r)(T-t)}S_t-K\right)\ .\]
Now suppose that $\delta$, $r$, $K$ and $S$ are such that
\[C_t^E<S_t-K\ .\]
Clearly, $C_t^A\geq S_t-K$, and so in this case we have strict inequality $C_t^A>C_t^E$, which indicates the early exercise of the American option adds value, and hence early exercise can be optimal if done at the correct time.
\end{proof}

In practice, dividends are paid at discrete times, and early exercise of an American options should only occur immediately prior to the time of dividend payment. The intuition is similar to the non-dividend case: at non-dividend times it is optimal to wait; time value of strike amount in bank and the right to exercise are more valuable.  \\

\noindent\textbf{American Put Option.} Regardless of dividends, early exercise of an American put option may be optimal. To understand why, simply consider a portfolio consisting of an American put with strike $K>0$ and a single contract in the underlying. If the underlying hits zero, then it is certainly optimal to exercise. In general, early exercise of a put option will be optimal if the price drops low enough to where the time value of the strike amount in the bank account is worth more than the option to wait on selling the asset plus the present value of any dividends that might be recieved.

\begin{proposition} \label{prop:earlyPut}Early exercise of an American put option may be optimal if $r>0$.
\end{proposition}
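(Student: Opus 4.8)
The plan is to run the same argument used for the dividend-paying American call in the preceding proposition: exhibit a regime of the underlying price in which the \emph{European} put trades strictly below its intrinsic value $K-S_t$, while the American put, being exercisable on the spot, can never trade below intrinsic value. The resulting strict gap forces $P_t^A>P_t^E$, so the right of early exercise has strictly positive value and must therefore be optimal to use in some states of the market. The hypothesis $r>0$ enters precisely through the discount factor $Ke^{-r(T-t)}<K$.

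\textbf{Step 1 (no-arbitrage floor for the American put).} First I would record that $P_t^A\ge(K-S_t)^+\ge K-S_t$ for every $t\le T$: otherwise one buys the put together with a share of the underlying and exercises immediately for a riskless profit. As in the call discussion, writing $P_t^A=\max(K-S_t,P_t^{cont})$ also gives $P_t^A\ge P_t^E$, since the long party can always mimic the European exercise policy.

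\textbf{Step 2 (upper bound on the European put via parity).} From the put-call parity \eqref{eq:PCparity}, $P_t^E=C_t^E-S_t+Ke^{-r(T-t)}$. I would then use the elementary bound $C_t^E\le S_t$ — the call payoff satisfies $(S_T-K)^+\le S_T$ and $S_t$ is exactly the arbitrage price of the claim paying $S_T$; alternatively this is read off Proposition \ref{prop:BScall} since $N(\cdot)\le 1$. Hence $P_t^E\le Ke^{-r(T-t)}$.

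\textbf{Step 3 (choosing the regime) and the obstacle.} Since $r>0$ we have $K\bigl(1-e^{-r(T-t)}\bigr)>0$, so pick any spot price with $0<S_t<K\bigl(1-e^{-r(T-t)}\bigr)$. Then $P_t^E\le Ke^{-r(T-t)}<K-S_t\le P_t^A$, giving the strict inequality $P_t^A>P_t^E$; thus the value of the early-exercise right, $P_t^A-P_t^E$, is strictly positive, so early exercise is optimal in some state. The main subtlety — the same one present in the dividend call proposition — is that a strict gap $P_t^A>P_t^E$ only shows the early-exercise feature is \emph{worth} something; translating this into ``it may be optimal to exercise early'' is the soft step that positive option value cannot be realized unless exercise is triggered in some scenario, so I would phrase the conclusion exactly as in the preceding proposition and not claim that exercise is optimal at the specific $S_t$ chosen. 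A secondary point to state carefully is that the bound $C_t^E\le S_t$ and the parity \eqref{eq:PCparity} are used for the \emph{European} contracts, which is where $r>0$ is needed.
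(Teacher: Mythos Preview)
Your argument is correct and hinges on the same key fact as the paper's proof: for small enough $S_t$ the European put satisfies $P_t^E<K-S_t$, which forces $P_t^A>P_t^E$ and gives the early-exercise right strictly positive value. The paper packages this as a proof by contradiction---assuming early exercise is never optimal, so $P_t^A=P_t^{cont}=P_t^E$, and then invoking the Black--Scholes limit $P_t^E\to Ke^{-r(T-t)}$ as $S_t\searrow 0$ (cf.\ Figure~\ref{fig:posTheta}) to contradict the intrinsic-value floor. Your direct route is slightly more elementary: the uniform bound $P_t^E\le Ke^{-r(T-t)}$ follows from put-call parity and $C_t^E\le S_t$ alone, with no appeal to the explicit Black--Scholes formula or its limiting behavior, and it immediately gives the explicit threshold $S_t<K(1-e^{-r(T-t)})$. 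Both approaches need $r>0$ in exactly the same place, namely to make $Ke^{-r(T-t)}<K$.
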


\begin{proof}\textbf{By contradiction.} Suppose it is never optimal to exercise an American put. Then 
\[P_t^A=\max\left(K-S_t,P_t^{cont}\right) = P_t^{cont} = P_t^E\]
where $P_t^{cont}$ is the continuation value of the option and $P_t^E$ is the price of a European put. Recall Figure \ref{fig:posTheta} on page \pageref{fig:posTheta} of Section \ref{sec:greeks}, where we see that
\[P_t^E\rightarrow e^{-r(T-t)}K\qquad\hbox{as }S_t\searrow 0.\]
If $r>0$ there is a price $S_t^0>0$ such that $P_t^E<K-S_t$ for all $S_t<S_t^0$, which is a contradiction. Hence, early exercise can be optimal.
\end{proof}

Proposition \ref{prop:earlyPut} is valid regardless of whether or not there are dividends, but the time value of dividends will counterbalance the time value of the cash received from exercise. If $r=0$, then the European put option is always greater or equal to its intrinsic value, which means
\[K-S_t\leq P_t^E\leq P_t^A= \max\left(K-S_t,P_t^{cont}\right)\ ,\]
with strict inequality between $K-S_t$ and $P_t^E$ if $S_t>0$. Hence, if $\mathbb P(S_t>0)=1$, then $\mathbb P(P_t^A=P_t^{cont})=1$ and early exercise is never optimal. \textbf{Joke:} Therefore, the Fed should drop interest rates if they are worried about rampant sell-offs by option holders.\\

To summarize, American options have three components beyond their intrinsic value,  
\begin{itemize}
\item value of option to early exercise (always positive impact on price),
\item time value of money (positive for calls, negative for puts), 
\item dividends (negative for calls, positive for puts).

\end{itemize}
The value of an American option over the price of a European option is essentially the value of early exercise. While the above results are model-free, the price (and the optimal exercise boundary) will be model-dependent.

\section{Bermuda Options}
A Bermuda call option has two exercise times $T_1$ and $T_2$ with
\[0<T_1<T_2<\infty\ .\]
At time $T_1$ the holder of the option has the right to exercise a call with strike $K_1$, or to continue with the possibility to exercise at time $T_2$ with exercise $K_2$. Hence, the risk-neutral price is
\[C^B(t) = \left\{
\begin{array}{ll}
e^{-r(T_1-t)}\mathbb E^Q\left[\max(S_{T_1}-K_1,C^E(T_1))\Big|\mathcal F_t\right]&\hbox{for }t\leq T_1\ ,\\
C^E(t)&\hbox{for }T_1<t\leq T_2\ ,
\end{array}\right . \]
where $C^E(t)$ is price a European call option with strike $K_2$ and maturity $T_2$. 

Consider the Black-Scholes model,
\[\frac{dS_t}{S_t} = (r-q)dt+\sigma dW_t^Q\ ,\]
where $q$ is the dividend rate. Recalling the Black-Scholes price for a call on a dividend-paying asset (see Section \ref{sec:dividends} of Chapter \ref{chapt:blackScholes}), the time-$T_1$ payoff for the Bermuda call option is
\[\psi(s) = \max\left(s-K_1,se^{-q(T_2-T_1)}N(d_1)-Ke^{-r(T-t)}N(d_2)\right)\ ,\]
where 
\begin{align*}
&d_1  = \frac{\log(s/K_2)+(r-q+.5\sigma^2)(T_2-T_1)}{\sigma\sqrt{T_2-T_1}} \\
&d_2 = d_1-\sigma\sqrt{T_2-T_1}\ .
\end{align*}
It follows that for $t<T_1$ that the Bermuda option price $C^B(t,s)$ is given by the PDE
\begin{align*}
\left(\frac{\partial}{\partial t}+\frac{\sigma^2s^2}{2}\frac{\partial^2}{\partial s^2}+(r-q)s\frac{\partial}{\partial s}-r\right)C^B(t,s) &=0\\
C^B(T_1,s)&=\psi(s)\ .
\end{align*}
When pricing the Bermuda option is intuitive to think about the appropriate early-exercise rule. We learned from Section \ref{sec:americans} that it is never optimal to exercise at time $T_1$ if $q=0$. However if $q>0$ then we need to think of a rule for when to exercise. The rule turns out to be simple: find the optimal value $s^e$ such that at time $t=T_1$,
\begin{align*}
&\hbox{exercise if }S_{T_1}\geq s^e\ ,\\
&\hbox{do not if } S_{T_1}<s^e\ .
\end{align*}
In the next subsection we consider the American put, where there will be a time-dependent exercise rule.

\section{Free-Boundary Problem for the American Put}
Consider the Black-Scholes model,
\[\frac{dS_t}{S_t} = rdt+\sigma dW_t^Q\ ,\]
with $r>0$. In this case it may be optimal exercise early an American put. The price $P^A$ of the put is the solution to an optimization problem,
\begin{align*}
P^A(t,s) &= \sup_{\tau\geq t} \mathbb E^Q\left[e^{-r(T\wedge\tau-t)}(K-S_{\tau\wedge T})^+\Big|S_t=s\right]\ ,
\end{align*}
where the $\tau$ is the optional exercise time chosen by the holder, and $T\wedge\tau=\min(T,\tau)$. Define $\mathcal L$ to be the PDE operator,
\[\mathcal L = \frac{\partial}{\partial t}+\frac{\sigma^2s^2}{2}\frac{\partial^2}{\partial s^2} +rs\frac{\partial}{\partial s}-r\ .\]
If at time $t$ it is optimal to continue then $\mathcal LP^A(t,S_t)=0$; if it is optimal to exercise then $\mathcal LP^A(t,S_t) = \mathcal L(K-S_t) < 0$ and $\frac{\partial}{\partial s}P^A(t,S_t) = -1$. This is equivalent to finding the pair $(P^A,s^e)$ such that
\begin{align*}
\mathcal LP^A(t,s)&=0~~\hbox{for $t<T$ and $s>s^e(t)$}\ ,\\
P^A(T,s)&=(K-s)^+\\
\frac{\partial}{\partial s}P^A(t,s^e(t))&=-1~~\hbox{for $t\leq T$}\ .
\end{align*}
This is the so-called \textit{free boundary problem}, which refers to the fact that we need to solve for the boundary $s^e(t)$ \textit{and} the PDE's solution $P^A(t,s)$. A straight-forward way to solve this PDE and to see the boundary is with a backward recursion on a binomial tree.

\chapter{Implied Volatility \& Local-Volatility Fits}
\label{sec:impVol}
A major point of discuss with Black-Scholes pricing is the role of volatility. The questions comes about when the Black-Scholes formula is inverted on the market's option prices, which produces an interesting phenomenon known as the \textit{implied volatility smile, smirk, or skew.} The implied volatility suggests that asset prices are more complex than geometric Brownian motion, and the Black-Scholes' parameter $\sigma$ needs to be dynamic. Local volatility models and stochastic volatility models are two well-known ways to address this issue; jump-diffusion models and time-change jump models are also a popular choice of model.
\section{Implied Volatility}
Implied volatility is the parameter estimate obtained by inverting the Black-Scholes model on market data. The interesting part is that European call and put options will return a different parameter for different strikes and maturity. Let $C^{BS}(t,s,k,T,\sigma)$ be the Black-Scholes price of a call option with strike $K$ and maturity $T$, and let $C_t^{data}(K,T)$ be the price in the market. 
\begin{definition}
Implied Volatility is the quantity $\hat\sigma(K,T)>0$ such that 
\[C_t^{data}(K,T) = C^{BS}\left(t,S_t,K,T,\hat\sigma(K,T)\right) \ .\]
\end{definition}

Traders often quote option prices in implied volatility, probably because it is a `unit-less' measure of risk. Typically, the implied smile/skew is plotted as a function of the option's strike as shown in Figure \ref{fig:impVolSkew}. When plotting implied volatility, the term `Moneyness' refers to how far the option is in-the-money (ITM) or out-of-the-money (OTM). Moneyness can also be interpreted as a unit-less measure, for instance the log-moneyness,
\[\hbox{log-moneyess} = \log(Ke^{-r(T-t)}/S_t)\ .\]
Implied volatility is skewed to the left for many assets, but some assets (such as forex contracts) have a right skew (see Figure \ref{fig:forexSkew}). The interpretation is the followoing: implied volatility smile/skew suggests that Black-Scholes is an oversimplification, and the left skew of equity options (such as the SPX ETF options) suggests there is `crash-o-phobia' for long positions in the underlying, and long put options are an insurance item for which there is a premium in the market.\\

\begin{figure}[htbp] 
   \centering
   \includegraphics[width=5in]{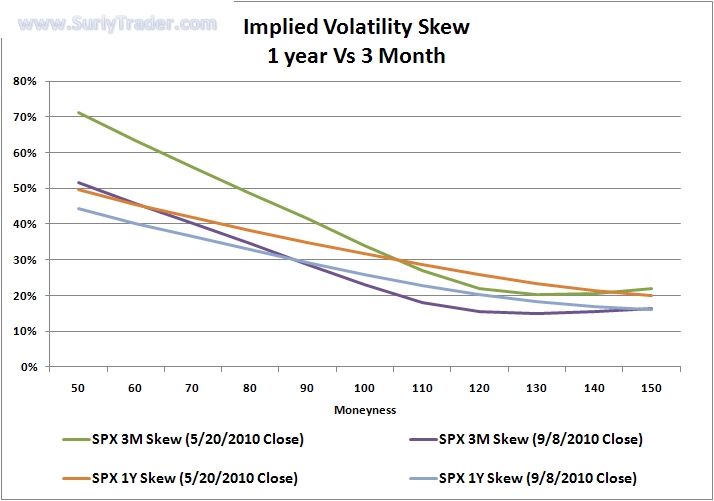} 
   \caption{Implied Volatility Skew of SPX Index Options. On May 20th, 2010, the S\&P 500 closed at 1072, so moneyness refers to something close to the intrinsic value.}
   \label{fig:impVolSkew}
\end{figure}

\begin{figure}[htbp] 
   \centering
   \includegraphics[width=5in]{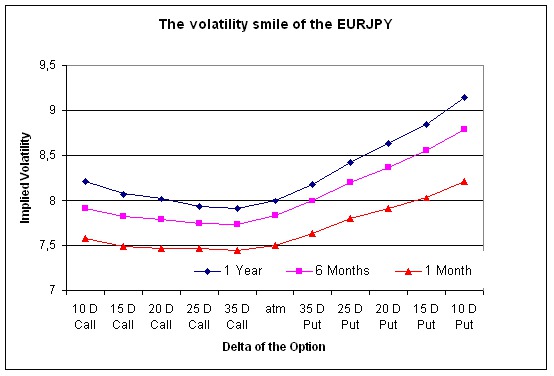} 
   \caption{The implied volatility smile for options on the Euro/Yen currency exchange. Notice the skew is to the right.}
   \label{fig:forexSkew}
\end{figure}

\newpage
\noindent \textbf{Bounds on Implied Volatility.} There are some bounds on implied volatility that can obtained from the Black-Scholes formula.
\begin{proposition}\label{prop:growthBounds}
Changes in implied volatility have bounding inequalities
\[-\frac{e^{-r(T-t)}N(-d_2)}{S_t\sqrt{T-t}N'(d_1)}\leq \frac{\partial}{\partial K}\hat\sigma(K,T)\leq \frac{e^{-r(T-t)}N(d_2)}{S_t\sqrt{T-t}N'(d_1)}\]
where $d_1$ and $d_2$ are given by the Black-Scholes formula and are functions of $\hat\sigma$.
\end{proposition}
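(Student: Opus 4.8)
The plan is to differentiate the defining identity for implied volatility with respect to $K$ and then bound each piece using (a) no-arbitrage constraints on the market call price as a function of strike and (b) explicit Black-Scholes derivatives. First I would record the two facts about the Black-Scholes call price $C^{BS}(t,s,K,T,\sigma)$ that I need: the Vega is $\frac{\partial}{\partial\sigma}C^{BS}=S_t\sqrt{T-t}\,N'(d_1)>0$ (already computed in the Vega section), and differentiating the formula $C^{BS}=S_tN(d_1)-Ke^{-r(T-t)}N(d_2)$ in $K$ gives $\frac{\partial}{\partial K}C^{BS}=-e^{-r(T-t)}N(d_2)$, where the terms proportional to $\partial d_1/\partial K$ and $\partial d_2/\partial K$ cancel exactly (same cancellation mechanism as in the $\Theta$ computation, using $s\,N'(d_1)=Ke^{-r(T-t)}N'(d_2)$ and $\partial d_1/\partial K=\partial d_2/\partial K$).

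Next I would establish the no-arbitrage bounds on $\frac{\partial}{\partial K}C_t^{data}(K,T)$. A call is decreasing in strike (a lower-strike call dominates a higher-strike one), so $\frac{\partial}{\partial K}C_t^{data}\le 0$; and a bull call spread $C_t^{data}(K,T)-C_t^{data}(K+h,T)$ has payoff at most $h$, so its discounted value is at most $he^{-r(T-t)}$, giving $\frac{\partial}{\partial K}C_t^{data}\ge -e^{-r(T-t)}$. Hence
\[
-e^{-r(T-t)}\ \le\ \frac{\partial}{\partial K}C_t^{data}(K,T)\ \le\ 0\ .
\]

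Then, since Vega is strictly positive, the implicit function theorem applied to $C_t^{data}(K,T)=C^{BS}(t,S_t,K,T,\hat\sigma(K,T))$ shows $\hat\sigma(\cdot,T)$ is differentiable and
\[
\frac{\partial}{\partial K}\hat\sigma(K,T)=\frac{\dfrac{\partial}{\partial K}C_t^{data}(K,T)-\dfrac{\partial}{\partial K}C^{BS}}{\dfrac{\partial}{\partial\sigma}C^{BS}}=\frac{\dfrac{\partial}{\partial K}C_t^{data}(K,T)+e^{-r(T-t)}N(d_2)}{S_t\sqrt{T-t}\,N'(d_1)}\ .
\]
Plugging in the two extremes from the no-arbitrage bounds and using $1-N(d_2)=N(-d_2)$ yields the upper bound $\dfrac{e^{-r(T-t)}N(d_2)}{S_t\sqrt{T-t}N'(d_1)}$ (from the choice $0$) and the lower bound $-\dfrac{e^{-r(T-t)}N(-d_2)}{S_t\sqrt{T-t}N'(d_1)}$ (from the choice $-e^{-r(T-t)}$), which is the claim.

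\textbf{Main obstacle.} The computational core is the $K$-derivative of the Black-Scholes formula and checking the $d_1,d_2$-terms cancel; the conceptual core — and the step I would be most careful about — is justifying the no-arbitrage inequalities $-e^{-r(T-t)}\le\partial_K C_t^{data}\le 0$ and the differentiability of $\hat\sigma$ in $K$ (which needs Vega $>0$ and enough regularity of the data, or else the bounds should be read in a one-sided/Lipschitz sense rather than pointwise).
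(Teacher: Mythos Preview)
Your argument is correct. The upper bound is obtained exactly as in the paper: differentiate the defining identity in $K$, use $\partial_K C^{BS}=-e^{-r(T-t)}N(d_2)$ and Vega $=S_t\sqrt{T-t}N'(d_1)$, and invoke $\partial_K C_t^{data}\le 0$.

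For the lower bound, however, the paper takes a slightly different route: it switches to the put option, uses that the put price is non-decreasing in $K$, computes $\partial_K P^{BS}=e^{-r(T-t)}N(-d_2)$, and notes that the put has the same Vega as the call. Your approach instead stays with the call and uses the bull-spread bound $\partial_K C_t^{data}\ge -e^{-r(T-t)}$. The two are equivalent by put--call parity: differentiating $C-P=S_t-Ke^{-r(T-t)}$ in $K$ gives $\partial_K C-\partial_K P=-e^{-r(T-t)}$, so $\partial_K P\ge 0$ is literally the same inequality as $\partial_K C\ge -e^{-r(T-t)}$. Your version has the mild advantage of not needing to remark that the put and call share the same implied volatility (a fact the paper's argument uses implicitly), while the paper's version makes the symmetry between the two bounds more visible.
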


\noindent The derivation of Proposition \ref{prop:growthBounds} is as follows:\\

\noindent For the call option,
\[\frac{\partial}{\partial K}C_t^{data}(K)=C_k^{BS_1}(t,S_t,K,T,\hat\sigma(K,T))\]
\[+C_\sigma^{BS}(t,S_t,K,T,\hat\sigma(K,T))\frac{\partial}{\partial K}\hat\sigma(K,T)\leq 0\ ,\]
hence
\[\frac{\partial}{\partial K}\hat\sigma(K,T)\leq -\frac{C_k^{BS_1}(t,S_t,K,T,\hat\sigma(K,T))}{C_\sigma^{BS}(t,S_t,K,T,\hat\sigma(K,T))}\]

\begin{equation}
\label{eq:UB}
=\frac{e^{-r(T-t)}N(d_2)}{S_t\sqrt{T-t}N'(d_1)}\ ,
\end{equation}
where it is straight forward to verify the $K$-derivative to be
\[C_k^{BS_1}=\frac{\partial}{\partial K}\left(S_tN(d_1)-Ke^{-r(T-t)}N(d_2)\right) = -N(d_2)\ .\] 
The ratio in \eqref{eq:UB} has the Greek letter vega in the denominator, which causes it to blow-up when $K$ is far-from-the-money, hence the estimate will be more useful for neat-the-money options (see Figure \ref{fig:growthBounds}). A similar inequality is derived from the put option,
\[-\frac{e^{-r(T-t)}N(-d_2)}{S_t\sqrt{T-t}N'(d_1)}\leq \frac{\partial}{\partial K}\hat\sigma(K,T)\ .\]
This growth bound is also shown in Figure \ref{fig:growthBounds}.\\
\bigskip

\begin{figure}[htbp] 
   \centering
   \includegraphics[width=5in]{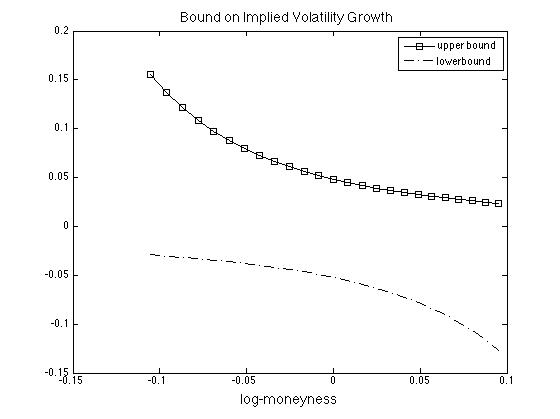} 
   \caption{The growth bound on an option with underlying price $S_t = 50$, time to maturity $T-t=3/13$, interest rate $r=0$, and implied volatility $\hat\sigma = .2$. For an ATM option, the bounds of proposition \ref{prop:growthBounds} show that derivative of implied volatility with respect to strike to be bounded by about 5\%.}
   \label{fig:growthBounds}
\end{figure}
Another important bound on implied volatility is the moment formula for options with extreme strike. Let $x(K) = \log(Ke^{-r(T-t)}/S_t)$. There exists $x^*>0$ such that
\begin{equation}
\label{eq:impVolUB}\hat\sigma(K,T)<\sqrt{\frac{2}{T-t}|x(K)|}
\end{equation}
for all $K$ s.t. $|x(K)|>x^*$. The power of this estimate is that it provides a model-free bound on implied volatility. In other words, it extrapolates the implied volatility smile without assuming any particular structure of the probability distribution. The upper bound in \eqref{eq:impVolUB} can be sharpened to the point where there is uncovered a relationship between implied volatility and the number of finite moments of the underlying $S_T$.
\begin{proposition}\textbf{(The Moment Formula for Large Strike).} 
Let 
\[\bar p = \sup\{p:\mathbb E^QS_T^{1+p}<\infty\}\ ,\]
 and let 
 \[\beta_R = \lim\sup_{K\rightarrow\infty}\frac{\hat\sigma^2(K,T)}{|x(K)|/T}\ .\]
 Then $\beta_R\in[0,2]$ and 
 \[\bar p =\frac{1}{2\beta_R}+\frac{\beta_R}{8}-\frac 12\ ,\]
 where $1/0\doteq \infty $. Equivalently,
 \[\beta_R = 2-4(\sqrt{\bar p^2+\bar p}-\bar p)\]
 where the right-hand side is taken to be zero when $\bar p = \infty$.

\end{proposition}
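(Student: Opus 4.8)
The plan is to reduce everything to the single identity $\bar p = \dfrac{(\beta_R-2)^2}{8\beta_R}$; the two displayed formulas and the inclusion $\beta_R\in[0,2]$ then follow by elementary algebra, namely by solving the quadratic $\beta_R^2-(4+8\bar p)\beta_R+4=0$ (discriminant $64\,\bar p(\bar p+1)$) and taking the root lying in $(0,2]$. First I would normalize by the forward price $F=S_te^{r(T-t)}$: with $\hat S = S_T/F$ and strike $k=K/F=e^{x(K)}$, the risk-neutral call price becomes $\tilde c(k)=\mathbb{E}^Q[(\hat S-k)^+]$ with $\mathbb{E}^Q\hat S=1$, and $\mathbb{E}^Q S_T^{1+p}<\infty\Leftrightarrow\mathbb{E}^Q\hat S^{1+p}<\infty$. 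For $x>0$ the map $v\mapsto N(d_1)-e^xN(d_2)$, $d_{1}=\frac{-x+v/2}{\sqrt v}$, $d_2=d_1-\sqrt v$, is a strictly increasing bijection from $(0,\infty)$ onto $(0,1)$, so there is a unique implied total variance $v(x)=\hat\sigma^2(K,T)\,T$ with $c_{BS}(x,v(x))=\tilde c(e^x)$, and $\beta_R=\limsup_{x\to\infty} v(x)/x$. The analytic input is the Black--Scholes tail asymptotic: using $N(-y)=y^{-1}N'(y)(1+O(y^{-2}))$ one gets, for fixed $\beta\in(0,2]$,
\[ c_{BS}(x,\beta x)\;=\;x^{-1/2}\,e^{-x\,g(\beta)}\bigl(c_\beta+o(1)\bigr),\qquad g(\beta):=\frac{(\beta-2)^2}{8\beta}, \]
where $g$ is a strictly decreasing bijection of $(0,2]$ onto $[0,\infty)$. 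I would only use the one-sided consequences (a lower bound with exponential rate $g(\beta)$ and an upper bound with rate $g(\beta)$, each with an arbitrary $\epsilon$ of slack in the exponent), together with monotonicity of $c_{BS}$ in $v$.

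Direction 1 (moments bound volatility): for any $p<\bar p$, Markov's inequality gives $\tilde c(k)=\mathbb{E}(\hat S-k)^+\le k^{-p}\,\mathbb{E}\hat S^{1+p}=O(e^{-px})$. If $\limsup v(x)/x>g^{-1}(p)$, pick $\epsilon>0$ and $x_n\to\infty$ with $v(x_n)\ge(g^{-1}(p)+\epsilon)x_n$; monotonicity forces $\tilde c(e^{x_n})\ge c_{BS}(x_n,(g^{-1}(p)+\epsilon)x_n)\ge x_n^{-1/2}e^{-(p-2\delta)x_n}$ for some $\delta>0$ (since $g$ is decreasing, $g(g^{-1}(p)+\epsilon)<p$), contradicting the $O(e^{-px})$ bound. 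Hence $\beta_R\le g^{-1}(p)$ for all $p<\bar p$, and letting $p\uparrow\bar p$ yields $g(\beta_R)\ge\bar p$. Separately, $\beta_R\le2$: if $v(x)/x\to\beta>2$ along a subsequence then $N(d_1)\to1$, $e^xN(d_2)\to0$, so $\tilde c(e^x)\to1$ along it, whereas $\tilde c(k)\to0$ as $k\to\infty$ for any integrable $\hat S$ — a contradiction.

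Direction 2 (volatility bounds moments): fix $\beta^*\in(\beta_R,2]$; then eventually $v(x)<\beta^*x$, so by monotonicity and the upper half of the asymptotic $\tilde c(k)\le c_{BS}(x,\beta^*x)=O\!\bigl(e^{-(g(\beta^*)-\epsilon)x}\bigr)=O(k^{-q})$ with $q=g(\beta^*)-\epsilon$. The Carr--Madan representation obtained by integrating $s^{1+p}$ against its second derivative over the call/put payoff family gives
\[ \mathbb{E}\hat S^{1+p}=1+p(1+p)\!\int_0^1 k^{p-1}\mathbb{E}(k-\hat S)^+\,dk+p(1+p)\!\int_1^\infty k^{p-1}\tilde c(k)\,dk, \]
whose right side is finite for every $p<q$ (the $(0,1)$ integral always converges; the $(1,\infty)$ one converges since $k^{p-1}\tilde c(k)=O(k^{p-1-q})$). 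Hence $\bar p\ge q$; sending $\epsilon\downarrow0$ and $\beta^*\downarrow\beta_R$ gives $\bar p\ge g(\beta_R)$. Combining the two directions, $\bar p=g(\beta_R)=\frac{(\beta_R-2)^2}{8\beta_R}$, which rearranges to $\bar p=\frac{1}{2\beta_R}+\frac{\beta_R}{8}-\frac12$ and, inverting, to $\beta_R=2-4(\sqrt{\bar p^2+\bar p}-\bar p)$.

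The main obstacle is making the Black--Scholes tail asymptotic and the monotonicity comparisons precise: one must control the $x^{-1/2}$ prefactor and the $o(1)$ remainder uniformly enough that they are absorbed into the exponential gap produced by an arbitrary $\epsilon$-perturbation of $\beta$, and one must confirm that $v(x)$ is genuinely well-defined and finite (equivalently $0<\tilde c(k)<1$ for all large $k$, i.e. $\hat S$ has unbounded support; the bounded-support case is exactly the degenerate endpoint $\bar p=\infty$, $\beta_R=0$, covered by the convention $1/0\doteq\infty$). Everything else — the quadratic algebra and the Carr--Madan expansion — is routine.
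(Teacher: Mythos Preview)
The paper does not prove this proposition; it only states the result and refers the reader to Roger Lee's original article for ``the important details.'' So there is no in-paper proof to compare against. Your proposal is essentially a faithful sketch of Lee's original argument: normalize by the forward, use the Black--Scholes large-strike asymptotic $c_{BS}(x,\beta x)\asymp x^{-1/2}e^{-xg(\beta)}$ with $g(\beta)=(\beta-2)^2/(8\beta)$, then run the two comparison directions (Markov to get $g(\beta_R)\ge\bar p$, and a call-price integrability criterion for moments to get $g(\beta_R)\le\bar p$). The logical structure is correct, and the algebra linking $\bar p$ and $\beta_R$ is right.

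Two small points worth tightening. First, your displayed asymptotic is stated for $\beta\in(0,2]$, but at $\beta=2$ one has $d_1=0$ and $c_{BS}(x,2x)\to\tfrac12$, not $\sim c_2x^{-1/2}$; the asymptotic as written holds only for $\beta\in(0,2)$. This does not damage the argument (in Direction~1 you only need $\beta<2$ since $g^{-1}(p)<2$ for $p>0$, and in Direction~2 the case $\beta_R=2$ gives the trivial bound $\bar p\ge g(2)=0$), but the statement should be corrected. Second, the Carr--Madan identity you invoke for $\mathbb E\hat S^{1+p}$ is an equality between a possibly-infinite expectation and a possibly-divergent integral; to use it cleanly you should either pass through a truncation $\hat S\wedge M$ and monotone convergence, or cite the standard lemma that $\mathbb E\hat S^{1+p}<\infty$ if and only if $\int_1^\infty k^{p-1}\tilde c(k)\,dk<\infty$. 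You flag this as routine, which is fair, but it is the one place where a careless write-up could leave a genuine gap.
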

There is also a moment formula for extremely small strikes and $\bar q = \sup\{q:\mathbb E^QS_T^{-q}<\infty\}$. Both moment formulas are powerful because they are model-free, and because they say that tail behavior of the underlying can be characterized by implied volatility's growth rate against log-moneyness. Indeed, an asset with heavy tails will have higher implied volatility, and so going long or short an option will have more risk, and hence these options will cost more. The important details of the moment formula(s) are in the original paper of Roger Lee \cite{rogerLeeMoment}.

\section{Local Volatility Function}

The idea the volatility is non-constant was first implemented in local volatility models. A local volatility model is a simple extension of Black-Scholes
\[\frac{dS_t}{S_t}=\mu dt+\sigma(t,S_t)dW_t\]
where the function $\sigma(t,s)$ is chosen to fit the implied volatility smile.
Dependence of $\sigma$ on $S_t$ does not effect completeness of market, so there is a pricing PDE that is try similar to the Black-Scholes and is derived in much the same way:
\begin{equation}
\label{eq:localVolPDE}
\left(\frac{\partial}{\partial t}+\frac{S_t^2\sigma^2(t,S_t)}{2}\frac{\partial^2}{\partial s^2} +rS_t\frac{\partial}{\partial s}-r\right)C(t,S_t) = 0 
\end{equation}
with terminal condition $C\big|_{t=T} = (s-K)^+$. Hence, the Feynmann-Kac formula applies and we have
\[C(t,s) = e^{-r(T-t)}\mathbb E^Q[(S_T-K)^+|S_t=s]\]
where $\mathbb E^Q$ is a risk-neutral measure under which $\frac{dS_t}{S_t} = rdt+\sigma(t,S_t)dW_t^Q$. Now, observe the following facts:
\begin{enumerate}
\item The first derivative with respect to $K$ is written with a CDF,
\begin{align*}
\frac{\partial}{\partial K}C(t,s) &=e^{-r(T-t)} \frac{\partial}{\partial K}\mathbb E^Q[(S_T-K)^+|S_t= s]\\
&=e^{-r(T-t)} \mathbb E^Q\left[\frac{\partial}{\partial K}(S_T-K)^+\Big|S_t= s\right]\\
&=-e^{-r(T-t)} \mathbb E^Q\left[\indicator{S_T\geq K}\Big|S_t= s\right]\\
&=-e^{-r(T-t)} \mathbb Q(S_T>K|S_t=s)\\
&=-e^{-r(T-t)} \left(1 - \mathbb Q(S_T\leq K|S_t=s)\right)\ .
\end{align*}
\item The second derivative with respect to $K$ is a PDF,
\begin{align*}
\frac{\partial^2}{\partial K^2}C(t,s) &=e^{-r(T-t)} \frac{\partial^2}{\partial K^2}\mathbb E^Q[(S_T-K)^+|S_t= s]\\
&=-e^{-r(T-t)} \frac{\partial}{\partial K}\left(1 - \mathbb Q(S_T\leq K|S_t=s)\right)\\
&=e^{-r(T-t)} \frac{\partial}{\partial K} \mathbb Q(S_T\leq K|S_t=s)\ ,
\end{align*}
which is commonly referred to as \textbf{the Breeden and Litzenberger formula.}

\item Finally, denote by $C(t,s,k)$ all the list call options with strike $T$ and strikes $k$ ranging from zero to infinity. For a fixed $K>0$, differentiating with respect to $T$ and repeated application of integration-by-parts yields the following 
\begin{align*}
&\frac{\partial}{\partial T}C(t,s,K)\\
=&\frac{\partial}{\partial T}\left(e^{-r(T-t)}\mathbb E^Q[(S_T-K)^+|S_t=s]\right)\\
=&-re^{-r(T-t)}\mathbb E^Q[(S_T-K)^+|S_t=s]+e^{-r(T-t)}\frac{\partial}{\partial T}\mathbb E^Q[(S_T-K)^+|S_t=s]\\
=&-re^{-r(T-t)}\mathbb E^Q\{(S_T-K)^+|S_t=s\}\\
&+e^{-r(T-t)}\frac{\partial}{\partial T}\mathbb E^Q\left[(s-K)^++\int_t^T\left(rS_u\indicator{S_u\geq K}+\frac{\sigma^2(u,S_u)S_u^2}{2}\delta_{K}(S_u)\right)du\Big|S_t=s\right]\\
&+e^{-r(T-t)}\frac{\partial}{\partial T}\underbrace{\mathbb E^Q\left[\int_t^T\sigma(u,S_u)S_u\indicator{S_u\geq K}dW_u^Q\Big|S_t=s\right]}_{=0}\\
=&-re^{-r(T-t)}\mathbb E^Q[(S_T-K)^+|S_t=s]\\
&+e^{-r(T-t)}\mathbb E^Q\left[rS_T\indicator{S_T\geq K}+\frac{\sigma^2(T,S_T)S_T^2}{2}\delta_{K}(S_T)\Big|S_t=s\right]\\
=&rKe^{-r(T-t)}\mathbb E^Q\left[\indicator{S_T\geq K}+\frac{\sigma^2(T,S_T)S_T^2}{2}\delta_{K}(S_T)\Big|S_t=s\right]\\
=&rK\int_0^\infty \indicator{k\geq K}\frac{\partial^2}{\partial k^2}C(t,s,k)dk+\int_0^\infty \frac{\sigma^2(T,k)k^2}{2}\delta_{K}(k)\frac{\partial^2}{\partial k^2}C(t,s,k)dk\\
=&-rK\frac{\partial}{\partial K}C(t,s,K)+ \frac{\sigma^2(T,K)K^2}{2}\frac{\partial^2}{\partial K^2}C(t,s,K)\ .\\
\end{align*}
\end{enumerate}

\noindent The above steps 1 through 3 have derived what is known in financial literature as \textbf{Dupire's Equation:}
\begin{equation}
\label{eq:dupire}
\left(\frac{\partial}{\partial T}-\frac{K^2\sigma^2(T,K)}{2}\frac{\partial^2}{\partial K^2}+rK\frac{\partial}{\partial K}\right)C(t,s,K) = 0
\end{equation}
with initial condition $C(t,s,K)|_{T=t} = (s-K)^+$. From \eqref{eq:dupire} we get the Dupire scheme for \textbf{implied local volatility:}
\begin{equation}
\label{eq:dupireFormula}
I^2(T,K) = 2\frac{\left(\frac{\partial}{\partial T}+rK\frac{\partial}{\partial K}\right)C }{K^2\frac{\partial^2}{\partial K^2}C}
\end{equation}
which can be fit to the market's options data for various $T$'s and $K$'s (see \cite{dupire}). This is called `calibration of the implied volatility surface', although methods for calibrating local volatility can be quite complicated with numerical differentiation schemes in $T$ and $K$, regularization constraints, etc (for more on fitting techniques see \cite{achdouPironneauBook}). Figure \ref{fig:impVolSurface} shows a calibration of the \textit{local volatility surface.}\\

\begin{figure}[htbp] 
   \centering
   \includegraphics[width=5in]{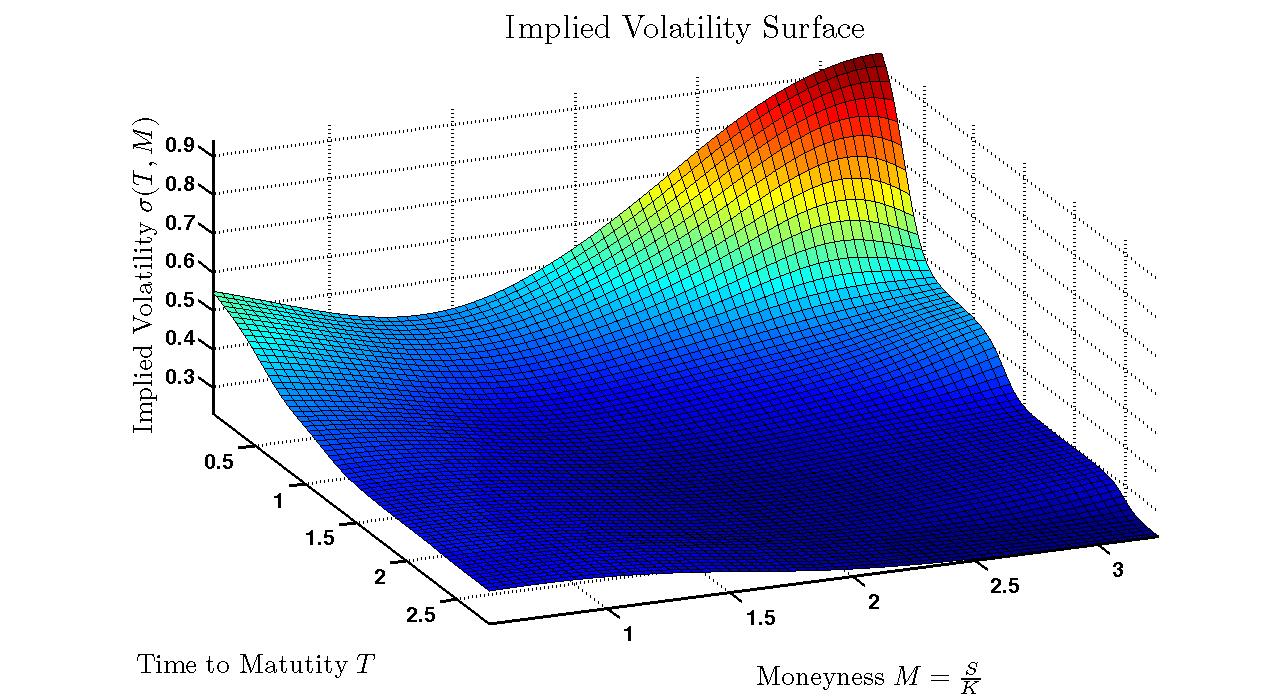} 
   \caption{Implied Volatility Surface}
   \label{fig:impVolSurface}
\end{figure}
\noindent \textbf{Shortcomings of Local Volatility.} There are some shortcomings of local volatility, one of which is the time dependence or `little $t$' effects, which refers to changes in the calibrated surface with time. In other words, the volatility surface will change from day to day, a phenomenon that is inconsistent with $\Delta$-hedging arguments for local volatility.

\section{Fitting the Local Volatility Function}
Define the log moneyness $k=\log(K/S_te^{r(T-t)})$, and the time-weighted implied volatility
\[ \omega(T,k) = \hat\sigma^2\left(T,S_te^{r(T-t)}e^k\right) (T-t)\ .\]
It is straight forward to verify that the call-option price can be written with a dimensional version of Black-Scholes formula, 
\begin{align*}
C(t,s,K) &= e^{-r(T-t)}\mathbb E^Q[(S_T-K)^+|S_t=s] \\
&=sC^{BS_1}(\omega(T,k),k)\\
&=s(N(d_1)-e^kN(d_2))\ ,
\end{align*}
where $C^{BS_1}(\omega,k)$ is the Black-Scholes call-option price $S_t=1$, time to maturity $T-t=1$, strike $e^k$, interest rate $r=0$, initial asset price $S_t=1$, and volatility $\sqrt\omega$; the formula parameters are $d_1=-k/\sqrt\omega+ \sqrt\omega/2$ and $d_2=d_1-\sqrt{\omega}$. The dimensionless option price $C^{BS_1}(\omega, k)$ takes only log moneyness and time-weight volatility, and then the market's option price is proportional to the dimensionless price. 

Now, there are the following derivatives,
\begin{align}
\nonumber
\frac1s\frac{\partial}{\partial T}C(t,s,K) &= C_k^{BS_1}(\omega(T,k),k)\frac{\partial k}{\partial T}+ C_\omega^{BS_1}(\omega(T,k),k)\frac{\partial \omega(T,k)}{\partial T}\\
\label{eq:dT_C}
&=-rC_k^{BS_1}(\omega(T,k),k)+C_\omega^{BS_1}(\omega(T,k),k)\left(\omega_T(T,k)-r\omega_k(T,k)\right)\\
\nonumber
&\\
\nonumber
\frac1s\frac{\partial}{\partial K}C(t,s,K) &= C_k^{BS_1}(\omega(T,k),k)\frac{\partial k}{\partial K}+C_\omega^{BS_1}(\omega(T,k),k)\frac{\partial \omega(T,k)}{\partial k}\frac{\partial k}{\partial K}\\
\label{eq:dK_C}
&=\frac1KC_k^{BS_1}(\omega(T,k),k)+\frac1KC_\omega^{BS}(\omega(T,k),k) \omega_k(T,k)\\
\nonumber
&\\
\nonumber
\frac1s\frac{\partial^2}{\partial K^2}C(t,s,K) &=\frac{\partial^2}{\partial k^2}C^{BS_1}(\omega(T,k),k)\left(\frac{\partial k}{\partial K}\right)^2+\frac{\partial}{\partial k}C^{BS_1}(\omega(T,k),k)\frac{\partial^2 k}{\partial K^2}\\
\label{eq:dKdK_C}
&=\frac{1}{K^2}\left(\frac{\partial^2}{\partial k^2}C^{BS}(\omega(T,k),k)-C_k^{BS_1}(\omega(T,k),k) \right)\ .
\end{align}
Plugging \eqref{eq:dT_C}, \eqref{eq:dK_C}, and \eqref{eq:dKdK_C} into the implied local volatility function of  
\eqref{eq:dupireFormula} yields
\begin{align}
\label{eq:dupireFormula_impVol_pre}
&\sigma^2\left(t,S_te^{r(T-t)}e^k\right) =2 \frac{C_\omega^{BS_1}(\omega(T,k),k) \omega_T(T,k)}{\frac{\partial^2}{\partial k^2}C^{BS_1}(\omega(T,k),k)-\frac{\partial}{\partial k}C^{BS_1}(\omega(T,k),k)}\ .
\end{align}
The denominator in \eqref{eq:dupireFormula_impVol_pre} involves the following three derivatives:
\begin{align}
\label{eq:Cww}
C_{\omega\omega}^{BS_1}(\omega,k)&=\left(-\frac18-\frac{1}{2\omega}+\frac{k^2}{2\omega^2}\right)C_\omega^{BS_1}(\omega,k)\\
\label{eq:Cwk}
C_{\omega k}^{BS_1}(\omega,k)&=\left(\frac12-\frac k\omega\right)C_\omega^{BS_1}(\omega,k)\\
\label{eq:Ckk-Ck}
C_{kk}^{BS_1}(\omega,k)-C_k^{BS_1}(\omega,k)&=2C_\omega^{BS_1}(\omega,k)\ .
\end{align}
Using \eqref{eq:Cww}, \eqref{eq:Cwk} and \eqref{eq:Ckk-Ck}, the denominator in \eqref{eq:dupireFormula_impVol_pre} becomes
\begin{align*}
&\frac{\partial^2}{\partial k^2}C^{BS_1}(\omega(T,k),k)-\frac{\partial}{\partial k}C^{BS_1}(\omega(T,k),k) \\
&=C_{\omega\omega}^{BS_1}(\omega(T,k),k)(\omega_k(T,k))^2+C_\omega^{BS_1}(\omega(T,k),k)w_{kk}(T,k)+2C_{\omega k}^{BS_1}(\omega(T,k),k)\omega_k(T,k)\\
&\hspace{1cm}+C_{kk}^{BS_1}(\omega(T,k),k)-C_\omega^{BS_1}(\omega(T,k),k)\omega_k(T,k)-C_k^{BS_1}(\omega(T,k),k)\\
&=\left\{\left(-\frac18-\frac{1}{2\omega}+\frac{k^2}{2\omega^2}\right)(\omega_k(T,k))^2+w_{kk}(T,k)-\frac{2k}{\omega}\omega_k(T,k)+2\right\}C_\omega^{BS_1}(\omega(T,k),k)\ ,
\end{align*}
which we then plug into \eqref{eq:dupireFormula_impVol_pre} to get the following implied local volatility formula in terms of the implied volatility surface:

\begin{align}
\nonumber
&\sigma^2\left(T,K\right)\\
\label{eq:dupireFormula_impliedVol}
& = \frac{\omega_T(T,k)}{1-\frac{k}{\omega(T,k)}\omega_k(T,k) +\tfrac14\left(-\tfrac14-\tfrac{1}{\omega(T,k)}+\tfrac{k^2}{\omega^2(T,k)}\right)\left(\omega_k(T,k)\right)^2+\frac12\omega_{kk}(T,k)}\ .
\end{align}

\section{Surface Parameterizations}

Given the formula in \eqref{eq:dupireFormula_impliedVol}, it remains to decide on a differentiable model for $\omega(T,k)$. One approach is to find a parametric function $f^\theta(T,k)$ to fit the implied volatility surface. If the function is once differentiable in $T$ and twice in $k$, then the chain rule can be applied to obtain the derivatives in the \eqref{eq:dupireFormula_impliedVol}.

Let $\theta$ be a vector of parameters fit to the weighted surface,

\[\min_\theta\sum_{T,k}\left|f^\theta(T,k)-\omega(T,k)\right| \ ,\]
where $f^\theta$ is the parametric function. Given a fit $\theta$, we need to check that
\begin{itemize}
\item the fitted surface is a free from calendar-spread arbitrage,
\item and each time slice is free of butterfly arbitrage.
\end{itemize}
\begin{proposition}
\label{prop:calendarSpreadArb}
The fitted surface is free from calendar-spread arbitrage if
\[f_T^\theta(T,k) \geq 0\ ;\]
in other words the slices of the weighted surface cannot intersect. 
\end{proposition}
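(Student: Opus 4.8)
The plan is to reduce the statement to the classical characterization of calendar-spread arbitrage --- that undiscounted call prices, written as a function of maturity and \emph{forward} moneyness, are non-decreasing in maturity --- and then simply to differentiate the dimensionless Black--Scholes representation already recorded above.

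First I would fix the normalization. Writing $k=\log(K/(S_te^{r(T-t)}))$ for the log forward-moneyness and $F_T=S_te^{r(T-t)}$ for the forward, the excerpt gives $C(t,s,K)=s\,C^{BS_1}(\omega(T,k),k)$, and since $C(t,s,K)=e^{-r(T-t)}\mathbb E^Q[(S_T-K)^+|S_t=s]=s\,\mathbb E^Q[(M_T-e^k)^+]$ with $M_T\doteq S_T/F_T$ a $\mathbb Q$-martingale started at $1$ (because $e^{-r(u-t)}S_u/S_t$ is a $\mathbb Q$-martingale), one obtains the identification $C^{BS_1}(\omega(T,k),k)=\mathbb E^Q[(M_T-e^k)^+]$. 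A calendar spread at fixed moneyness $e^k$ is long the longer-dated and short the shorter-dated of these normalized claims, and the surface carries no calendar-spread arbitrage precisely when, for every fixed $k$, the map $T\mapsto C^{BS_1}(\omega(T,k),k)$ is non-decreasing. For the direction we actually need (monotone normalized prices imply no such arbitrage) I would invoke the conditional Jensen inequality $\mathbb E^Q[(M_{T_2}-e^k)^+\mid\mathcal F_{T_1}]\ge (M_{T_1}-e^k)^+$ applied to the martingale $M$, stating this as the standard fact it is.

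Next I would differentiate. Holding $k$ fixed, the chain rule gives $\partial_T C^{BS_1}(\omega(T,k),k)=C_\omega^{BS_1}(\omega(T,k),k)\,\omega_T(T,k)$, and the dimensionless Black--Scholes vega in total variance is strictly positive: using the identity $N'(d_1)=e^k N'(d_2)$ one computes $C_\omega^{BS_1}(\omega,k)=\tfrac{1}{2\sqrt\omega}N'(d_1)>0$ for $\omega>0$ (this is the positivity of Vega from the Greeks chapter). Hence $T\mapsto C^{BS_1}(\omega(T,k),k)$ is non-decreasing for every $k$ if and only if $\omega_T(T,k)\ge 0$ for all $T,k$. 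Since the calibrated surface uses $f^\theta$ in place of $\omega$, imposing $f_T^\theta(T,k)\ge 0$ produces exactly this monotonicity and therefore excludes calendar-spread arbitrage; integrating $f_T^\theta\ge 0$ in $T$ then gives $f^\theta(T_1,k)\le f^\theta(T_2,k)$ for all $k$ whenever $T_1<T_2$, which is the geometric restatement that the time-slices $k\mapsto f^\theta(T_i,k)$ do not cross.

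I expect the only genuine obstacle to be the careful formulation of what ``calendar-spread arbitrage'' means once interest rates are present: the normalization must be by the forward (equivalently, moneyness, not strike, is held fixed), and one has to be honest that the equivalence between the no-arbitrage notion and monotonicity of the normalized price is the content being invoked rather than re-derived. The differentiation and the sign of Vega are routine. A minor point to flag is that $C_\omega^{BS_1}>0$ requires $\omega>0$, which is automatic since $\omega$ is a total implied variance, and the degenerate slice $T=t$ is handled directly by the terminal condition.
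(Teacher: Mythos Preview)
The paper states this proposition without proof, so there is no argument to compare against directly. Your proposal is correct and is essentially the standard derivation (as in Gatheral--Jacquier): fix forward log-moneyness $k$, use the dimensionless representation $C(t,s,K)=s\,C^{BS_1}(\omega(T,k),k)$ already set up in the paper, identify absence of calendar-spread arbitrage with monotonicity of $T\mapsto C^{BS_1}(\omega(T,k),k)$ at each fixed $k$, and conclude from the strict positivity of $C_\omega^{BS_1}$ that this is equivalent to $\omega_T(T,k)\ge 0$, hence to $f_T^\theta(T,k)\ge 0$ on the fitted surface. You also correctly isolate the one genuine subtlety, namely that when $r\neq 0$ the comparison must be made at fixed forward moneyness rather than fixed strike; the paper's choice to parameterize everything in $k$ already encodes this, which is presumably why it felt free to omit the proof.
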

To prevent butterfly arbitrage, effectively we need to ensure that the option price yields a probability density function,
\[\frac{d}{dK}\mathbb P^Q(S_T\leq K|\mathcal F_t) = e^{r(T-t)}\frac{\partial^2}{\partial K^2}C(t,T,K) \ .\]
Using the parameterized surface, this density is 
\begin{align*}
e^{r(T-t)}\frac{\partial^2}{\partial k^2}C(t,T,K)&= se^{r(T-t)}\frac{\partial^2}{\partial k^2}C^{BS_1}(f^\theta(T,k),k)\\
&=\frac{g^\theta(k)}{\sqrt{2\pi f^\theta(T,k)}}e^{-\tfrac12d_2^2(k;\theta)}\ ,
\end{align*}
where $d_2(k;\theta)=-k/\sqrt{f^\theta(T,k)}- \sqrt{f^\theta(T,k)}/2$, and 
\begin{equation}
\label{eq:f_density}
g^\theta(k) = \left(1-\frac{kf_k^\theta(T,k)}{2f^\theta(T,k)}\right)^2-\frac{f_k^\theta(T,k)^2}{4}\left(\frac{1}{f^\theta(T,k)}+\frac14\right)+\frac{f_{kk}^\theta(T,k)}{2}\ .
\end{equation}
\begin{proposition}
\label{prop:butterflyArb}
The fitted surface is free from butterfly arbitrage if and only if $g^\theta(k)\geq 0$ and $\lim_{k\rightarrow+\infty}d_1(k;\theta)=-\infty$, where $d_1(k;\theta)=-k/\sqrt{f^\theta(T,k)}+ \sqrt{f^\theta(T,k)}/2$.
\end{proposition}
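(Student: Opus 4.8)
The plan is to reduce ``the slice is free of butterfly arbitrage'' to the single statement that the implied (Breeden--Litzenberger) risk-neutral density is a bona fide probability density on $\mathbb R$, i.e.\ that it is non-negative and has total mass one, and then to read off the two stated conditions from the density formula \eqref{eq:f_density} and a Black--Scholes tail estimate. That butterfly arbitrage is precisely the failure of one of those two properties is already implicit in the Breeden--Litzenberger computation above: a butterfly spread has non-negative payoff and price proportional to $\frac{\partial^2}{\partial K^2}C$, so no-arbitrage forces $\frac{\partial^2}{\partial K^2}C\ge0$; and if the density integrated to less than one, probability mass would escape to $S_T=\infty$ and the call $C(t,T,K)=e^{-r(T-t)}\mathbb E^Q[(S_T-K)^+|\mathcal F_t]$ would not decay to $0$ as $K\to\infty$, which is itself a static arbitrage.

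First I would dispatch non-negativity. The displayed identity expressing $e^{r(T-t)}\frac{\partial^2}{\partial K^2}C$ through the parameterization as $\frac{g^\theta(k)}{\sqrt{2\pi f^\theta(T,k)}}e^{-\frac12 d_2^2(k;\theta)}$ has a strictly positive Gaussian prefactor and a strictly positive $1/\sqrt{2\pi f^\theta}$, so the density is $\ge0$ for every $k$ if and only if $g^\theta(k)\ge0$ for every $k$; this is nothing more than \eqref{eq:f_density}.

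Then the unit-mass condition. With $g^\theta\ge0$ in force, the parameterized call price is convex in $K$ with $\frac{\partial}{\partial K}C|_{K=0^+}=-e^{-r(T-t)}$, so integrating $e^{r(T-t)}\frac{\partial^2}{\partial K^2}C$ in $K$ shows that total mass one is equivalent to $\frac{\partial}{\partial K}C\to0$, equivalently $C(t,s,K)\to0$, as $K\to\infty$. Writing $C(t,s,K)=s\,C^{BS_1}(\omega(T,k),k)$ with $k\to+\infty$, this becomes $C^{BS_1}(\omega(T,k),k)\to0$. Now I would use two elementary facts: the sandwich $0\le C^{BS_1}(\omega,k)=N(d_1)-e^kN(d_2)\le N(d_1)$, and the algebraic identity $d_2(k;\theta)=-\sqrt{d_1(k;\theta)^2+2k}$ for $k>0$ (from $d_1-d_2=\sqrt{f^\theta}$, $d_1+d_2=-2k/\sqrt{f^\theta}$, the sign pinned down since $d_2<d_1$ while $d_2^2>d_1^2$). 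The sandwich immediately gives $d_1(k;\theta)\to-\infty\ \Rightarrow\ C^{BS_1}\to0$. For the reverse, I would argue by contradiction: if $d_1(k_n;\theta)\ge -M$ along some $k_n\to\infty$, then $d_2(k_n;\theta)\le-\sqrt{2k_n}\to-\infty$, Mills' ratio gives $e^{k_n}N(d_2(k_n;\theta))\le(2\pi)^{-1/2}e^{-d_1(k_n;\theta)^2/2}/|d_2(k_n;\theta)|\to0$, and hence $C^{BS_1}(\omega(T,k_n),k_n)=N(d_1(k_n;\theta))-o(1)$ cannot tend to $0$ because any limit point of $N(d_1(k_n;\theta))$ lies in $(0,1]$. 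Combining with the non-negativity step gives the proposition.

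The step I expect to be the real obstacle is the unit-mass equivalence: one must justify carefully that $g^\theta\ge0$ genuinely upgrades the formally-defined surface $C(t,s,K)=s\,C^{BS_1}(f^\theta,k)$ to an honest convex, $K$-monotone call surface with the correct left boundary slope, so that the only obstruction to mass one sits in the $K\to\infty$ tail, and then that the Black--Scholes limit $C^{BS_1}(\omega(k),k)\to0\iff d_1(k;\theta)\to-\infty$ survives the possibility that $f^\theta(T,k)$ itself grows with $k$. By comparison the non-negativity half is immediate from \eqref{eq:f_density}.
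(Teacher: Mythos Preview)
The paper states Proposition~\ref{prop:butterflyArb} without proof; it moves directly from the statement to the SVI example. So there is no ``paper's own proof'' to compare against, and your proposal stands on its own.

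Your argument is essentially the standard one (as in Gatheral--Jacquier) and is correct in outline and in its key computations. The non-negativity half is immediate from the displayed density formula, as you say. For the mass-one half, your use of the identity $d_2^2=d_1^2+2k$ together with the Mills-ratio bound is clean: the equality $e^{k}e^{-d_2^2/2}=e^{-d_1^2/2}$ is exactly what makes the estimate $e^{k}N(d_2)\le (2\pi)^{-1/2}e^{-d_1^2/2}/|d_2|$ work, and since $|d_2|=\sqrt{d_1^2+2k}\to\infty$ along any $k_n\to\infty$, this term vanishes regardless of how $d_1$ behaves. Your contradiction then closes because $N(d_1(k_n))\ge N(-M)>0$ along the offending subsequence.

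You are right to flag the left-tail issue as the step requiring care. The proposition as stated imposes a condition only at $k\to+\infty$, so for the equivalence ``total mass one $\iff$ $C\to0$ at $K\to\infty$'' to be clean one implicitly needs the left boundary to behave, i.e.\ $\partial_K C\to -e^{-r(T-t)}$ as $K\to 0^+$, which in turn asks that $d_2(k;\theta)\to+\infty$ as $k\to-\infty$. For reasonable parameterizations (in particular SVI, where $f^\theta$ grows at most linearly in $|k|$) this holds, but it is an extra hypothesis not recorded in the proposition. Your closing paragraph identifies this precisely, which is the honest thing to do given that the paper itself leaves the statement unproved.
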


\begin{example}[The SVI Parameterization]
One such model is the stochastic volatility inspired (SVI) parameterization of the implied-volatility surface. The SVI uses the parametric function
\begin{equation}
\label{eq:sviFunc}
f^\theta(k) = a+b\left(\rho(k-m)+\sqrt{(k-m)^2+\xi^2}\right)
\end{equation}
with parameters $\theta=(a,b,\rho,m,\xi)$ fitted across $k$ and $T$. Clearly, the SVI has $\frac{\partial}{\partial T}f^\theta(k) = 0$ to avoid calendar-spread arbitrage (per Proposition \ref{prop:calendarSpreadArb}) and parameterizations can be found to avoid butterfly arbitrage per Proposition \ref{prop:butterflyArb}.
\end{example}

\chapter{Stochastic Volatility}
\label{chapt:stochVol}
Stochastic volatility is another popular way to fit the implied volatility smile. It has the advantage of assuming volatility brings another source of randomness, and this randomness can be hedge if it is also a Brownian motion. Furthermore, calibration of stochastic volatility models does have some robustness to the market's daily changes. However, whereas local volatility models are generally calibrated to fit the implied volatility surface (i.e. to fit options of all maturities), stochastic volatility models have trouble fitting more than one maturity at a time.

Consider the following stochastic volatility model,
\begin{eqnarray}
\label{eq:SVM}
dS_t&=&\mu S_tdt+\sigma(X_t)S_tdW_t\\
\label{eq:dX}
dX_t &=&\alpha(X_t)dt+\beta(X_t)dB_t
\end{eqnarray}
where $W_t$ and $B_t$ are Brownian motions with correlation $\rho\in(-1,1)$ such that $\mathbb E[dW_tdB_t]=\rho dt$, $\sigma(x)>0$ is the volatility function, and $X_t$ is the volatility process and is usually mean-reverting. An example of a mean-revering process is the square-root process,
\[dX_t = \kappa(\bar X-X_t)dt+\gamma\sqrt{X_t}dB_t\ .\]
The square-root process is used in the Heston model (along with $\sigma(x) =\sqrt{x}$), and usually relies on what is known as the Feller condition ($\gamma^2\leq 2\bar X\kappa$) so that $X_t$ never touches zero.

\section{Hedging and Pricing}
Assuming that $\alpha$ and $\beta$ are `well-behaved' functions, we define the differential operator
\[\mathcal L \doteq\frac{\partial}{\partial t}+\frac{\sigma^2(x)s^2}{2}\frac{\partial^2}{\partial s^2}+\mu s \frac{\partial}{\partial s}+\frac{\beta^2(x)}{2}\frac{\partial^2}{\partial x^2}+\alpha(x)\frac{\partial}{\partial x}+\rho\beta(x)\sigma(x)s\frac{\partial^2}{\partial x\partial s}\ ; \] 
the interpretation of $\mathcal L$ is the following: $\frac{\sigma^2(X_t)S_t^2}{2}\frac{\partial^2}{\partial s^2}+\mu S_t \frac{\partial}{\partial s}$ are the generator of $S_t$; $\frac{\beta^2(X_t)}{2}\frac{\partial^2}{\partial x^2}+\alpha(X_t)\frac{\partial}{\partial x}$ are the generator of $X_t$; $\rho\beta(X_t)\sigma(X_t)S_t\frac{\partial^2}{\partial x\partial s}$ is the cross-term; and the remaining are the separate stochastic terms. We use when applying bi-variate It\^o Lemma for the process in \eqref{eq:SVM} and \eqref{eq:dX}:\\
For any twice differentiable function $f(t,s,x)$, the It\^o differential is 
\begin{align}
\label{eq:2Dito} 
df(t,S_t,X_t) &= \mathcal Lf(t,S_t,X_t)dt+\sigma(X_t)S_t\frac{\partial}{\partial s}f(t,S_t,X_t)dW_t+\beta(X_t)\frac{\partial}{\partial x}f(t,S_t,X_t)dB_t\ .
\end{align}

The price $C(t,s,x)$ is the price of claim $\psi(s,x)$ paid at time $T$ given $S_t=s$ and $X_t=x$. We hedge with a self-financing portfolio $V_t$ consisting of risky-asset, the risk-free bank account, and a 2nd derivative $C'$ that is settled at time $T'>T$. This method for hedging stochastic volatility is called the Hull-White method. 

The self-financing condition is
\[dV_t = a_tdS_t+b_tdC'(t,S_t,X_t)+r(V_t - a_tS_t-b_tC'(t,S_t,X_t))dt\ .\]
Now, taking a long position in $V_t$ and a short position in $C(t,S_t,X_t)$, we apply the bivariate It\^o lemma of \eqref{eq:2Dito} to get the dynamics of this new portfolio:

\[d(V_t-C(t,S_t,X_t))\]

%
%
%
\[=a_t(\mu S_tdt+\sigma(X_t)S_tdW_t)\]

\[+ b_t\mathcal LC'(t,S_t,X_t)dt+b_t\sigma(X_t)S_t\frac{\partial}{\partial s}C'(t,S_t,X_t)dW_t+b_t\beta(X_t)\frac{\partial}{\partial x}C'(t,S_t,X_t)dB_t\]

\[+r(V_t - a_tS_t-b_tC'(t,S_t,X_t))dt\]
\[-\mathcal LC(t,S_t,X_t)dt-\sigma(X_t)S_t\frac{\partial}{\partial s}C(t,S_t,X_t)dW_t-\beta(X_t)\frac{\partial}{\partial x}C(t,S_t,X_t)dB_t\ .\]
We now choose $a_t$ and $b_t$ so that the $dW_t$ and the $dB_t$ terms vanish, which means that

\begin{align*}
&\sigma(X_t)S_t\frac{\partial}{\partial s}C(t,S_t,X_t) = a_t\sigma(X_t)S_t+ b_t\sigma(X_t)S_t\frac{\partial}{\partial s}C'(t,S_t,X_t)\\
&\\
&\beta(X_t)\frac{\partial}{\partial x}C(t,S_t,X_t)=b_t\beta(X_t)\frac{\partial}{\partial x}C'(t,S_t,X_t)
\end{align*}
or 

\begin{align}
\label{eq:a_t}
&b_t = \frac{\frac{\partial}{\partial x}C(t,S_t,X_t)}{\frac{\partial}{\partial x}C'(t,S_t,X_t)}\\
\label{eq:b_t}
&a_t = \frac{\partial}{\partial s}C(t,S_t,X_t)-b_t\frac{\partial}{\partial s}C'(t,S_t,X_t)\ .
\end{align}
Plugging these solutions into the differential, the $\alpha(X_t)\frac{\partial}{\partial x}$ terms cancel, the $\mu S_t\frac{\partial}{\partial s}$ terms cancel, and by the same arbitrage argument as Black-Scholes, it follows that $V_t-C(t,S_t,X_t)$ must grow at the risk-free rate:
\begin{align*}
&d(V_t-C_t(t,S_t,X_t))\\
&= a_t\mu S_t dt+b_t\mathcal LC'(t,S_t,X_t)dt+r(V_t - a_tS_t-b_tC'(t,S_t,X_t))dt  -\mathcal LC(t,S_t,X_t)dt\\
&=r(V_t-C(t,S_t,X_t))dt\ .
\end{align*}
Inserting $a_t$ and $b_t$ from equations \eqref{eq:a_t} and \eqref{eq:b_t}, and after some rearranging of terms (and dropping the $dt$'s), we get 
\[\left(\mathcal L-(\mu-r)S_t\frac{\partial}{\partial s}-r\right)C(t,S_t,X_t)= b_t\left(\mathcal L-(\mu-r)S_t\frac{\partial}{\partial s}-r\right)C'(t,S_t,X_t)\]
and dividing both sides by $\frac{\partial}{\partial x}C(t,S_t,X_t)$, we get
\[\frac{\left(\mathcal L-(\mu-r)S_t\frac{\partial}{\partial s}-r\right)C(t,S_t,X_t)}{\frac{\partial}{\partial x}C(t,S_t,X_t)}= \frac{\left(\mathcal L-(\mu-r)S_t\frac{\partial}{\partial s}-r\right)C'(t,S_t,X_t)}{\frac{\partial}{\partial x}C'(t,S_t,X_t)}\doteq R(t,s,x)\ .\]
The left-hand side of this equation does not depend on specifics of $C'$ (e.g. maturity and strikes), and the right-hand side does not depend on specifics of $C$. Therefore, there is a function $R(t,s,x)$ that does not depend on parameters such as strike price and maturity, and it equates the two sides of the expression. We let $R$ take the form
\[R(t,s,x) = -\alpha(x)+\beta(x)\Lambda(t,s,x)\]
where 
\[\Lambda(t,s,x) = \rho\frac{\mu-r}{\sigma(x)}+g(t,s,x)\sqrt{1-\rho^2}\ ,\]
where $g$ is an arbitrary function. Hence, we arrive at the PDE for the price $C(t,s,x)$ of a European Derivative with stochastic vol:
\begin{proposition}\textbf{(Pricing PDE for Stochastic Volatility).}
\label{prop:stochasticVolPDE}
Given the stochastic volatility model of equations \eqref{eq:SVM} and \eqref{eq:dX}, the price of a European claim with payoff $\psi(S_T)$ satisfies 
\begin{align}
\nonumber
\Bigg(\frac{\partial}{\partial t}+\frac{\sigma^2(x)s^2}{2}\frac{\partial^2}{\partial s^2}+\rho\beta(x)\sigma(x)s\frac{\partial^2}{\partial x\partial s}+\alpha(x)\frac{\partial}{\partial x}+\frac{\beta^2(x)}{2}\frac{\partial^2}{\partial x^2}+rs\frac{\partial}{\partial s}-r\Bigg)C&\\
\label{eq:SVpde}
=\beta(x)\Lambda(t,s,x)\frac{\partial}{\partial x}C&
\end{align}
with $C\Big|_T=\psi(s,x)$. 
\end{proposition}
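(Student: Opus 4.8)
The plan is to follow the Hull--White hedging argument sketched in the preceding paragraphs: build a self-financing portfolio that is simultaneously immunized against both sources of randomness, then invoke no-arbitrage. The essential new ingredient compared with the Black--Scholes derivation is that the market driven by \eqref{eq:SVM}--\eqref{eq:dX} has two Brownian motions but only one traded risky asset, hence is incomplete; to neutralize the $dB_t$ risk the hedging portfolio must include a second traded derivative $C'(t,S_t,X_t)$ with a later maturity $T'>T$, in addition to the stock and the bank account.

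First I would write $V_t = a_tS_t + b_tC'(t,S_t,X_t) + \eta_t$ with $\eta_t$ the cash position, and impose the self-financing condition $dV_t = a_t\,dS_t + b_t\,dC'(t,S_t,X_t) + r\eta_t\,dt$ with $\eta_t = V_t - a_tS_t - b_tC'$. Next, applying the bivariate It\^o lemma \eqref{eq:2Dito} to both $C'(t,S_t,X_t)$ and the target price $C(t,S_t,X_t)$, I would compute the dynamics of the hedged position $V_t - C(t,S_t,X_t)$ and collect the coefficients of $dW_t$ and $dB_t$. Setting both coefficients to zero yields a $2\times 2$ linear system for $(a_t,b_t)$ whose solution is $b_t = (\partial_x C)/(\partial_x C')$ and $a_t = \partial_s C - b_t\,\partial_s C'$; with this choice the stochastic terms vanish and, crucially, the $\alpha(X_t)\partial_x$ and $\mu S_t\partial_s$ contributions cancel between the two It\^o expansions.

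Then no-arbitrage forces the now-riskless position to earn the risk-free rate, $d(V_t - C) = r(V_t - C)\,dt$. Substituting $a_t,b_t$ and rearranging, the identity can be massaged into the form $\bigl[(\mathcal L - (\mu-r)s\partial_s - r)C\bigr]/\partial_x C = \bigl[(\mathcal L - (\mu-r)s\partial_s - r)C'\bigr]/\partial_x C'$. Since the left side involves only $C$ and the right side only $C'$, while $C$ and $C'$ are arbitrary derivatives on the same underlying, both sides must equal a single function $R(t,s,x)$ depending on $(t,s,x)$ alone. Writing $R(t,s,x) = -\alpha(x) + \beta(x)\Lambda(t,s,x)$ \emph{defines} the market price of volatility risk $\Lambda$, and unwinding the definition of $\mathcal L$ produces \eqref{eq:SVpde} with terminal condition $C|_{t=T} = \psi(s,x)$.

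The main obstacle is conceptual rather than computational: justifying the separation step. One must argue that $R$ genuinely cannot depend on the contract characteristics (strike, maturity) of either $C$ or $C'$ — this is exactly where the freedom to choose $C'$ among all traded derivatives is used, and it is also why $\Lambda$ is not pinned down by no-arbitrage alone, reflecting the unhedgeable component of volatility risk; the particular form $\Lambda = \rho(\mu-r)/\sigma(x) + g(t,s,x)\sqrt{1-\rho^2}$ is then a modeling choice. The remaining steps — the bivariate It\^o expansion, solving the linear system, and the algebraic rearrangement — are routine bookkeeping.
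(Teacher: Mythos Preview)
Your proposal is correct and follows essentially the same route as the paper: the Hull--White construction with a second traded derivative $C'$, bivariate It\^o expansion, choice of $(a_t,b_t)$ to kill both Brownian terms, the no-arbitrage identification $d(V_t-C)=r(V_t-C)\,dt$, and the separation-of-variables argument leading to a contract-independent function $R(t,s,x)=-\alpha(x)+\beta(x)\Lambda(t,s,x)$. Your remarks on the conceptual subtlety of the separation step and on $\Lambda$ being undetermined by no-arbitrage alone also match the paper's discussion.
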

\noindent The PDE in \eqref{eq:SVpde} can be solved with Feynman-Kac, 
\[C(t,s,x) = e^{-r(T-t)}\mathbb E^Q\{\psi(S_T,X_T)|S_t=s,X_t=x\}\ .\]
which yields a description of an EMM, $\mathbb Q$, under which the stochastic volatility model is
\begin{eqnarray*}
\frac{dS_t}{S_t}&=&r dt+\sigma(X_t)dW_t^Q\\
dX_t&=&\left(\alpha(X_t)-\beta(X_t)\Lambda(t,S_t,X_t)\right)dt+\beta(X_t)dB_t^Q
\end{eqnarray*}
where $W_t^Q$ and $B_t^Q$ are $\mathbb Q$-Brownian motions with the same correlation $\rho$ as before. The interpretation of $\Lambda(t,s,x)$ is that it is \textbf{the market price of volatility risk.} 

One can think of a market as complete when there are as many non-redundant assets as there are sources of randomness. In this sense we have `completed' this market with stochastic volatility by including the derivative $C'$ among the traded assets. This has allowed us to replicate the European claim and obtain a unique no-arbitrage price that must be the solution to \eqref{eq:SVpde}.\\


\begin{example}\textbf{(The Heston Model).} For the Heston model described earlier, the EMM $\mathbb Q$ is described by
\begin{align}
\label{eq:dSheston}
&dS_t=r S_tdt+\sqrt{X_t}dW_t^Q\\
\label{eq:dXheston}
&dX_t=\kappa(\bar X-X_t)dt+\lambda X_tdt+\gamma\sqrt{X_t}dB_t^Q
\end{align}
where $\lambda X_t$ is the assumed form of the volatility price of risk. This model can be re-written,
\[dX_t = \tilde\kappa(\tilde{\bar X}-X_t)dt+\gamma\sqrt{X_t}dB_t^Q\]
where $\tilde\kappa =\kappa-\lambda$ and $\tilde{\bar X} = \frac{\kappa\bar X}{\kappa-\lambda}$. It is important to have the Feller condition 
\[\gamma^2\leq 2\tilde\kappa\tilde{\bar X}\]
otherwise the PDE requires boundaries for the event $X_t = 0$. The process $X_t$ is degenerate, but it is well-known that the Feller condition is the critical assumption for the application of Feynman-Kac. The fit of the Heston model to the implied volatility of market data is shown in Figure \ref{fig:hestonFit}.
\begin{figure}[htbp] 
   \centering
   \includegraphics[width=6in]{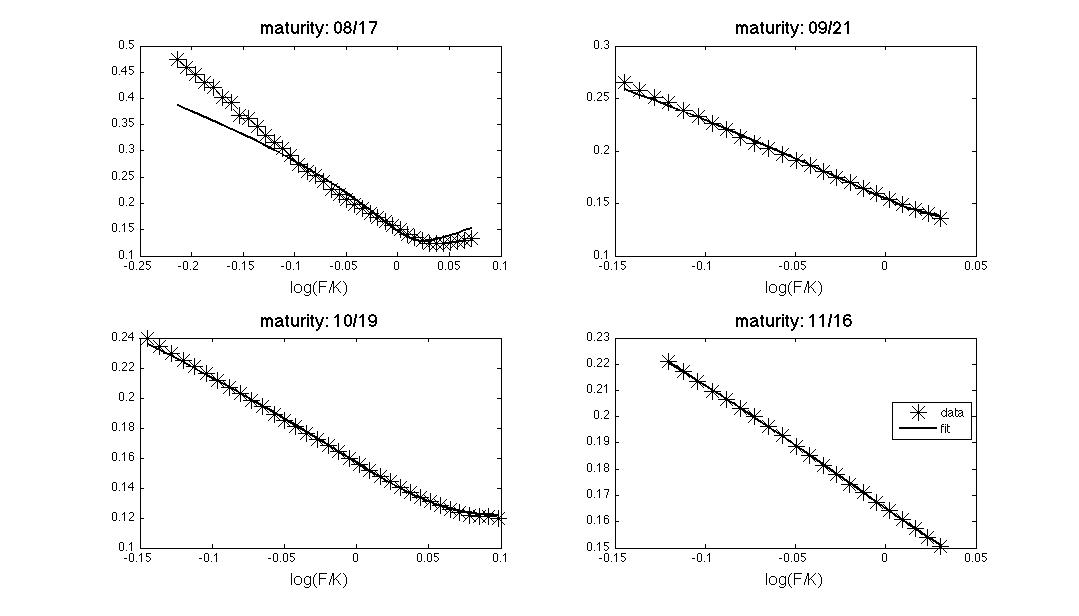} 
   \caption{The fit of the Heston model to implied volatility of market data on July 27 of 2012, with different maturities. Notice that the Heston model doesn't fit very well to the options with shortest time to maturity.}
   \label{fig:hestonFit}
\end{figure}
\end{example}
\section{Monte Carlo Methods}
A slow but often reliable method for pricing derivatives is to approximate the risk-neutral expectation with independent samples. Essentially, sample paths of Brownian motion are obtained using a pseudo-random number generated, the realized derivative payoff is computed for each sample, and then the average of these sample payoffs is an approximation of the true average. For example, let $\ell=1,2,\dots,N$, let $S_T^{(\ell)}$ be the index for an independent sample from $S_T$'s probability distribution. If $\mathbb E^QS_T^2<\infty$, then by the law of large numbers we have
\[\frac 1N\sum_{\ell=1}^N(S_T^{(\ell)}-K)^+\stackrel{N\rightarrow\infty}{\longrightarrow}\mathbb E^Q(S_T-K)^+\ .\] 

For stochastic differential equations, the typical method is to generate samples that are approximately from the same distribution as the price. The basic idea is to use an \textbf{Euler-Maruyama scheme,}
\begin{equation}
\label{eq:EMscheme}
\log \tilde S_{t+\Delta t}^{(\ell)} = \log \tilde S_t^{(\ell)}+\left(\mu-\frac 12\sigma^2\right)\Delta t+\sigma(W_{t+\Delta t}^{(\ell)}-W_t^{(\ell)})\ ,
\end{equation}
for some small $\Delta t>0$, with 
\[W_{t+\Delta t}^{(\ell)}-W_t^{(\ell)}\sim iid \mathcal N(0,\Delta t)\ .\]
This scheme is basically the discrete backward Riemann sum from which the It\^o integral was derived (see Chapter \ref{chapt:brownianMotion}). In the limit, the Monte-Carlo average will be with little-o of the true expectation,
\[\lim_{N\rightarrow\infty} \frac 1N\sum_{\ell=1}^N(\tilde S_T^{(\ell)}-K)^+ = \mathbb E^Q(S_T-K)^++o(\Delta t)\ .\]
The scheme in \eqref{eq:EMscheme} can easily be adapted for a local volatility model, and can also be adapted to stochastic volatility models, but the latter will also need a scheme for generating the volatility process $X_t$. 
\begin{example} \textbf{(The exponential OU model).} Consider the stochastic volatility model
\begin{align*}
&\frac{dS_t}{S_t} = rdt+e^{X_t}dW_t^Q\\
&dX_t = \kappa(\bar X-X_t)dt+\gamma dB_t^Q
\end{align*}
with correlation $\rho$ between $W_t^Q$ and $B_t^Q$. The Euler-Maruyama scheme is
\begin{align*}
&\log \tilde S_{t+\Delta t} = \log \tilde S_t+\left(r-\frac 12e^{2\tilde X_t}\right)\Delta t+e^{\tilde X_t}\left(\rho (B_{t+\Delta t}-B_t)+\sqrt{1-\rho^2}(W_{t+\Delta t}-W_t)\right)\\
&\tilde X_{t+\Delta t} = \tilde X_t+\kappa\left(\bar X-\tilde X_t \right)\Delta t+\gamma(B_{t+\Delta t}-B_t)
\end{align*}
where $B_{t+\Delta t}^{(\ell)}-B_t^{(\ell)}$ and $W_{t+\Delta t}^{(\ell)}-W_t^{(\ell)}$ are increments of independent Brownian motions.
\end{example}

Sometimes, there more clever methods of sampling need to be employed. For instance, the Heston model:
\begin{example}\textbf{(Sampling the Heston Model).} Recall the Heston model of \eqref{eq:dSheston} and \eqref{eq:dXheston}. A scheme similar to the one used in the previous example will not work for the Heston model because Euler-Maruyama,
\[\tilde X_{t+\Delta t}=\tilde X_t+\kappa(\bar X-\tilde X_t)\Delta t+\gamma\sqrt{\tilde X_t}\left(B_{t+\Delta t}-B_t\right)\]
will allow the volatility process to become negative. An alternative is to consider the Stratonovich-type differential,\footnote{This refers to the Stratonovich-type integral that is the limit of forward Riemann sums, $\int f_s\circ dB_s = \lim_N\sum_{n=1}^Nf_{t_n}(B_{t_n}-B_{t_{n-1}})$. For the square-root process, the It\^o lemma on $\sqrt{X_t}$ is used to show that $\gamma\int \sqrt{X_s}dB_s = \gamma\lim_n\sum_n\sqrt{X_{t_{n-1}}}\Delta B_{t_{n-1}} = -\gamma\lim_n\sum_n\left(\sqrt{X_{t_n}}-\sqrt{X_{t_{n-1}}}\right)(B_{t_n}-B_{t_{n-1}}) +\gamma\lim_n\sum_n\sqrt{X_{t_n}}(B_{t_n}-B_{t_{n-1}})=-\frac{\gamma^2}{2}\int ds+\gamma\int \sqrt{X_s}\circ dB_s$.}
\[dX_t = \left(\kappa(\bar X-X_t)-\frac{\gamma^2}{2}\right)dt+\gamma\sqrt{X_t}\circ dB_t\]
and then use an implicit Euler-Maruyama scheme
\[\tilde X_{t+\Delta t}-\tilde X_t =\left( \kappa(\bar X-\tilde X_{t+\Delta t})-\frac{\gamma^2}{2}\right)\Delta t+\gamma \sqrt{\tilde X_{t+\Delta t}} \left(B_{t+\Delta t}-B_t\right)\ .\] 
which can be solved for $\sqrt{\tilde X_{t+\Delta t}}$ using the quadratic equation. Notice that a condition for the discriminant to be real is the Feller condition that $\gamma^2\leq 2\kappa\bar X$. 
\end{example}

In general, Monte Carlo methods are good because they can be used to price pretty much any type of derivative instrument, so long as the EMM is given. For instance, it is very easy to price an Asian option with Monte Carlo. Another advantages of Monte Carlo is that there is no need for an explicit solution to any kind of PDE; the methodology for pricing is to simply simulate and average. However, as mentioned above, this can be very slow, and so there is a need (particularly among practitioners) for faster ways to compute prices.

Another important topic to remark on is \textbf{importance sampling,} which refers to simulations generated under an equivalent measure for which certain rare events are more common. For instance, suppose one wants to use Monte Carlo to price a call option with extreme strike. It will be a waste of time to generate samples from the EMM's model if the overwhelming majority of them will finish out of the money. Instead, one can sample from another model that has more volatility, so that more samples finish in the money. Then, assign importance weights based on ratio of densities and take the average. See \cite{glasserman} for further reading on Monte Carlo methods in finance.
%
\section{Fourier Transform Methods}
The most acclaimed way to compute a European option price is with the so-called \textit{Fourier transform methods.} In particular, the class of affine models in \cite{dps2000} have tractable pricing formulae because they have explicit expressions for their Fourier transforms. The essential ingredient to these methods is the characteristic function,
\[\phi_T(u)\doteq\mathbb E^Q\exp(iu\log(S_T))\qquad\forall u\in\mathbb C,~i=\sqrt{-1}\ ,\]
where $\mathbb E^Q$ now denotes expectation under an EMM. The function $\phi_T(u)$ is the Fourier transform of the $\log S_T$'s distribution function,
\[\phi_T(u) = \int_{-\infty}^\infty  e^{iu s}f(s)ds\]
where $f(s)$ is the density function for $\log S_T$. The inverse Fourier transform is $f(s) = \tfrac{1}{2\pi}\int_{-\infty}^\infty e^{-iu s}\phi_T(u)du$, which leads to the cumulative distribution function
\[F(s)-F(0)=\int_0^sf(v)dv = \frac{1}{2\pi}\int_{-\infty}^\infty  \frac{1-e^{-ius}}{iu}\phi_T(u)du\ .\]

The crucial piece of information \emph{is} the characteristic function, as any European claim can be priced provided there is also a Fourier transform for the payoff function. Let $p_T$ denote the density function on the risk-neutral distribution of $\log(S_T)$ and let $\widehat\psi$ denote the Fourier transform of the payoff function. It is shown in \cite{lewis} that 
\[\mathbb E^Q\psi(\log(S_T)) = \int_{-\infty}^\infty\psi(s)f(s)ds=\int_{-\infty}^\infty\frac{1}{2\pi} \int_{iz-\infty}^{iz+\infty}e^{-ius}\widehat \psi(u) du f(s)ds\]
\[=\frac{1}{2\pi}\int_{iz-\infty}^{iz+\infty}\widehat \psi(u)\int_{-\infty}^\infty e^{-ius}  p_T(s)ds du=\frac{1}{2\pi}\int_{iz-\infty}^{iz+\infty}\widehat \psi(u)\phi_T(-u) du\]
where the constant $z\in\mathbb R$ in the limit of integration is the appropriate strip of regularity for non-smooth payoff functions (see Table \ref{tab:payoffs}). This usage of characteristic functions is far reaching, as the class of L\'evy jump models are defined in terms of their characteristic function, or equivalently with the L\'evy-Khintchine representation (for more on financial models with jumps see \cite{contTankov}). 

\begin{table}[htb]
\center
\caption{Fourier Transforms of Various Payoffs, $\widehat\psi(u)=\int_{-\infty}^\infty e^{ius}\psi(s)ds$ and $\psi(s)=\frac{1}{2\pi} \int_{iz-\infty}^{iz+\infty}e^{-ius}\widehat \psi(u) du$, $z=\Im(u)$.}
\label{tab:payoffs}
\begin{tabular}{|p{4cm}|c|c|c|}
\hline
asset&$\psi(s)$&$\widehat\psi(u)$&regularity strip\\
\hline
call &$(e^s-K)^+$&$-\frac{K^{iu+1}}{u^2-iu}$&$z>1$\\
put&$(K-e^s)^+$&$-\frac{K^{iu+1}}{u^2-iu}$&$z<0$\\
covered call; cash secured put&$\min(e^s,K)$&$\frac{K^{iu+1}}{u^2-iu}$&$0<z<1$\\
cash or nothing call &$\indicator{e^s\geq K}$&$-\frac{K^{iu}}{iu}$&$z>0$\\
cash or nothing put &$\indicator{e^s\leq K}$&$\frac{K^{iu}}{iu}$&$z<0$\\
asset or nothing call&$e^s\indicator{e^s\geq K}$&$-\frac{K^{iu+1}}{iu+1}$&$z>1$\\
asset or nothing put&$e^s\indicator{e^s\leq K}$&$\frac{K^{iu+1}}{iu+1}$&$z<0$\\
Arrow-Debreu&$\delta(s-\log(K))$&$K^{iu}$&$z\in\mathbb R$\\
\hline
\end{tabular}
\end{table}

\section{Call-Option Pricing}
An important tool is the formula of Gil-Pelaez:
\begin{proposition}[Gil-Pelaez Inversion Theorem]
The distribution function $F(s)$ is given by
\begin{equation}
\label{eq:gilPelaezInversion}
F(s) = \frac12+\frac{1}{2\pi}\int_0^\infty \frac{e^{ius}\phi_T(-u)-e^{-ius}\phi_T(u)}{iu}du\ ,
\end{equation}
for all $s\in(-\infty,\infty)$.
\begin{proof}
First let's prove an identity,
\[\mbox{sign}(v-s) = \frac2\pi\int_0^\infty \frac{\sin(u(v-s))}{u}du = \left\{\begin{array}{cc}-1&\mbox{if }v<s\\0&\mbox{if }v=s\\1&\mbox{if }v>s\end{array}\right. \ .\]
This integral is shown using residue calculus: 
\begin{align*}
\int_0^\infty \frac{\sin(u)}{u}du &= \frac12\int_{-\infty}^\infty \frac{\sin(u)}{u}du=\frac12 \lim_{\epsilon\rightarrow 0}\lim_{R\rightarrow 0}\mbox{Im}\left(\int_{-R}^{-\epsilon}\frac{e^{iz}}{z}dz+\int_\epsilon^R\frac{e^{iz}}{z}dz\right)\ .
\end{align*}
A closed path can be constructed from $-R$ to $R$, making a semi-circle of radius $\epsilon$ around zero, and then from $R$ back to $-R$ on a semi-circle of radius $R$. Along this path the integral of $e^{iz}/z$ is zero,
\[\int_{-R}^{-\epsilon}\frac{e^{iz}}{z}dz+\int_\epsilon^R\frac{e^{iz}}{z}dz=-\int_{C_\epsilon}\frac{e^{iz}}{z}dz-\int_{C_R}\frac{e^{iz}}{z}dz=0\ ,\]
where $C_\epsilon =\{\epsilon e^{i(\theta-\pi)}|0\leq \theta\leq \pi\}$ and $C_R=\{Re^{i\theta}|0\leq \theta\leq \pi\}$. Now,
\begin{align*}
\left|\int_{C_R}\frac{e^{iz}}{z}dz \right|&=\frac1R\left|\int_0^\pi\frac{e^{iR\cos(\theta)-R\sin(\theta)}}{e^{i\theta}}d\theta\right|\\
&\leq \frac{1}{R}\int_0^\pi\left|\frac{e^{iR\cos(\theta)-R\sin(\theta)}}{e^{i\theta}}\right|d\theta \\
&= \frac{1}{R}\int_0^\pi e^{-R\sin(\theta)}d\theta \\
&<\frac{\pi}{R}\rightarrow 0\qquad\hbox{as }R\rightarrow\infty \ ,
\end{align*}
and 
\[\int_{C_\epsilon}\frac{e^{iz}}{z}dz = \int_\pi^0 \frac{1+O(\epsilon)}{\epsilon e^{i\theta}}i\epsilon e^{i\theta}d\theta = -i\pi +O(\epsilon)\ .\]
Hence,
\[\int_0^\infty \frac{\sin(u)}{u}du = \frac12 \lim_{\epsilon\rightarrow 0}\lim_{R\rightarrow 0}\mbox{Im}\left(\int_{-R}^{-\epsilon}\frac{e^{iz}}{z}dz+\int_\epsilon^R\frac{e^{iz}}{z}dz\right)=\frac\pi2\ .\]
Now notice that
\begin{align*}
\frac1\pi\int_0^\infty \frac{e^{ius}\phi_T(-u)-e^{-ius}\phi_T(u)}{iu}du&=\frac1\pi\int_0^\infty\int_{-\infty}^\infty \frac{e^{-iu(v-s)}-e^{iu(v-s)}}{iu}dF(v)du\\
&=-\frac2\pi\int_0^\infty\int_{-\infty}^\infty \frac{\sin(u(v-s))}{u}dF(v)du\\
&=-\int_{-\infty}^\infty d F(v)\frac2\pi\int_0^\infty \frac{\sin(u(v-s))}{u}du\\
&=\int_{-\infty}^s dF(v)-\int_s^\infty dF(v)\\
&=2F(s)-1\ ,
\end{align*}
which proves \eqref{eq:gilPelaezInversion}.
\end{proof}
\end{proposition}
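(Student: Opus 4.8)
The plan is to unfold the definition of the characteristic function, reduce the oscillatory $u$–integral to the classical Dirichlet sine integral, and then read off the answer as $2F(s)-1$.

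First I would write $\phi_T(\pm u)=\int_{-\infty}^\infty e^{\pm iuv}\,dF(v)$, a Riemann--Stieltjes integral against the law of $\log S_T$, and substitute this into the right--hand side of \eqref{eq:gilPelaezInversion}. Using $e^{-iu(v-s)}-e^{iu(v-s)}=-2i\sin\!\big(u(v-s)\big)$, the integrand collapses, for each fixed $u>0$, so that
\[\frac12+\frac1{2\pi}\int_0^\infty\frac{e^{ius}\phi_T(-u)-e^{-ius}\phi_T(u)}{iu}\,du=\frac12-\frac1\pi\int_0^\infty\!\!\int_{-\infty}^\infty\frac{\sin\!\big(u(v-s)\big)}{u}\,dF(v)\,du\ .\]
Next I would establish the identity $\int_0^\infty\frac{\sin w}{w}\,dw=\frac\pi2$ by contour integration: integrate $e^{iz}/z$ around the boundary of a half--annulus in the upper half--plane with inner radius $\epsilon$ and outer radius $R$; the integrand is holomorphic inside so the contour integral vanishes; letting $R\to\infty$ the large arc dies by Jordan's lemma, and letting $\epsilon\to0$ the small arc contributes $-i\pi$ times the residue $1$; taking imaginary parts gives the value. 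A change of variables $w=u\,|v-s|$ then yields $\frac2\pi\int_0^\infty\frac{\sin(u(v-s))}{u}\,du=\mathrm{sign}(v-s)$ for every $v$.

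The hard part will be justifying the interchange of the $u$-- and $v$--integrals above, since $\int_0^\infty|\sin w/w|\,dw=\infty$ rules out a direct appeal to Fubini. I would instead work with the truncations $I_U(t):=\int_0^U\frac{\sin(ut)}{u}\,du=\mathrm{sign}(t)\int_0^{U|t|}\frac{\sin w}{w}\,dw$, exploit the uniform boundedness of the partial Dirichlet integrals, $\sup_{U>0,\,t\in\mathbb R}|I_U(t)|=\sup_{x\ge0}\big|\int_0^x\frac{\sin w}{w}\,dw\big|<\infty$, so that Fubini is legitimate on each finite slab $(0,U)\times\mathbb R$, and then push $U\to\infty$ inside the $dF(v)$--integral by bounded convergence, using $I_U(v-s)\to\frac\pi2\mathrm{sign}(v-s)$ pointwise. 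Collecting terms,
\[\frac12-\frac1\pi\int_{-\infty}^\infty\frac\pi2\,\mathrm{sign}(v-s)\,dF(v)=\frac12-\frac12\big((1-F(s))-F(s)\big)=F(s)\ ,\]
where continuity of $F$ (it has a density $f$) makes the behavior at $v=s$ immaterial; in general the same computation returns the midpoint value $\tfrac12\big(F(s^-)+F(s^+)\big)$. Everything apart from the interchange is a routine residue computation; the one point to be careful about is resisting Fubini on a merely conditionally integrable integrand and routing instead through the uniformly bounded truncations.
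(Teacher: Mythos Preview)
Your proposal is correct and follows essentially the same route as the paper: substitute the integral representation of $\phi_T$, collapse to the sine kernel, evaluate the Dirichlet integral by a half-annulus contour, and read off $2F(s)-1$. The one difference is that you actually justify the interchange of the $u$- and $v$-integrals via uniformly bounded truncations and dominated convergence, whereas the paper simply swaps them without comment; your version is the more careful of the two.
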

Notice that \eqref{eq:gilPelaezInversion} can be further simplified to have
\[F(s) = \frac12-\frac{1}{\pi}\int_0^\infty \Real\left[\frac{e^{-ius}\phi_T(u)}{iu}\right]du\ ,\]
which is used in pricing a European call option, as the risk-neutral probability of finishing in the money is
\[\Pi_2=\mathbb Q(S_T\geq K)=\frac 12+\frac{1}{\pi}\int_0^\infty\Real \left[\frac{e^{-iu\log(K)}\phi_T(u)}{iu}\right]du\]
where $\mathbb Q$ denotes probability under the EMM. The `delta' of the option is
\[\Pi_1=\frac{e^{-rT}}{S_0}\mathbb E^QS_T\indicator{S_T\geq K}=\frac{e^{-rT}}{S_0}\int_{\log K}^\infty e^sf(s)ds\ ,\]
but we can define a change of measure,
\[\frac{d\widetilde{\mathbb Q}}{d\mathbb Q} = \frac{e^{s_T}}{\mathbb E^Qe^{s_T}}=\frac{e^{s_T}}{S_0e^{rT}}\ ,\]
where $s_T=\log S_T$. Then,
\[\mathbb E^{\widetilde Q}e^{ius_T} = \frac{\mathbb E^ Qe^{(1+iu)s_T} }{\mathbb E^Qe^{s_T}}= \frac{\phi_T(u-i)}{\phi_T(-i)}\ ,\]
and therefore we have the formula
\[\Pi_1=\widetilde{\mathbb Q}(S_T\geq K)=\frac 12+\frac{1}{\pi}\int_0^\infty \Real \left[\frac{e^{-iu\log(K)}\phi_T(u-i)}{iu\phi_T(-i)}\right]du\ .\]
Assuming no dividends, the value of a European call option with strike $K$, at time $t=0$, is
\begin{equation}
\label{eq:fourierPrice}C= S_0\Pi_1-Ke^{-rT}\Pi_2\ .
\end{equation}
The pricing formula of equation \eqref{eq:fourierPrice} is (in one way or another) interpreted as an inverse Fourier transform. Indeed, the method of Carr and Madan \cite{carrMadan1999} shows that \eqref{eq:fourierPrice} can be well-approximated as a fast Fourier transform, which has the advantage of taking less time to compute than numerical integration, but it is also less accurate. Equation \eqref{eq:fourierPrice} has become an important piece of machinery in quantitative finance because it is (or is considered to be) an explicit solution to the pricing equation. In general the pricing formula of \eqref{eq:fourierPrice} applies to just about any situation where the characteristic function is given. 
\section{Heston Explicit Formula}
Finally, an important application of the Fourier pricing formula in \eqref{eq:fourierPrice} is to the Heston model. The call option formula for the Heston model from equations \eqref{eq:dSheston} and \eqref{eq:dXheston} has a well-known inverse Fourier transform (see \cite{gatheralBook}) with
\begin{align*}
\Pi_j^{heston}&=\frac 12+\frac{1}{\pi}\int_0^\infty\Real \left[\frac{e^{-iu\log(K)}\phi_T^j(u)}{iu}\right]du\qquad\hbox{for }j=1,2,\\
\phi_T^j(u)&=\exp\left(A_j(u)+B_j(u)X_0+iu(\log(S_0)+rT)\right)\\
A_j(u)&=\frac{\kappa\bar X}{\gamma^2}\left((b_j-\rho \gamma ui-d_j)T-2\log\left(\frac{1-g_je^{-d_jr}}{1-g_j}\right)\right)\\
B_j(u) &= \frac{b_j-\rho \gamma ui-d_j}{\gamma^2}\left(\frac{1-e^{-d_jr}}{1-g_je^{-d_jr}}\right)\\
g_j&= \frac{b_j-\rho \gamma ui-d_j}{b_j-\rho \gamma ui+d_j}\\
d_j&= \sqrt{(\rho\gamma u i-b_j)^2-\gamma^2(2c_jui-u^2)}\\
c_1&=\frac 12,~~c_2=-\frac 12,~~b_1 = \kappa-\rho\gamma,~~b_2 = \kappa\ ,
\end{align*}
which is applied for the Heston model the volatility price of risk absorbed into parameters $\kappa$ and $\bar X$. \textbf{Warning:} Use a well-made Gaussian quadrature function for numerical computation of the integrals for the Heston price; do not attempt to integrate with an evenly-spaced grid.
\section{The Merton Jump Diffusion}
Short time-to-maturity options are known to be well fit by the class of models where log-asset prices are L\'evy processes. For instance, the Merton jump diffusion,
\[\frac{dS_t}{S_t} = \left(r-\nu\right)dt+\sigma dW_t+\left(e^{J_t}-1\right)dN_t\]
where $\sigma>0$, $N_t$ is an independent Poisson process with parameter $\lambda\in(0,\infty)$, $J_t$ is a sequence of i.i.d. normal random variables with $J_{t_i}\sim N(\mu_J,\sigma_J^2)$ when $N_{t_i}-N_{t_i-} = 1$, and $\nu = \lambda\left(e^{\mu_J+\frac{\sigma_J^2}{2}}-1\right)$ is a compensator to ensure the process is a discounted martingale. The log-price is 
\begin{align*}
\log(S_t/S_0) &= \int_0^t\left(r-\nu-\frac{\sigma^2}{2}\right)du+\sigma W_t+\int_0^tJ_udN_u\\
&=\int_0^t\left(r-\nu-\frac{\sigma^2}{2}\right)du+\sigma W_t+\sum_{n=1}^{N_t}J_n\ .
\end{align*}

For $S_0=1$, the characteristic function is derived as follows,
\begin{align*}
\phi_T(u)&= \mathbb E\exp\left( iu\left(r-\nu-\frac{\sigma^2}{2}\right)T+iu\sigma W_T+iu\int_0^tJ_tdN_t\right)\\
&=\exp\left( iu\left(r-\nu-\frac{\sigma^2}{2}\right)T-\frac{\sigma^2u^2}{2}T+\lambda T\left(e^{iu\mu_J-\frac{\sigma_J^2u^2}{2}}-1\right)\right)\ .
\end{align*}
This is an example of a characteristic function that is known explicitly. In addition, it is particular example of the general L\'evy-Khintchine representation,
\begin{align*}
&\mathbb E\exp\left(iu\log(S_T)\right)\\
& = \exp\left(  iu\left(r-\nu-\frac{\sigma^2}{2}\right)T-\frac{\sigma^2u^2}{2}T +T\int_{\mathbb R\setminus\{0\}}\left(e^{iux}-1-iux\mathbf 1_{|x|<1}\right)\eta(dx)\right)\ ,
\end{align*}
where $\eta$ is the intensity measure, satisfying $\int_{\mathbb R\setminus\{0\}}1\wedge x^2\eta(dx)<\infty$. For more on the L\'evy-Khintchine representation, see \cite{contTankov}.

\chapter{Stochastic Control}

This chapter takes techniques from stochastic control and applies them to portfolio management. The portfolio can be of varying type, two possibilites are a portfolio for investment of (personal) wealth, or a hedging portfolio with a short position in a derivative contract. The basic problem involves an investor with a self-financing wealth process and a concave utility function to quantify their risk aversion, from which their goal is to maximize their expected utility of terminal wealth and/or consumption. To exemplify the need for hedging obtained from optimal control, recall the price of volatility risk $\Lambda(t,s,x)$ from Proposition \ref{prop:stochasticVolPDE} of Chapter \ref{chapt:stochVol}. The pricing PDE for stochastic volatility depends on $\Lambda$, but incompleteness of the market means that $\Lambda$ may not be uniquely specified. However, an expression can be obtained from the solution to an optimal control, hence writing $\Lambda$ as a function of the investor's risk aversion. This chapter will start by considering the basic problem of optimization of (personal) wealth, and later on will show how optimal control is used in hedging derivatives.

\section{The Optimal Investment Problem}

Consider a standard geometric Brownian motion for the price of a risky asset,
\begin{equation}
\label{eq:dS_physicalMeasure}
\frac{dS_t}{S_t}=\mu dt+\sigma dW_t\ ,
\end{equation}
where $\mu\in\mathbb R$, $\sigma>0$, and $W$ is a Brownian motion under the statistical measure. There is also the risk-free bank account that pays interest at a rate $r\geq 0$. At time $t\geq 0$ the investors has a portfolio value $X_t$ with an allocation $\pi_t$ in the risky asset and a consumption stream $c_t$. The dynamics of the portfolio are self-financing,
\begin{equation}
\label{eq:dX_wealth}
dX_t=X_t\left(rdt+ \pi_t\left(\frac{dS_t}{S_t}-rdt\right)-c_tdt\right)\ ,
\end{equation}
where 
\begin{align*}
\pi_t&=\hbox{proportion of wealth in the risky asset,}\\
c_t&=\hbox{rate of consumption.}
\end{align*}
A natural constraint on consumption is $c_t\geq 0$ for all $t\geq 0$, and based on \eqref{eq:dX_wealth} a constraint of $X_t\geq 0$ is enforced automatically. Here we have taken $X_t\geq 0$ almost surely, but in general any finite lower bound on $X_t$ is necessary to ensure no-arbitrage (i.e. $X_t\geq -M>-\infty $ almost with constant $M$ finite), otherwise there could be doubling strateguies. The optimization problem is then formulated as
\begin{equation}
\label{eq:optimizationProblem}
V(t,x)=\max_{\pi,c\geq0}\mathbb E\left[\int_t^TF(u,c_u,X_u)du+U(X_T)\Big|X_t=x\right]\ ,
\end{equation}
where $F$ is a concave utility on consumption and wealth, $U$ is a concave utility on terminal wealth, and the admissible pairs $(\pi_t,c_t)_{t\geq0}$ are non-anticipating, adapted to $W$, with $\int_0^T|\pi_tX_t|^2dt<\infty$ almost surely. We refer to $\mathbb E\left[\int_t^TF(u,c_u,X_u)du+U(X_T)\Big|X_t=x\right]$ as the objective function, and \textbf{refer to $V$ as the \textit{optimal value function.}}

\begin{example}[Logarithmic utility]
\label{ex:optimalLog}
Suppose that $F=0$ and $U(x) = \log(x)$. There is no utility of consumption, so the optimal is $c_t=0$ for all $t\geq 0$. Now apply It\^o's lemma,
\[d\log(X_t) = rdt+ \pi_t\left(\frac{dS_t}{S_t}-rdt\right) -\frac{\sigma^2\pi_t^2}{2}dt\ ,\]
and then taking expectations,
\[\mathbb E[\log(X_T)|X_t=x] =\log(x)+\mathbb E\left[\int_t^T\left(r+ \pi_u\left(\mu-r\right) -\frac{\sigma^2\pi_u^2}{2}\right)du\Big|X_t=x\right]\ ,\]
where the right-hand side is concave in $\pi_t$. Hence, the optimal strategy is
\[\pi_t = \frac{\mu-r}{\sigma^2}\qquad\forall t\in[0,T]\ ,\]
which is the Sharpe ratio divided by the volatility. The optimal value function is
\[V(t,x)=\max_\pi\mathbb E[\log(X_T)|X_t=x] = \log\left(xe^{\left(r+\frac{(\mu-r)^2}{2\sigma^2}\right)(T-t)}\right)\ ,\]
and using $U^{-1}(v) = e^v$, we find the \textbf{certainty equivalent,}
\[X_t^{ce}=e^{-r(T-t)}U^{-1}\left(V(t,X_t)\right)= X_te^{\left(\frac{(\mu-r)^2}{2\sigma^2}\right)(T-t)}\ ,\]
which is the risk-free rate plus $\tfrac12$ times the Sharpe-ratio squared.
\end{example}

\begin{example}[Log-Utility of Consumption]
\label{ex:logOptimalConsumption}
Suppose that $F(t,c_t,X_t) = e^{-\beta t}\log(c_tX_t)$, $U(x) = 0$, and $T=\infty$. The optimization problem is
\[\max_{\pi,c\geq0}\mathbb E\left[\int_t^\infty e^{-\beta(u-t)}\log(c_uX_u)du\Big|X_t=x\right] = V(x)\ ,\]
which is constant in $t$. Now notice for any admissible $(\pi,c)$ on $[t,t+\Delta t]$ we have the dynamic programming principle,
\[V(X_t)\geq e^{-\beta\Delta t}\mathbb E_tV(X_{t+\Delta t}) + \mathbb E_t\int_t^{t+\Delta t}e^{-\beta(u-t)}\log(c_uX_u)du\ ,\]
with equality if and only if $(\pi,c)$ is chosen optimally over $[t,t+\Delta t]$, and hence
\begin{align*}
&\frac{\mathbb E_tV(X_{t+\Delta t}) -V(X_t)}{\Delta t}\\
& \leq \frac{1-e^{-\beta\Delta t}}{\Delta t}\mathbb E_tV(X_{t+\Delta t}) -\frac{1}{\Delta t} \mathbb E_t\int_t^{t+\Delta t}e^{-\beta(u-t)}\log(c_uX_u)du\\
& \rightarrow  \beta V(X_t) -\log(c_tX_t)\ ,
\end{align*}
as $\Delta t\rightarrow 0$. On the other hand, from It\^o's lemma we have
\begin{align*}
dV(X_t) &= \left(\frac{\sigma^2\pi_t^2X_t^2}{2}\frac{\partial^2}{\partial x^2}V(X_t)+\left(r+\pi_t(\mu-r)-c_t\right)X_t\frac{\partial}{\partial x}V(X_t)\right)dt\\
&\hspace{3cm}+\sigma\pi_tX_t\frac{\partial}{\partial x}V(X_t)dW_t\ ,
\end{align*}
and assuming the Brownian term vanishes under expectations, we have
\begin{align*}
&\frac{\mathbb E_tV(X_{t+\Delta t}) -V(X_t)}{\Delta t}\\
& =\frac{1}{\Delta t}\mathbb E_t\int_t^{t+\Delta t}\left(\frac{\sigma^2\pi_u^2X_u^2}{2}\frac{\partial^2}{\partial x^2}V(X_u)+\left(r+\pi_u(\mu-r)-c_u\right)X_u\frac{\partial}{\partial x}V(X_u)\right)du\\
& \rightarrow   \frac{\sigma^2\pi_t^2X_t^2}{2}\frac{\partial^2}{\partial x^2}V(X_t)+\left(r+\pi_t(\mu-r)-c_t\right)X_t\frac{\partial}{\partial x}V(X_t)\ , 
\end{align*}
as $\Delta t\rightarrow 0$. Hence, for all admissible pairs $(\pi,c)$ the value function $V(x)$ satisifies
\[ \frac{\sigma^2\pi^2x^2}{2}\frac{\partial^2}{\partial x^2}V(x)+\left(r+\pi(\mu-r)-c\right)x\frac{\partial}{\partial x}V(x) - \beta V(x) +\log(cx)\leq 0\ ,\]
with equality if and only $\pi$ and $c$ are optimal, which leads to the equation
\[\max_{\pi,c\geq 0}\left( \frac{\sigma^2\pi^2x^2}{2}\frac{\partial^2}{\partial x^2}V(x)+\left(r+\pi(\mu-r)-c\right)x\frac{\partial}{\partial x}V(x) - \beta V(x) +\log(cx)\right)=0\ .\]
Let's assume the ansatz
\[V(x) = a\log(x) +b \ .\]
Then through first-order optimality conditions (i.e. by differentiating with respect to $\pi$ and setting equal to zero) we find the optimal
\[\pi(x)= -\frac{\mu-r}{\sigma^2 x}\frac{\frac{\partial}{\partial x}V(x)}{\frac{\partial^2}{\partial x^2}V(x)}=\frac{\mu-r}{\sigma^2 }\ .\]
Similarly, first-order optimality conditions for $c$ yield
\[c(x)= \frac{1}{x\frac{\partial}{\partial x}V(x)}=\frac{1}{a}\ .\]
Putting optimal $\pi_t$ and $c_t$ back into the equation for $V$ along with the ansatz, we find 
\[\log(x/a)-\beta(a\log(x)+b)+\left(ar- 1\right)+\frac a2\frac{(\mu-r)^2}{\sigma^2}=0\ ,\]
and comparing $\log(x)$ terms and non-$x$-dependent terms we find,
\begin{align*}
a&=\frac1\beta \ ,\\
b&=\frac{1}{2\beta^2}\frac{(\mu-r)^2}{\sigma^2}+\frac{r}{\beta^2}+\frac1\beta\left(\log(\beta)-1\right)\ .
\end{align*}
\end{example}

\section{The Hamilton-Jacobi-Bellman (HJB) Equation}
Example \ref{ex:optimalLog} is useful to get started and to get a sense for how an optimal control should look. Example \ref{ex:logOptimalConsumption} is more instructive because it shows us how (i) the function $V$ inherits concavity from $F$ and $U$, and (ii) how it also shows how to derive the PDE that $V$ should satisfy.

The derivation starts with the dynamic programming principle,
\[V(t,x)=\max_{\pi,c\geq0}\mathbb E\left[\int_t^{t+\Delta t}F(u,c_u,X_u)du+V(t+\Delta t,X_{t+\Delta t})\Big|X_t=x\right]\ ,\]
where $\max_{\pi,c}$ is taken over the interval $[t,t+\Delta t]$. Applying It\^o's lemma to $V(t,X_t)$, we find 
\begin{align*}
&V(t+\Delta t,X_{t+\Delta t})\\
 &= V(t,X_t)+ \int_t^{t+\Delta t}\left(\frac{\partial}{\partial t}+\frac{\sigma^2\pi_u^2X_u^2}{2}\frac{\partial^2}{\partial x^2}+\left(r+\pi_u(\mu-r)-c_u\right)X_u\frac{\partial}{\partial x}\right)V(u,X_u)du\\
&\hspace{4cm}+\sigma\int_t^{t+\Delta t}\pi_uX_u\frac{\partial}{\partial x}V(u,X_u)dW_u\ ,
\end{align*}
for any admissible $(\pi,c)$ over  $[t,t+\Delta t]$. Hence, for any $(\pi,c)$ on $[t,t+\Delta t]$ we have 
\begin{align*}
&\mathbb E\left[\int_t^{t+\Delta t}\Bigg(F(u,c_u,X_u)+\Bigg(\frac{\partial}{\partial t}+\frac{\sigma^2\pi_u^2X_u^2}{2}\frac{\partial^2}{\partial x^2}\right.\\
&\hspace{3cm}+\left.\left(r+\pi_u(\mu-r)-c_u\right)X_u\frac{\partial}{\partial x}\Bigg)V(u,X_u)\Bigg)du\Big|X_t=x\right]\leq 0\ ,
\end{align*}
with equality iff and only if an optimal $(\pi,c)$ is chosen. Hence, dividing by $\Delta t$ and taking the limt to zero, we obtain the so-called \textbf{Hamilton-Jacobi-Bellman (HJB) equation:}
\begin{align}
\label{eq:HJB}
\max_{\pi,c\geq0}\left(F(t,c)+\Bigg(\frac{\partial}{\partial t}+\frac{\sigma^2\pi^2x^2}{2}\frac{\partial^2}{\partial x^2}+\left(r+\pi(\mu-r)-c\right)x\frac{\partial}{\partial x}\Bigg)V(t,x)\right)&=0\ ,\\
\nonumber
V(T,x)&=U(x)\ .
\end{align}

\section{Merton's Optimal Investment Problem}
Let $F=0$ and consider a power utility function,
\[U(x) = \frac{x^{1-\gamma}}{1-\gamma}\ ,\]
where $\gamma>0$, $\gamma\neq1$ is the risk aversion. The problem is to solve
\[V(t,x)=\max_\pi\mathbb E[U(X_T)|X_t=x]\ ,\]
The HJB equation for this problem is 
\begin{align}
\label{eq:HJBmerton}
\left(\frac{\partial}{\partial t}+rx\frac{\partial}{\partial x}\right)V(t,x)+\max_{\pi}\left(\frac{\sigma^2\pi^2x^2}{2}\frac{\partial^2}{\partial x^2}V(t,x)+\pi(\mu-r)x\frac{\partial}{\partial x}V(t,x)\right)&=0\ ,\\
\nonumber
V(T,x)&=U(x)\ ,
\end{align}
for which we find the optimal $\pi$,
\[\pi_t = -\frac{\mu-r}{x\sigma^2} \frac{\frac{\partial}{\partial x}V(t,x)}{\frac{\partial^2}{\partial x^2}V(t,x)}\ . \]
Inserting the optimal $\pi_t$ into \eqref{eq:HJBmerton} we obtain the nonlinear equation,
\begin{equation}
\label{eq:HJBmerton_nonlinear}
\left(\frac{\partial}{\partial t}+rx\frac{\partial}{\partial x}\right)V(t,x)-\frac{\left((\mu-r)\frac{\partial}{\partial x}V(t,x)\right)^2}{2\sigma^2\frac{\partial^2}{\partial x^2}V(t,x)}=0\ .
\end{equation}
Then using the ansatz $V(t,x) = g(t)U(x)$, we find
\begin{align*}
\frac{\partial}{\partial t}V(t,x)&= g'(t)U(x)\ ,\\
\frac{\partial}{\partial x}V(t,x)&= \frac{1-\gamma}{x}g(t)U(x)\ ,\\
\frac{\partial^2}{\partial x^2}V(t,x)&= -\frac{\gamma(1-\gamma)}{x^2}g(t)U(x)\ ,
\end{align*}
and inserting in \eqref{eq:HJBmerton_nonlinear} we find an ODE for $g$,
\[g'(t) +r(1-\gamma)+\frac{(1-\gamma)(\mu-r)^2g(t)}{2\gamma\sigma^2}=0\ ,\]
with terminal condition $g(T)=1$. The solution is
\[g(t) = e^{(1-\gamma)(T-t)\left(r+\frac{(\mu-r)^2}{2\gamma\sigma^2}\right)}\ ,\]
and the optimal value function is
\[V(t,x) = U(x)g(t)=\frac{\left(xe^{(T-t)\left(r+\frac{(\mu-r)^2}{2\gamma\sigma^2}\right)}\right)^{1-\gamma}}{1-\gamma}=U\left(xe^{(T-t)\left(r+\frac{(\mu-r)^2}{2\gamma\sigma^2}\right)}\right)\ .\]
and the certainty equivalent is
\[X_t^{ce}=e^{-r(T-t)}U^{-1}(v(t,X_t)) = X_te^{(T-t)\left(\frac{(\mu-r)^2}{2\gamma\sigma^2}\right)}\ .\]

\section{Stochastic Returns}
Consider the model
\begin{eqnarray}
\label{eq:SRM_stochControl}
dS_t&=&Y_t S_tdt+\sigma S_tdW_t\\
\label{eq:SRM_dY_stochControl}
dY_t &=&\kappa(\mu-Y_t)dt+\beta dB_t\ ,
\end{eqnarray}
with $dW_tdB_t=\rho dt$ where $\rho\in(-1,1)$. The interpretation of $Y_t$ could be any of the following: $Y_t$ is a dividend yield with uncertainty (although somewhat of strange model because it can be negative), or $Y_t$ is the return rate on a commodities or bond portfolio where there is a role yield due to contango or backwardation.
 
Let's assume the simple case $\mu=r=0$, for which the value function is
\[V(t,x,y) = \max_\pi\mathbb E\left[U(X_T)\Big|X_t=x,Y_t=y\right]\ ,\]
and has HJB equation
\begin{align}
\nonumber
\left(\frac{\partial}{\partial t}+\frac{\beta^2}{2}\frac{\partial^2}{\partial y^2}-\kappa y\frac{\partial}{\partial y}\right)V(t,x,y)\hspace{4cm}&\\
\nonumber
+\max_{\pi}\Bigg(\frac{\sigma^2x^2\pi^2}{2}\frac{\partial^2}{\partial x^2}V(t,x,y)+\pi xy\frac{\partial}{\partial x}V(t,x,y)\hspace{2cm}&\\
\label{eq:HJBstochReturns}
+\rho \pi x\beta\sigma\frac{\partial^2}{\partial x\partial y}V(t,x,y)\Bigg)&=0\ ,\\
\nonumber
V(T,x,y)&=U(x)\ .
\end{align}
The first-order condition for $\pi$ yields the optimal 
\[\pi_t  =  -\frac{xy\frac{\partial}{\partial x}V(t,x,y)+\rho \pi x\beta\sigma\frac{\partial^2}{\partial x\partial y}V(t,x,y)}{\sigma^2x^2\frac{\partial^2}{\partial x^2}V(t,x,y)}\ .\]
For the power utility
\[U(x)  = \frac{x^{1-\gamma}}{1-\gamma}, \]
we have the ansatz $V(t,x,y) = U(x)g(t,y)$ with 
\begin{align*}
\frac{\partial}{\partial t}V& = \frac{\partial}{\partial t}g(t,y)U(x)\ ,\\
\frac{\partial}{\partial x}V& =\frac{1-\gamma}{x} g(t,y)U(x)\ ,\\
\frac{\partial^2}{\partial x^2}V& =-\frac{(1-\gamma)\gamma}{x^2} g(t,y)U(x)\ ,\\
\frac{\partial^2}{\partial x\partial y}V& =\frac{1-\gamma}{x} \frac{\partial}{\partial y}g(t,y)U(x)\ ,
\end{align*}
all of which are inserted into equation \eqref{eq:HJBstochReturns} to get an equation for $g$:
\begin{align}
\nonumber
\left(\frac{\partial}{\partial t}+\frac{\beta^2}{2}\frac{\partial^2}{\partial y^2}-\kappa y\frac{\partial}{\partial y}\right)g(t,y)+\frac{1-\gamma}{2\sigma^2\gamma}\left(y+\frac{\rho \beta\sigma\frac{\partial}{\partial y}g(t,y)}{g(t,y)}\right)^2g(t,y)&=0\ ,\\
\nonumber
g(T,y)&=1\ .
\end{align}
We now apply another ansatz $g(t,y) = e^{a(t)y^2+b(t)}$, which when inserted into the equation for $g(t,y)$ yields the following system:
\begin{align*}
y^2&:~a'(t)=-2\beta^2\left(1+\frac{(1-\gamma)\rho^2}{\gamma}\right)a^2(t)-2\left(\frac{\rho\beta(1-\gamma)}{\sigma\gamma}-\kappa\right)a(t)-\frac{1-\gamma}{2\sigma^2\gamma}\ ,\\
1&:~b'(t)=-\beta^2a(t)\ ,
\end{align*}
with terminal conditions $a(T)=b(T)=0$. The solution $a(t)$ can be written as a ratio,
\[a(t) = \frac{v'(t)}{2\beta^2 \left(1+\frac{(1-\gamma)\rho^2}{\gamma}\right)v(t)}\ ,\]
where $v(t)$ is the solution to a 2nd-order ODE,
\[v''(t) +2\left(\frac{\rho\beta(1-\gamma)}{\sigma\gamma}-\kappa\right)v'(t)+\frac{(1-\gamma)\beta^2}{\sigma^2\gamma}\left(1+\frac{(1-\gamma)\rho^2}{\gamma}\right)v(t)=0\ . \]
The roots of this equation are
\[m_\pm =-\left(\frac{\rho\beta(1-\gamma)}{\sigma\gamma}-\kappa\right)\pm\sqrt{\kappa^2-\frac{\beta(1-\gamma)}{\sigma\gamma}\left(2\kappa\rho+\frac{\beta}{\sigma}\right)}\ ,\]
which gives the general solution 
\[v(t) = C_1e^{m_+(T-t)}+C_2e^{m_-(T-t)}\ .\]
It is not necessary to fully determine constants $C_1$ and $C_2$ because we are mainly interested in the ratio $v'(t)/v(t)$.\\

\noindent\textbf{Finite-Time Blowup.} Complex valued $m_\pm$ leads to finite-time blowup for the optimization problem. If the roots are complex then let $c = -\left(\frac{\rho\beta(1-\gamma)}{\sigma\gamma}-\kappa\right)$ and $d = \frac{\beta(1-\gamma)}{\sigma\gamma}\left(2\kappa\rho+\frac{\beta}{\sigma}\right)-\kappa^2$ so that the general solution is
\[v(t) = e^{c(T-t)}\Big(C_1\cos(d(T-t))+C_2\sin(d(T-t))\Big)\ ,\]
and with $v'(T)=-cC_1-dC_2=0$ to satisfy the terminal condition $a(T)=0$, so that
 
\[v(t) = C_1e^{c(T-t)}\left(\cos(d(T-t))-\frac{c}{d}\sin(d(T-t))\right)\ .\]
 The solution $a(t)$ will blow at time $t^*$ such that $\cos(d(T-t^*))-\frac{c}{d}\sin(d(T-t^*))=0$, that is $\tan(d(T-t^*))=\frac{d}{c}$ or 
\[T-t^* = \frac{1}{d}\left(\pi\indicator{c\leq0}+\tan^{-1}\left(\frac{d}{c}\right)\right)\ .\]

\section{Stochastic Volatility}
Now let's consider the same optimal terminal wealth problem as the Merton problem, with exponential utility
\[U(x) = -\frac1\gamma e^{-\gamma x}\qquad\hbox{where }\gamma>0\ ,\]
and in the incomplete market of stochastic volatility, 
\begin{eqnarray}
\label{eq:SVM_stochControl}
dS_t&=&\mu S_tdt+\sigma(Y_t)S_tdW_t\\
\label{eq:dY_stochControl}
dY_t &=&\alpha(Y_t)dt+\beta(Y_t)dB_t\ ,
\end{eqnarray}
where $dW_t\cdot dB_t = \rho dt$. From \eqref{eq:SVM_stochControl} and \eqref{eq:dY_stochControl}, we have the wealth process,
\[dX_t = rX_tdt+\pi_t\left(\frac{dS_t}{S_t}-rdt\right)\ ,\]
no longer enforcing the non-negativity constraint. The optimization problem is
\[V(t,x,y) = \max_\pi\mathbb E\left[U(X_T)\Big|X_t=x,Y_t=y\right]\ ,\]
but the technique used in Example \ref{ex:optimalLog} does not apply because there is some local martingale behavior in the stochastic integrals. Instead, we arrive at the optimal solution using the HJB equation. The HJB equation is
\begin{align}
\nonumber
\left(\frac{\partial}{\partial t}+rx\frac{\partial}{\partial x}+\frac{\beta^2(y)}{2}\frac{\partial^2}{\partial y^2}+\alpha(y)\frac{\partial}{\partial y}\right)V(t,x,y)\hspace{4cm}&\\
\nonumber
+\max_{\pi}\Bigg(\frac{\sigma^2(y)\pi^2}{2}\frac{\partial^2}{\partial x^2}V(t,x,y)+\pi(\mu-r)\frac{\partial}{\partial x}V(t,x,y)\hspace{2cm}&\\
\label{eq:HJBstochVol}
+\rho \pi \beta(y)\sigma(y)\frac{\partial^2}{\partial x\partial y}V(t,x,y)\Bigg)&=0\ ,\\
\nonumber
V(T,x)&=U(x)\ .
\end{align}
Using the ansatz $V(t,x,y) = U(xe^{r(T-t)})g(t,y)$, we have
\begin{align*}
\frac{\partial}{\partial t}V& = \gamma rx e^{r(T-t)} V+U(xe^{r(T-t)})\frac{\partial}{\partial t}g(t,y)\ ,\\
\frac{\partial}{\partial x}V& = -\gamma e^{r(T-t)} V\ ,\\
\frac{\partial^2}{\partial x^2}V& = \gamma^2 e^{2r(T-t)}V\ ,\\
\frac{\partial^2}{\partial x\partial y}V& = -\gamma e^{r(T-t)} U(xe^{r(T-t)})\frac{\partial}{\partial y}g(t,y)\ ,\\
\end{align*}
which we insert into \eqref{eq:HJBstochVol} to find a PDE for $g$,

\begin{align}
\nonumber
\left(\frac{\partial}{\partial t}+\frac{\beta^2(y)}{2}\frac{\partial^2}{\partial y^2}+\alpha(y)\frac{\partial}{\partial y}\right)g(t,y)\hspace{5cm}&\\
\label{eq:HJBstochVol_g}
+\min_{\pi}\Bigg(\frac{\gamma^2e^{2r(T-t)}\sigma^2(y)\pi^2}{2}g(t,y)-\gamma e^{r(T-t)}\pi\left((\mu-r)g(t,y)+\rho  \beta(y)\sigma(y)\frac{\partial}{\partial y}g(t,y)\right)\Bigg)&=0\ ,\\
\nonumber
g(T,y)&=1\ ,
\end{align}
and the optimal strategy is
\[\pi_t = e^{-r(T-t)}\left(\frac{\mu-r}{\gamma\sigma^2(y)}+\rho \frac{ \beta(y)}{\gamma\sigma(y)}\frac{\frac{\partial}{\partial y}g(t,y)}{g(t,y)}\right)\ .\]
Inserting this optimal $\pi_t$ into \eqref{eq:HJBstochVol_g} yields the nonlinear equation
\begin{align}
\label{eq:stochVol_Merton_g}
\left(\frac{\partial}{\partial t}+\frac{\beta^2(y)}{2}\frac{\partial^2}{\partial y^2}+\alpha(y)\frac{\partial}{\partial y}\right)g(t,y)-\frac{\sigma^2(y)}{2}\left(\frac{\mu-r}{\sigma^2(y)}+\rho \frac{ \beta(y)}{\sigma(y)}\frac{\frac{\partial}{\partial y}g(t,y)}{g(t,y)}\right)^2 g(t,y)&=0\ .
\end{align}
This equation can be reduced to a linear PDE if we look for a function $\psi(t,y)$ such that
\[g(t,y) = \psi(t,y)^q\ ,\]
where $q$ is a parameter. Differentiating yields,
\begin{align*}
\frac{\partial}{\partial t}g &= \frac{qg}{\psi}\frac{\partial}{\partial t}\psi\\
\frac{\partial}{\partial y}g &= \frac{qg}{\psi}\frac{\partial}{\partial y}\psi\\
\frac{\partial^2}{\partial y^2}g &= qg\left(\frac1\psi\frac{\partial^2}{\partial y^2}\psi+\frac{q-1}{\psi^2}\left(\frac{\partial}{\partial y}\psi\right)^2\right)\ ,
\end{align*}
and then plugging into \eqref{eq:stochVol_Merton_g} with chosen parameter $q=1/(1+\rho^2)$ yields a linear equation:
\begin{align}
\label{eq:linearPDEstochVolControl}
\left(\frac{\partial}{\partial t}+\frac{\beta^2(y)}{2}\frac{\partial^2}{\partial y^2}+\left(\alpha(y)-\rho \frac{(\mu-r) \beta(y)}{\sigma(y)}\right)\frac{\partial}{\partial y}\right)\psi(t,y)-\frac{(\mu-r)^2}{2q\sigma^2(y)}\psi(t,y)&=0\ .
\end{align}
\begin{example}[Fully Affine Heston Model]
Consider a futures contract $F_{t,T}$ with settlement date $T$ and stochastic volatility and returns,
\begin{align*}
\frac{dF_{t,T}}{F_{t,T}}&=\mu Y_tdt+\sqrt{Y_t}dW_t\\
dY_t&=\kappa(\bar Y-Y_t)dt+\beta\sqrt{Y_t}dB_t
\end{align*}
where $\beta^2\leq 2\kappa\bar Y$ and $dW_tdB_t=\rho dt$. The wealth process for futures trading is
\[dX_t = rX_tdt+ \pi_t\frac{dF_{t,T}}{F_{t,T}}\ .\]
For $U(x) = -\frac1\gamma e^{-\gamma x}$ the optimal terminal expected utility is \[V(t,x,y)=U(xe^{r(T-t)})\psi(t,y)^q\ ,\]
where $\psi(t,y)$ is similar to a solution to equation \eqref{eq:linearPDEstochVolControl}, except the equation has no $r$,
\begin{align*}
\left(\frac{\partial}{\partial t}+\frac{\beta^2y}{2}\frac{\partial^2}{\partial y^2}+\left(\kappa(\bar Y-y)-\rho \mu\beta y\right)\frac{\partial}{\partial y}\right)\psi(t,y)-\frac{\mu^2y}{2q}\psi(t,y)&=0\ .
\end{align*}
It can be further shown that the solution to this equation is of the form
\[\psi(t,y) = e^{a(t)y+b(t)}\ ,\]
with $a(T)=b(T)=0$, and where $a(t)$ and $b(t)$ satisfy ODEs,
\begin{align*}
a'(t)+\frac{\beta^2}{2}a^2(t)-\left(\kappa+\rho \mu\beta \right)a(t)-\frac{\mu^2}{2q}&=0\\
b'(t)+\kappa\bar Ya(t)&=0\ ,
\end{align*}
both of which can be solved explicitly.
\end{example}

\section{Indifference Pricing}
Stochastic control for terminal wealth can be implemented to find the the price of a call option under stochastic volatility,
\[V^{h}(t,x,y,s) = \max_\pi\mathbb E\left[U(X_T-(S_T-K)^+)\Big|X_t=x,Y_t=y,S_t=s\right]\ ,\]
where the investor now hedges a short position in a call option with strike $K$. Compared to the same investor's value function that is not short the call
\[V^0(t,x,y) = \max_\pi\mathbb E\left[U(X_T)\Big|X_t=x,Y_t=y\right]\ ,\]
we look for the amount of cash $\$p$ such that the 
\[V^{h}(t,x+p,y,s) = V^{0}(t,x,y)\ .\]
The extra cash makes the hedger \textit{utility indifferent} to the short position. With exponential utility there is a separation of variables,
\begin{align*}
V(t,x,y,s)&=\max_\pi\mathbb E\left[-\frac1\gamma e^{-\gamma(X_T-(S_T-K)^+)}\Big|X_t=x,Y_t=y,S_t=s\right]\\
&=-\frac1\gamma e^{-\gamma xe^{r(T-t)}}\min_\pi\mathbb E\left[e^{-\gamma\left(\int_t^Te^{r(T-u)}\pi_u\left(\frac{dS_u}{S_u}-rdu\right)-(S_T-K)^+\right)} \Big|Y_t=y,S_t=s\right]\\
&=U\left(xe^{r(T-t)}\right)g^h(t,y,s)\ ,
\end{align*}
where we've used the differential $d\left(e^{r(T-t)}X_t\right) = e^{r(T-t)}\pi_t\left(\frac{dS_t}{S_t}-rdt\right)$. Hence we find a price $\$p$ such that $U\left(pe^{r(T-t)}\right) = g(t,y)/g^h(t,y,s)$, where $g(t,y)$ is the solution from \eqref{eq:HJBstochVol_g}. 

Depending on the risk-aversion coefficient $\gamma$, there will be different prices $\$p$. This brings us back to the price of volatility risk $\Lambda(t,s,x)$ from Proposition \ref{prop:stochasticVolPDE} of Chapter \ref{chapt:stochVol}. Namely, investors with different risk aversion will have a different $\Lambda$ for their martingale evaluation of the call option.

If an indifference price is obtained then there is a solution to both optimization problems, and hence there is no-arbitrage and the range of prices for $\$c$ will be a no-arbitrage interval. For complete markets there will be a single price $\$c$ for all levels of risk aversion.



\appendix
\addappheadtotoc
\appendixpage
\noappendicestocpagenum
\chapter{Martingales and Stopping Times}
\label{app:martingalesStoppingTimes}
Let $\mathcal F_t$ denote a $\sigma$-algebra. A process $X_t$ is an $\mathcal F_t$-martingale if and only if
\[\mathbb E_tX_T = X_t\qquad\forall t\leq T\ ,\]
where $\mathbb E_t=\mathbb E[~\cdot~|\mathcal F_t]$. A process $X_t$ is a submartingale if and only if
\[\mathbb E_tX_T \geq X_t\qquad\forall t\leq T\ ,\]
and a supermartingale if and only if
\[\mathbb E_tX_T \leq X_t\qquad\forall t\leq T\ .\]
Note that true martingale is both a sub and supermartingale. 

\section{Stopping Times}
In probability theory, a stopping time is a a stochastic time that is non-anticipative of the underlying process. For instance, for a stock price $S_t$ a stopping time is the first time the price reaches a level $M$,
\[\tau = \inf\{t>0|S_t\geq M\}\ .\]
The non-anticipativeness of the stopping is important because there are some events that are seemingly similar but are not stopping times, for instance 
\[\nu=\sup\{t>0|S_t<M\}\]
\textbf{is not a stopping time.}

Stopping times are useful when discussing martingales. For example, so-called stopped-processes inherit the sub or supermartingale property. Namely, $X_{t\wedge\tau}$ is a sub or supermartingale of $X_t$ is a sub or supermartingale, respectively. There is also the optional stopping theorem:
\begin{theorem}[Optional Stopping Theorem] Let $X_t$ be a submartingale and let $\tau$ be a stopping time. If $\tau<\infty $ a.s. and $X_{t\wedge\tau}$ uniformly integrable, then $\mathbb EX_0\leq \mathbb EX_\tau$ with equality if $X_t$ is a martingale.
\end{theorem}
An example application of the optional stopping theorem is Gambler's ruin: Let 
\[\tau=\inf\{t>0|W_t\notin (a,b)\}\ ,\]
where $0<b<\infty$ and $-\infty<a<0$. Then $\mathbb P(\tau<\infty)=1$ and by optional stopping,
\begin{align*}
0&=W_0=\mathbb EW_\tau\\
&= a\mathbb P(W_\tau=a)+b(1-\mathbb P(W_\tau=a))\\
&=(a-b)\mathbb P(W_\tau=a)+b\ ,
\end{align*}
which can be simplified to get
\[\mathbb P(W_\tau=a) = \frac{b}{b-a}\ .\]

\section{Local Martingales}
First define a local martingale:
\begin{definition}[Local Martingale]
A process $X_t$ is a local martingale if there exists a sequence of finite and increasing stopping times $\tau_n$ such that $\mathbb P(\tau_n\rightarrow\infty\hbox{ as }n\rightarrow\infty)=1$ and $X_{t\wedge\tau_n}$ is a true martingale for any $n$. 
\end{definition}
Some remarks are in order:
\begin{remark}
In discrete time there are no local martingales; a martingale is a martingale.
\end{remark}

\begin{remark}
A true martingale $X_t$ is a local martingale, and any bounded local martingale is in fact a true martingale.
\end{remark}

The It\^o stochastic integral is in general a local martingale, not necessarily a true martingale. That is, 
\[I_t = \int_0^t\sigma_udW_u\ ,\]
is only a local martingale, but there exists stopping times $\tau_n$ such that
\[I_{t\wedge\tau_n} =  \int_0^{t\wedge\tau_n}\sigma_udW_u\ ,\]
is a true martingale. For It\^o integrals there is the following theorem for a sufficient (but not necessary) condition for true martingales:
\begin{theorem}
\label{thm:finiteItoIsometry}
The It\^o integral $ \int_0^t\sigma_udW_u$ is a true martingale on $[0,T]$ if
\[\mathbb E\int_0^T\sigma_s^2ds<\infty \ ,\]
i.e., the It\^o isometry is finite.
\end{theorem}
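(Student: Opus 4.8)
The plan is to upgrade the local-martingale property of $I_t=\int_0^t\sigma_u\,dW_u$ (recorded just above) to a genuine martingale property, using the $L^2$-bound that the It\^o isometry extracts from the hypothesis. First I would invoke the localizing sequence: there exist stopping times $\tau_n\uparrow\infty$ almost surely such that each stopped process $I^{\tau_n}_t:=I_{t\wedge\tau_n}$ is a true martingale on $[0,T]$. For a fixed $t$ and each $n$ one has $I^{\tau_n}_t=\int_0^{t\wedge\tau_n}\sigma_u\,dW_u$, so the It\^o isometry gives
\[\mathbb E\big[(I^{\tau_n}_t)^2\big]=\mathbb E\int_0^{t\wedge\tau_n}\sigma_s^2\,ds\le \mathbb E\int_0^T\sigma_s^2\,ds<\infty,\]
a bound uniform in $n$. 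Hence $\{I^{\tau_n}_t\}_{n\ge1}$ is bounded in $L^2$, and therefore uniformly integrable.

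Next I would pass to the limit $n\to\infty$. Since $\tau_n\to\infty$ a.s., $t\wedge\tau_n\to t$, and by path-continuity of the It\^o integral $I^{\tau_n}_t\to I_t$ almost surely; combined with uniform integrability this promotes to $I^{\tau_n}_t\to I_t$ in $L^1$, and Fatou's lemma gives $\mathbb E I_t^2\le \mathbb E\int_0^T\sigma_s^2\,ds<\infty$ so in particular $I_t$ is integrable. Then for $s\le t$ I apply the martingale identity for the stopped process, $\mathbb E[I^{\tau_n}_t\mid\mathcal F_s]=I^{\tau_n}_s$, and let $n\to\infty$: the left-hand side converges in $L^1$ because conditional expectation is an $L^1$-contraction, and the right-hand side converges in $L^1$ by the same argument applied at time $s$. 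This yields $\mathbb E[I_t\mid\mathcal F_s]=I_s$, which is exactly the assertion.

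The main obstacle — indeed essentially the only nontrivial point — is justifying the interchange of the limit with the conditional expectation, i.e. establishing the uniform integrability of $\{I^{\tau_n}_t\}_n$ so that almost-sure convergence becomes $L^1$ convergence; this is precisely where the hypothesis $\mathbb E\int_0^T\sigma_s^2\,ds<\infty$ is used, via the It\^o isometry. A cleaner packaging of the same idea, which I would mention as an alternative, is to apply Doob's $L^2$ maximal inequality to each stopped martingale, $\mathbb E\big[\sup_{t\le T}(I^{\tau_n}_t)^2\big]\le 4\,\mathbb E\int_0^T\sigma_s^2\,ds$, then let $n\to\infty$ by monotone convergence to obtain $\mathbb E\big[\sup_{t\le T}I_t^2\big]<\infty$, and finally quote the standard fact that a local martingale dominated by an integrable random variable is a true martingale. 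Either way, the estimate on $\sup_t I_t^2$ also incidentally shows $I$ is an $L^2$-bounded, hence uniformly integrable, martingale on $[0,T]$.
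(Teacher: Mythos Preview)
The paper does not actually prove this theorem; it is stated as a standard result, and the paragraph immediately following it constructs the localizing sequence $\tau_n=\inf\{t>0:\int_0^t\sigma_u^2\,du\ge n\}\wedge T$ and then \emph{applies} the theorem to conclude that $I_{t\wedge\tau_n}$ is a martingale --- that paragraph is not a proof of the theorem but an illustration of how the theorem feeds back into the local-martingale discussion.

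Your argument is correct and is exactly the standard route: localize, use the isometry to get a uniform $L^2$ bound on the stopped processes, infer uniform integrability, upgrade almost-sure convergence to $L^1$, and pass the conditional-expectation identity to the limit. The alternative you sketch via Doob's $L^2$ maximal inequality is equally standard and arguably cleaner, since it yields $\mathbb E\sup_{t\le T}I_t^2<\infty$ in one stroke and then invokes the dominated-local-martingale criterion. Either packaging is fine; nothing is missing.
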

For the stochastic integral $I_t = \int_0^t\sigma_udW_u$ we can define
\[\tau_n = \inf\left\{t>0\Bigg|\int_0^t\sigma_u^2ds\geq n\right\}\wedge T\ ,\]
for which we have a bounded It\^o isometry, and hence Theorem \ref{thm:finiteItoIsometry} applies to make $I_{t\wedge\tau_n}$ a martingale on $[0,T]$.

An example of a local martingale is the constant elasticity of volatility (CEV) model,
\[dS_t = \sigma S_t^{\alpha}dW_t\ ,\]
with $0\leq\alpha\leq 2$; $S_t$ is strictly a local martingale for $1<\alpha\leq 2$. For $\alpha=2$ one can check using PDEs that the transition density is
\[p_t(z|s) = \frac{s}{z^3\sqrt{2\pi t\sigma^2}}\left(e^{-\frac{\left(\frac1z-\frac1s\right)^2}{2t\sigma^2}}-e^{-\frac{\left(\frac1z+\frac1s\right)^2}{2t\sigma^2}}\right)\ .\]
One can check that $\mathbb ES_t^4=\infty$ for all $t>0$ so that Theorem \ref{thm:finiteItoIsometry} does not apply, but to see that it is a strict local martingale one must also check that $\mathbb E_tS_T<S_t$ for all $t<T$.


\chapter{Some Notes on Fourier Transforms}
For a real-valued function $f(x)$ with $\int_{-\infty}^\infty |f(x)|^2dx<\infty$, Fourier transforms are defined as follows
\begin{align}
\label{eq:fourierIntegral_A}
\widehat f(u)&=\int_{-\infty}^\infty  e^{iux}f(x)dx\\
\label{eq:inverseFourierIntegral_A}
f(x)&=\frac{1}{2\pi}\int_{-\infty}^\infty  e^{-iux}\widehat f(u)du\ .
\end{align}
The Fourier transforms in \eqref{eq:fourierIntegral_A} and \eqref{eq:inverseFourierIntegral_A} are somewhat different from traditional definitions, which are
\begin{align*}
\widehat f(u)&=\int_{-\infty}^\infty  e^{-2\pi iux}f(x)dx\\
f(x)&=\int_{-\infty}^\infty  e^{2\pi iux}\widehat f(u)du\ .
\end{align*}
The reason we choose \eqref{eq:fourierIntegral_A} in finance is because we work so much with probability theory, and as probabilists we like to have a Fourier transform of a density be equal to a characteristic function. However, nothing changes and we are able to carry out all the same calculations; the difference amounts merely to a change of variable inside the integrals.
\section{Some Basic Properties}
\begin{itemize}
\item Linearity: if $\psi(s) = af(s)+bg(s)$, then 
\[\widehat \psi(u) = a\widehat f(u)+b\widehat g(u)\ .\]
\item Translation/Time Shifting: if $f(s) = \psi(s-s_0)$, then 
\[\widehat f(u)= e^{ius_0}\widehat \psi(u)\ .\]
\item Convolution: for $f*g(x) = \int_{-\infty}^\infty f(y)g(x-y)dy$,
\begin{align*}
\widehat {f*g}(u) &= \int_{-\infty}^\infty e^{iux} \int_{-\infty}^\infty f(y)g(x-y)dydx\\
&=\int_{-\infty}^\infty e^{iuy}f(y)\int_{-\infty}^\infty e^{iu(x-y)} g(x-y)dxdy\\
& = \int_{-\infty}^\infty e^{iuy}f(y)\widehat g(u)dy\\
&=\widehat f(u)\widehat g(u)\ .
\end{align*}
\item Reverse Convolution: 
\begin{align*}
\widehat{\widehat f*\widehat g}(x) &= \frac{1}{2\pi}\int_{-\infty}^\infty e^{-iux} \int_{-\infty}^\infty \widehat f(v)\widehat g(u-v)dvdu\\
&= \frac{1}{2\pi}\int_{-\infty}^\infty e^{-ivx} \widehat f( v) \int_{-\infty}^\infty e^{-i(u- v)x}\widehat g(u- v)dudv\\
&=g(x)\int_{-\infty}^\infty e^{-i vx} \widehat f( v)dv\\
&= 2\pi g(x)f(x)\ .
\end{align*}
\end{itemize}
An important concept to realize about Fourier is that functions $(e^{iux})_{u\in\mathbb R}$ can be thought of as orthonormal basis elements in a linear space. Similar to finite-dimensional vector spaces, we can write a function as a some inner products with the basis elements. Indeed, that is what we accomplish with the inverse Fourier transform; we can think of the integral as a sum,
\[f(x)=\frac{1}{2\pi}\int_{-\infty}^\infty  e^{-iux}\widehat f(u)du\approx\frac{1}{2\pi} \sum_{n=1}^N e^{-iu_nx}\widehat f(u_n) \ ,\]
where $u_n$ are discrete points $\mathbb C$ (this is a similar idea to Fast Fourier Transforms (FFT)). The basis elements are orthogonal in the sense that
\[\frac{1}{2\pi}\int_{-\infty}^\infty e^{iux}e^{-ivx}dx = \delta_0(u-v)\ .\]
Finally, it needs to be pointed out that the function $\delta_0$ is something that defined under integrals, and is rather hard to formalize outside. For instance, in Parseval's identity, we used $\delta_0$, but only inside the integral:
\begin{proposition}
For a real-valued function $f(x)$ with $\int_{-\infty}^\infty |f(x)|^2dx<\infty$, 
\[\int_{-\infty}^\infty |f(x)|^2ds =\frac{1}{2\pi}\ \int_{-\infty}^\infty |\widehat f(u)|^2du\ .\]
\end{proposition}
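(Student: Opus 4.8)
The plan is to derive Parseval's identity from the reverse convolution property already established in this appendix, specialized appropriately, together with the convention-specific orthogonality relation $\frac{1}{2\pi}\int_{-\infty}^\infty e^{iux}e^{-ivx}\,dx = \delta_0(u-v)$. The cleanest route is to recognize that $\int_{-\infty}^\infty |f(x)|^2\,dx = \int_{-\infty}^\infty f(x)\overline{f(x)}\,dx$, and since $f$ is real-valued this is just $\int f(x)^2\,dx$, i.e. the value of the autocorrelation-type integral. I would substitute the inverse Fourier representation $f(x) = \frac{1}{2\pi}\int_{-\infty}^\infty e^{-iux}\widehat f(u)\,du$ for one of the two factors and leave the other as $f(x)$, then interchange the order of integration.

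Concretely, first I would write
\[
\int_{-\infty}^\infty f(x)^2\,dx = \int_{-\infty}^\infty f(x)\left(\frac{1}{2\pi}\int_{-\infty}^\infty e^{-iux}\widehat f(u)\,du\right)dx .
\]
Next I would swap the integrals (justified by the $L^2$, and for the formal argument the absolute-integrability, hypothesis) to get $\frac{1}{2\pi}\int_{-\infty}^\infty \widehat f(u)\left(\int_{-\infty}^\infty f(x)e^{-iux}\,dx\right)du$. The inner integral is $\int_{-\infty}^\infty f(x)e^{-iux}\,dx$; since $f$ is real-valued, this equals $\overline{\int_{-\infty}^\infty f(x)e^{iux}\,dx} = \overline{\widehat f(u)}$. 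Therefore the right-hand side becomes $\frac{1}{2\pi}\int_{-\infty}^\infty \widehat f(u)\overline{\widehat f(u)}\,du = \frac{1}{2\pi}\int_{-\infty}^\infty |\widehat f(u)|^2\,du$, which is exactly the claimed identity. Alternatively, one can phrase the same computation using the orthogonality relation directly: substitute the inverse transform for \emph{both} copies of $f(x)$, obtaining a double $u,v$ integral against $\frac{1}{2\pi}\int e^{-iux}e^{-ivx}\,dx$; but note that with the stated sign convention one must be careful — the orthogonality kernel as written is $\frac{1}{2\pi}\int e^{iux}e^{-ivx}\,dx = \delta_0(u-v)$, so the real-valuedness step (conjugating one factor) is what makes the signs line up, and I prefer the one-substitution version precisely because it makes this transparent.

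The main obstacle is purely one of rigor rather than of computation: justifying the interchange of the order of integration (Fubini) and making sense of the orthogonality relation, since $\delta_0$ is only meaningful under an integral sign, as the text itself cautions. For a paper at this level I would handle this by invoking $f\in L^2\cap L^1$ (or approximating a general $L^2$ function by Schwartz functions, for which every manipulation is classical and then passing to the limit using density of Schwartz functions in $L^2$ and continuity of the Fourier transform as an operator on $L^2$), and simply remark that the formal $\delta_0$ computation can be made precise this way. The algebraic core — swap integrals, use that $f$ real implies the inner integral is $\overline{\widehat f(u)}$ — is a two-line calculation and needs no further elaboration.
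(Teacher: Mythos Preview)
Your argument is correct, but it follows a slightly different route from the paper's. The paper starts on the Fourier side, writes $\tfrac{1}{2\pi}\int|\widehat f(u)|^2\,du=\tfrac{1}{2\pi}\int\widehat f(u)\overline{\widehat f(u)}\,du$, substitutes the forward transform for \emph{both} factors (using $f$ real to get $\overline{\widehat f(u)}=\int e^{-iuy}f(y)\,dy$), swaps the $u$-integral inside, and then invokes the orthogonality relation $\int e^{iux}e^{-iuy}\,du=2\pi\,\delta_0(y-x)$ to collapse the double $x,y$ integral to $\int f(x)^2\,dx$. Your primary approach works from the spatial side and substitutes the inverse transform for only \emph{one} copy of $f$, so that after Fubini the inner $x$-integral is already $\overline{\widehat f(u)}$ and no $\delta_0$ ever appears. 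This is arguably cleaner: it sidesteps exactly the distributional subtlety the text itself warns about. Your ``alternative'' two-substitution version is essentially the mirror image of the paper's proof (starting from $\int f^2$ rather than $\int|\widehat f|^2$), and your remark about the sign care needed there is apt.
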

\begin{proof}
\begin{align*}
\frac{1}{2\pi}\int_{-\infty}^\infty |\widehat f(u)|^2du&=\frac{1}{2\pi}\int_{-\infty}^\infty \widehat f(u)\overline{ \widehat f(u)}du\\
&=\frac{1}{2\pi}\int_{-\infty}^\infty \int_{-\infty}^\infty \int_{-\infty}^\infty e^{iux}f(x)e^{-iuy}f(y)dxdydu\\
&=\frac{1}{2\pi}\int_{-\infty}^\infty \int_{-\infty}^\infty f(x)f(y) \left(\int_{-\infty}^\infty e^{iux}e^{-iuy}du\right)dxdy\\
&=\int_{-\infty}^\infty \int_{-\infty}^\infty f(x)f(y) \delta_0(y-x)dxdy\\
&=\int_{-\infty}^\infty  f(x)f(x) dx \ .
\end{align*}
\end{proof}

\section{Regularity Strips}
For some function $f:\mathbb R\rightarrow \mathbb R$ we modify \eqref{eq:fourierIntegral_A} and \eqref{eq:inverseFourierIntegral_A} to accommodate non-integrability and/or non-differentiability. We write
\begin{align*}
\widehat f(u)&=\int_{-\infty}^\infty  e^{iux}f(x)dx\\
f(x)&=\frac{1}{2\pi}\int_{iz-\infty}^{iz+\infty}  e^{-iux}\widehat f(u)du\ .
\end{align*}
where $z=\Im(u)$. The region of $z$ values where the Fourier transform is defined is called \textit{the regularity strip.} This comes in handy for functions like the European call and put payoffs.
\begin{example}[Call Option]
For a call option, $f(x) = (e^s-K)^+$. This function has $\int_{\log K}^\infty |f(s)|^2ds =\infty$ and non-differentiability at $s=\log(K)$. However, we can still write a Fourier transform:
\begin{align*}
\widehat f(u)&= \int_{-\infty}^\infty e^{ius}(e^s-K)^+ds\\
&= \int_{\log K}^\infty e^{s+ius}ds-K\int_{\log K}^\infty e^{ius}ds\\
&=  \frac{e^{s+ius}}{1+iu}\Big|_{s=\log K}^\infty -\frac{Ke^{ius}}{iu}\Big|_{s=\log K}^\infty\ .
\end{align*}
where the anti-derivative of $e^{ius}$ and $e^{s+ius}$ are the same as they are for real variables because the function is analytic in the complex plain. For $z=\Im(u)>1$, the anti-derivative will be zero when evaluated at $s=\infty$, will be infinite for $z<1$, and undefined for $z=1$. Taking $z>1$, we have
\begin{align*}
\widehat f(u)
&= -\frac{e^{(1+iu)\log(K)}}{1+iu}+\frac{Ke^{iu\log(K)}}{iu}\\
&= -\frac{\left(e^{\log(K)}\right)^{(1+iu)}}{1+iu}+\frac{K\left(e^{\log(K)}\right)^{iu}}{iu}\\
&= \frac{K^{1+iu}}{iu-u^2}\ ,
\end{align*}
as shown in Table \ref{tab:payoffs} of Chapter \ref{chapt:stochVol}.
\end{example}

\begin{example}[PutOption]
For a put option, $f(x) = (K-e^s)^+$. This function has $\int_{\log K}^\infty |f(s)|^2ds <\infty$, like the put option is non-differentiability at $s=\log(K)$. Repeating the steps from the previous example, it follows that the regularity strip is $z=\Im(u)<0$ (see Table \ref{tab:payoffs} of Chapter \ref{chapt:stochVol}).
\end{example}
\section{The Wave Equation}
Consider the wave (transport) equation,
\begin{align*}
\frac{\partial}{\partial t}V(t,x)&=a\frac{\partial}{\partial x}V(t,x)\\
V(0,x)&=f(x)\ .
\end{align*}
The solution can be found with Fourier transforms:
\begin{align*}
\left(\frac{\partial}{\partial t}-a\frac{\partial}{\partial x}\right)V(t,x) &=\frac{1}{2\pi}\int_{-\infty}^\infty \left(\frac{\partial}{\partial t}-a\frac{\partial}{\partial x}\right)e^{-iux}\widehat V(t,u)\\
&=\frac{1}{2\pi}\int_{-\infty}^\infty e^{-iux} \left(\frac{\partial}{\partial t}+aiu\right)\widehat V(t,u)\ ,
\end{align*}
or simply
\begin{align*}
\frac{\partial}{\partial t}\widehat V(t,u)&=-aiu\widehat V(t,u)\ ,\\
\widehat V(0,u)&=\widehat f(0)\ .
\end{align*}
The solution is
\[\widehat V(t,u) = e^{-aiu t}\widehat f(u)\ ,\]
which we invert to obtain the solution to the wave equation,
\begin{align*}
V(t,x) &= \frac{1}{2\pi}\int_{-\infty}^\infty e^{-iux}e^{-aiu t}\widehat f(u)du\\
&=\frac{1}{2\pi}\int_{-\infty}^\infty e^{-iu(x+ at)}\widehat f(u)du\\
&=f(x+at)\ .
\end{align*}
The lines $x = c-at$ are the characteristic lines so that for any $c$ we have $v(t,c-at) = f(c)$; the initial information $f(c)$ at point $c$ is propagated along the characteristic line.

\section{The Heat Equation}
Consider the heat equation

\begin{align*}
\frac{\partial}{\partial t}V(t,x)&=\frac12\Delta V(t,x)\\
V(0,x)&=f(x)\ ,
\end{align*}
where and $\int_{-\infty}^\infty |f(x)|^2dx<\infty$. For scalar $x$ we have $\Delta =\frac{\partial^2}{\partial x^2}$, and the inverse Fourier transform yields an ODE under the integral sign,
\begin{align*}
\left(\frac{\partial}{\partial t}-\frac12\frac{\partial^2}{\partial x^2}\right)V(t,x)&= \frac{1}{2\pi}\int_{-\infty}^\infty  \left(\frac{\partial}{\partial t}-\frac12\frac{\partial^2}{\partial x^2}\right)e^{-iux}\widehat V(t,u)du\\
&= \frac{1}{2\pi}\int_{-\infty}^\infty  e^{-iux}\left(\frac{\partial}{\partial t}+\frac{u^2}{2}\right)\widehat V(t,u)du \\
&=0\ ,
\end{align*}
or simply,
\begin{align}
\nonumber
\frac{d}{dt}\widehat V(t,u)&=-\frac{u^2}{2}\widehat V(t,u)\\
\label{eq:Vhat_ODE}
\widehat V(0,u)&=\widehat f(u) \ .
\end{align}
The solution to \eqref{eq:Vhat_ODE} is 
\[\widehat V(t,u) = \widehat f(u) e^{-\frac{u^2}{2}t}\ .\]
Applying the inverse Fourier transform yields the solution,
\begin{align*}
V(t,x)& = \frac{1}{2\pi}\int_{-\infty}^\infty e^{-iux}\widehat V(t,u)du\\
& = \frac{1}{2\pi}\int_{-\infty}^\infty e^{-iux}\widehat f(u) e^{-\frac{u^2}{2}t}du\\
& = \frac{1}{2\pi}\int_{-\infty}^\infty e^{-iux}\left(\int_{-\infty}^\infty e^{iuy}f(y) dy\right)e^{-\frac{u^2}{2}t}du\\
& = \frac{1}{\sqrt{2\pi t}}\int_{-\infty}^\infty f(y) \left(\sqrt{\frac{t}{2\pi}}\int_{-\infty}^\infty e^{iu(y-x)} e^{-\frac{u^2}{2}t}du\right)dy\\
& = \frac{1}{\sqrt{2\pi t}}\int_{-\infty}^\infty f(y)\left( \mathbb E e^{i\mathcal Z(y-x)/\sqrt t}\right) dy\qquad\hbox{where $\mathcal Z\sim N(0,1)$}\\
& = \frac{1}{\sqrt{2\pi t}}\int_{-\infty}^\infty f(y) e^{-\frac{(y-x)^2}{2t}}dy\ .
\end{align*}
In fact, it can be seen as the expectation of a Brownian motion,
\[V(t,x) = \mathbb Ef(x+W_t)\]
where $W_t$ is a Brownian motion. If $f(x) = \delta_0(x)$, then the we have the \textit{fundamental solution},
\[\Phi(t,x) = \frac{1}{\sqrt{2\pi t}}e^{-\frac{x^2}{2t}}\ ,\]
from which solutions for different initial conditions are convolutions,
\[V(t,x) = f*\Phi(t,x) = \int_{-\infty}^\infty f(y)\Phi(t,x-y)\ .\]

\section{The Black-Scholes Equation}
The Black-Scholes model is
\[\frac{dS_t}{S_t}=rdt+\sigma dW_t^Q\]
where $W^Q$ is a risk-neutral Brownian motion. Letting $X_t=\log(S_t)$ the Black-Scholes equation is
\begin{align*}
\left(\frac{\partial}{\partial t}+\frac{\sigma^2}{2}\frac{\partial^2}{\partial x^2}+\left(r-\frac{\sigma^2}{2}\right)\frac{\partial}{\partial x}-r\right)V(t,x)&=0\\
V(T,x)&=\psi(x) \ ,
\end{align*}
where $\psi(X_T)$ is the claim, e.g. a call option $\psi(x) = (e^x-K)^+$. Using the same Fourier techniques as we did with the wave and heat equations, we have
\begin{align*}
\left(\frac{\partial}{\partial t}-\frac{\sigma^2u^2}{2}-iu\left(r-\frac{\sigma^2}{2}\right)-r\right)\widehat V(t,u)&=0\\
\widehat V(T,u)&=\widehat \psi(u) \ ,
\end{align*}
which has solution 
\[\widehat V(t,u)=\widehat\psi(u)e^{-(T-t)\left(\frac{\sigma^2u^2}{2}+iu\left(r-\frac{\sigma^2}{2}\right)+r\right)}\ .\]
Applying the inverse Fourier transform, we arrive at the risk-neutral pricing formula (i.e. Feynman-Kac),
\begin{align*}
V(t,x)&=\frac{1}{2\pi}\int_{-\infty}^\infty e^{-iux}\widehat V(t,u)du\\
&=\frac{e^{-r(T-t)}}{2\pi}\int_{-\infty}^\infty e^{-iux}\widehat\psi(u)e^{-(T-t)\left(\frac{\sigma^2u^2}{2}+iu\left(r-\frac{\sigma^2}{2}\right)\right)}du\\
&=\frac{e^{-r(T-t)}}{2\pi}\int_{-\infty}^\infty e^{-iux}\left(\int_{-\infty}^\infty e^{iuy}\psi(y)dy\right)e^{-(T-t)\left(\frac{\sigma^2u^2}{2}+iu\left(r-\frac{\sigma^2}{2}\right)\right)}du\\
&=\frac{e^{-r(T-t)}}{2\pi}\int_{-\infty}^\infty\int_{-\infty}^\infty \psi(y) e^{iu\left(y-x-(T-t)\left(r-\frac{\sigma^2}{2}\right)\right)}e^{-(T-t)\frac{\sigma^2u^2}{2}} dudy\ .
\end{align*}
Take change of variable $y'=y-x-(T-t)\left(r-\frac{\sigma^2}{2}\right)$ such that $dy'=dy$, then 
\begin{align*}
V(t,x)&=\frac{e^{-r(T-t)}}{2\pi}\int_{-\infty}^\infty\int_{-\infty}^\infty \psi\left(y'+x+(T-t)\left(r-\frac{\sigma^2}{2}\right)\right) e^{iuy'}e^{-(T-t)\frac{\sigma^2u^2}{2}} dudy'\\
&=\frac{e^{-r(T-t)}}{2\pi}\int_{-\infty}^\infty \psi\left(y'+x+(T-t)\left(r-\frac{\sigma^2}{2}\right)\right)\left( \int_{-\infty}^\infty e^{iuy'}e^{-(T-t)\frac{\sigma^2u^2}{2}} du\right)dy'\\
&=\frac{e^{-r(T-t)}}{\sqrt{2\pi\sigma^2(T-t)}}\int_{-\infty}^\infty \psi\left(y'+x+(T-t)\left(r-\frac{\sigma^2}{2}\right)\right)\left(\mathbb E e^{i\frac{\mathcal Z}{\sigma\sqrt{T-t}}y'}\right)dy'\\
&=\frac{e^{-r(T-t)}}{\sqrt{2\pi\sigma^2(T-t)}}\int_{-\infty}^\infty \psi\left(y'+x+(T-t)\left(r-\frac{\sigma^2}{2}\right)\right)e^{-\frac12\frac{\left(y'\right)^2}{\sigma^2(T-t)}}dy'\\
&=e^{-r(T-t)}\mathbb E^Q\psi\left(x+(T-t)\left(r-\frac{\sigma^2}{2}\right)+\sigma (W_T^Q-W_t^Q)\right)\\
&=e^{-r(T-t)}\mathbb E^Q\left[\psi\left(X_T\right)\Big|X_t=x\right]\ .
\end{align*}

\small{\bibliography{refs}}
\end{document}